\newcommand{\R}{\mathbb{R}}
\newcommand{\mC}{\mathcal{C}}
\newcommand{\mP}{\mathcal{P}}
\DeclareMathOperator*{\sign}{sign}
\DeclareMathOperator*{\im}{im}
\DeclareMathOperator{\bd}{bd}
\DeclareMathOperator{\cl}{cl}
\DeclareMathOperator{\id}{id}
\newcommand{\N}{{\mathbb N}}
\renewcommand{\k}{{\kappa}}
\newcommand{\st}{\mid}
\newtheorem{lemma}{Lemma}
\newtheorem{proposition}{Proposition}
\newtheorem{theorem}{Theorem}
\newtheorem{corollary}{Corollary}
\theoremstyle{definition}
\newtheorem{remark}{Remark}
\newtheorem{definition}{Definition}
\def\SI{{\bf Supporting Information}\@\xspace}
\begin{document}

\title{Identifying parameter regions for multistationarity}

\author{
Carsten Conradi$^1$, Elisenda Feliu$^{2}$, Maya Mincheva$^3$, Carsten Wiuf$^2$
}

\footnotetext[1]{Life Science Engineering, HTW Berlin, Wilhelminenhoftstr.\ 75, 10459 Berlin}
\footnotetext[2]{Department of Mathematical Sciences, University of Copenhagen, Universitetsparken 5, 2100 Copenhagen, Denmark}
\footnotetext[3]{Department of Mathematical Sciences, Northern Illinois University, 1425 W. Lincoln Hwy.,  DeKalb IL 60115, USA}
\footnotetext[4]{All authors contributed equally to this work} 

 \date{\today}

\maketitle

\begin{abstract}
Mathematical modelling has become an established tool for studying the dynamics of  biological systems. Current applications range from building models that reproduce quantitative data to  identifying  systems with predefined qualitative features, such as switching behaviour, bistability or oscillations.  Mathematically, the latter question amounts to identifying parameter values associated with a given qualitative feature.

We introduce {a procedure} to partition the parameter space of a parameterized system of ordinary differential equations  into regions for which  the system has a unique or multiple  equilibria. The  {procedure} is based {on the computation of the Brouwer degree}, and it creates a multivariate polynomial with parameter depending coefficients. The signs of the coefficients {determine} parameter regions with and without multistationarity. A particular strength of the  {procedure} is the avoidance of numerical analysis and parameter sampling. 

The procedure consists of a number of steps. Each of these steps might  be addressed algorithmically using various {computer programs and available software}, {or manually}. 
We demonstrate our  procedure on {several} models of  gene transcription and cell signalling, and show that in many cases we  obtain a complete partitioning of the parameter space with respect to multistationarity.

\medskip
\textbf{Keywords: }
biological dynamics; reaction networks; algebraic parameterization;
  qualitative analysis; Newton polytope; dissipative  system
\end{abstract}

\section*{Introduction}

Mathematical models in the form of parameterized
systems of ordinary differential equations (ODEs) are valuable tools
in biology.   Often, qualitative properties of the ODEs are
associated with macroscopic biological properties and biological functions \cite{gould,laurent1999,ozbudak2004,Xiong:2003jt}.
{It is therefore important that we are able to analyse mathematical models with respect to their qualitative features and to understand when these properties arise in models.
With the growing adaptation of differential equations in biology,    an automated screening of ODE models} for parameter dependent properties and discrimination of parameter regions with different properties would be a very useful tool
for biology, and perhaps even more for synthetic biology \cite{Marchisio:2009eg}. 
Even though it is currently not conceivable how and if this task can
be efficiently formalized, we view the {procedure} presented here as a first step in  this direction.

\emph{Multistationarity},  that is,  the 
{capacity} of the
system  to rest in different positive equilibria depending on the
initial state of the system, is an important   qualitative  property. Biologically, multistationarity is linked to cellular
decision making    and `memory'-related on/off responses to graded
input \cite{laurent1999,ozbudak2004,Xiong:2003jt}.   
Consequently the existence of multiple equilibria is often a design objective in synthetic biology \cite{Gardner:2000bm,palani}.
Various  mathematical methods, developed in the context of reaction
network theory, can be applied to decide whether
multistationarity exists for some parameter values  or not at all, or
to pinpoint specific values for which it does occur
\cite{Feinbergss,feliu_newinj,wiuf-feliu,PerezMillan,fein-043,conradi-PNAS,mincheva2007,otero1,otero2,craciun2008}.
Some of these methods   are  freely available as software tools  \cite{control,crnttoolbox}.

It is a hard mathematical problem to delimit parameter regions for which multistationarity occurs. Often it is solved by numerical investigations and parameter sampling, guided by  biological intuition or by case-by-case mathematical approaches. {A general approach, in part numerical, is based on a certain bifurcation condition \cite{otero1,otero2,OteroMuras:2012fn,Otero2017}.}  Alternatively, for polynomial   ODEs, a  decomposition of the parameter space into regions with different numbers of equilibria  could be achieved by Cylindrical Algebraic Decomposition
(a version of  quantifier elimination) \cite{rob-024}. This  method,  however, scales very poorly and is thus only of limited help in biology, where models tend to be large in terms of the number of variables and parameters.

 Here we present two new theoretical results   pertaining to multistationarity (Theorem~1 and Corollary~2). The results are in the context of reaction network theory and generalize ideas in \cite{maya-bistab,conradi2014graph}.   We consider a parameterized ODE system  
{  defined by}
a  reaction network and compute a single polynomial in the {species concentrations}  with coefficients depending on the parameters of the system.  The theoretical results relate the capacity for multiple equilibria or a single equilibrium  to  the signs of the polynomial as a function of the parameters and the variables (concentrations).

The theoretical results apply to \emph{dissipative} reaction networks (networks for which all trajectories eventually remain in  a compact set) without \emph{boundary equilibria}  {in stoichiometric compatibility classes with non-empty interior}. These conditions  are met in many reaction network models of molecular systems.
{We show by example that the results allow us to identify regions of the parameter space for which multiple equilibria exist and regions for which only one equilibrium  exists.  }Subsequently this leads to the formulation of a general \emph{procedure} for detecting regions of mono- and multistationarity.
The procedure {verifies}  the conditions of the theoretical results  and further, calculates the before-mentioned polynomial. A key ingredient is the existence of a \emph{positive parameterization} of the set of positive equilibria. Such a parameterization is  known to exist for many classes of reaction networks, for example, systems with toric steady states \cite{PerezMillan} and post-translational modification systems \cite{TG-rational,fwptm}.

The conditions of the procedure might be verified manually or algorithmically according to computational criteria.  The algorithmic criteria are, however, only sufficient for the conditions to hold. For example, a basic condition  is that of dissipativity. To our knowledge there is not a sufficient and necessary computational criterion for dissipativity, but several sufficient ones. If these fail, then the reaction network might still be dissipative, which might be verified by other means.
By collecting the algorithmic criteria, the  procedure can be formulated as a fully automated procedure (an algorithm) that  partitions the parameter space without any manual intervention. The algorithm might however terminate indecisively if some of the criteria are not met.

Table~\ref{tab:cherry} shows two examples of reaction network motifs  that occur frequently in intracellular signalling: a two-site protein
modification  by a kinase--phosphatase pair and  a one-site
modification of two proteins by the same kinase--phosphatase pair. {These reaction networks are in the domain of the automated procedure  and conditions for mono- and multistationarity can be found without any manual intervention. } The
conditions discriminating  between a unique and multiple  equilibria
highlight a  delicate relationship between the catalytic and
Michaelis-Menten constants  of the kinase and the phosphatase with the
modified protein as a substrate (the $k_c$- and
$k_M$-values).  If the condition for multiple equilibria is met,
then multiple equilibria occur provided the total concentrations of
kinase, phosphatase and substrate are in suitable ranges (values
thereof can be computed as part of the procedure).

\begin{table}[!t]

\medskip
\centering
{\footnotesize \begin{tabular}{|c|c|}
\hline
 Motif & Condition \\  \hline
   \begin{minipage}[h]{0.58\textwidth}        
 {\small  $ A + K \cee{<=>} AK \cee{->} A_p + K$ \quad  $ B + K \cee{<=>} BK \cee{->} B_p + K$}   \\[5pt]
    {\small    $ A_p + F \cee{<=>} A_pF \cee{->} A + F$  \quad    $ B_p + F \cee{<=>} B_pF \cee{->} B + F$}
     \end{minipage}      & 
          \begin{minipage}[h]{0.5\textwidth}         
          
          \vspace{0.1cm}
         $ \begin{aligned}
b(\kappa) &= \bigl(k_{c1} k_{c4} - k_{c2} k_{c3}\bigr) \cdot  \left(
          \frac{k_{c1} k_{c4}}{k_{M1} k_{M4}} - 
          \frac{k_{c2} k_{c3}}{k_{M2} k_{M3}}
        \right)
      \end{aligned} $
 \\[8pt]
 Multiple:  $b(\kappa)< 0$, \quad 
      Unique:        $b(\kappa)\geq0$ \\[-3pt]
          \end{minipage} \\[0.3cm]  \hline
             \begin{minipage}[h]{0.58\textwidth}    
             \begin{center}
      {\small  $ A + K \cee{<=>} AK \cee{->} A_p + K \cee{<=>} A_pK \cee{->} A_{pp}+K$} \\[5pt]
        {\small    $ A_{pp} + F \cee{<=>} A_{pp}F \cee{->} A_p + F  \cee{<=>} A_pF \cee{->} A + F $ }
        \end{center}
                 \end{minipage}   &   
      \begin{minipage}[h]{0.5\textwidth}
      
                \vspace{0.1cm}
  $  \begin{aligned}
          b_1(\kappa) &= k_{c1} k_{c4} - k_{c2} k_{c3} \\
     b_2(\kappa) &= k_{c1} k_{c4}(k_{M2} + k_{M3})- k_{c2} k_{c3}(k_{M1} + k_{M4})
      \end{aligned}$
     \\[8pt]
      Multiple:     $b_1(\kappa)< 0$, \quad 
        Unique: 
      $b_1(\kappa) \geq  0$ and $b_2(\kappa) \geq 0$ \\[-3pt]
    \end{minipage}
  \\[0.1cm]  \hline
\end{tabular}}

\medskip
\caption{
{\bf Conditions for unique and multiple equilibria   in  post-translational modification of proteins}. The symbols $k_{ci}$ and $k_{Mi}$ denote {respectively} the catalytic and the Michaelis-Menten constants of the $i$-th modification step       ($i=1$: phosphorylation of $A$, $i=2$:  dephosphorylation of $A_p$,  $i=3$: phosphorylation of $B$ or 
      $A_p$, $i=4$: dephosphorylation of $B_p$ or $A_{pp}$).  All parameter values satisfying the conditions in
     {the second} column  yield multiple (unique) equilibria for some (all) values of the conserved quantities.  For the second motif, we cannot decide on the number of equilibria for $b_1(\kappa)\geq 0$ and $b_2(\kappa)<0$.  See {\S6.1  and \S6.2} in the \SI for details. 
}    \label{tab:cherry}

\end{table}

  The paper has 
    three main sections: a theoretical section, a section about the procedure
    and an application section. We  close the paper with
    two brief sections discussing computational limitations, related
    work and future directions.    In the theoretical    section we first introduce notation and
  mathematical background material. We then give the theorem and the corollary that
  links the number of equilibria to the sign of the determinant of the
  Jacobian of a certain function, which is derived from the ODE system
  associated with a reaction network.
   In the    second section   we  state the procedure,
    derive the algorithm  and comment on the feasibility     and verifiability  of the conditions. 
  Finally, in the application section we apply the procedure to several  examples.
  The \SI has six sections. All proofs are relegated to \S1--4 together with background material. In \S5 we {elaborate further on} how the conditions  
  of the  procedure/algorithm can be verified. In \S6 we provide details of the algorithmic analysis of the examples in Table~\ref{tab:cherry}.  Also we include a further {monostationary} example for illustration of the algorithm.

\section*{Results}

\subsection*{{Theory}}

In this part of the manuscript we present the theoretical
  results. We start by introducing the basic formalism of reaction
  networks.
  Theorem~1, Corollary~1 and 2 below apply to \emph{dissipative networks}
  without \emph{boundary equilibria} and concern the 
  (non)existence
  of \emph{multiple equilibria} in some \emph{stoichiometric
    compatibility class}.
Corollary 2 assumes the existence of a \emph{positive
  parameterization} of the set of positive equilibria.  
  Before stating the results these five concepts are
  formally defined.

\paragraph{Reaction networks. }

A \emph{reaction network}, or simply a \emph{network}, 
consists of a set of species $\{X_1,\dots,X_n\}$ and a set of reactions of the form: 
\begin{equation}\label{eq:reaction}
R_j\colon \sum_{i=1}^n \alpha_{ij} X_i \rightarrow \sum_{i=1}^n\beta_{ij} X_i, \qquad j=1,\dots,\ell
\end{equation} 
where $\alpha_{ij},\beta_{ij}$ are   non-negative integers.  {The left hand side is called the reactant, while the right hand side is called the product.} 
We let $N=(N_{ij})\in \R^{n\times \ell}$ be the \emph{stoichiometric
  matrix} of the network, defined as 
$N_{ij} = \beta_{ij}-\alpha_{ij},$
that is, the $(i,j)$-th entry encodes the net production
of species $X_i$ in reaction $R_j$. We refer to  the \lq running
example\rq  \,\,in  Fig~\ref{fig:setup_running} for an illustration of the
definitions.

\begin{figure}[h]
  \centering
 \includegraphics[scale=1]{./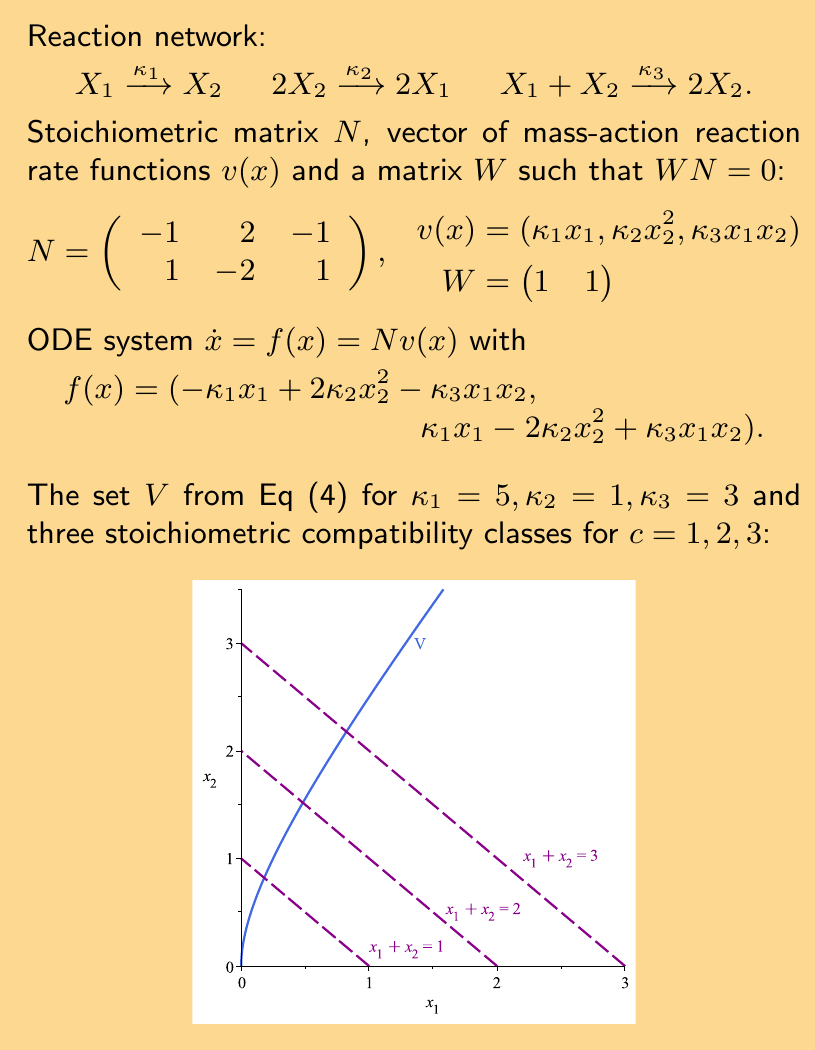}
  \caption{  \label{fig:setup_running}
    {\bf Running example. } 
    Example network with two species, $X_1$ and $X_2$, and three reactions with mass-action kinetics.
  }
\end{figure}
The concentrations of the species $X_1,\dots,X_n$  are denoted by lower-case letters $x_1,\dots,x_n$ {and  we let $x=(x_1,\ldots,x_n)$}. 
We denote by $\R^n_{>0}$ ($\R^n_{\geq 0}$), the positive (non-negative) orthant in $\R^n$. 
The evolution of the concentrations with respect to time is modeled as an ODE system {derived from a set of \emph{reaction rate functions}. A reaction rate function for  reaction $R_j$ is  a $\mC^1$-function $v_j\colon
\R_{\geq 0}^n \rightarrow \R_{\geq 0}$ that models the (non-negative) speed of the reaction. We further assume  that 
\begin{equation}
  \label{eq:invariance}
  v_j(x)=0 \quad \Leftrightarrow \quad x_{i}=0 \quad \textrm{ for some
  }i \textrm{ such that }\alpha_{ij}>0, 
\end{equation}
that is, the {reaction only takes place in the presence of all reactant species.}
We refer to the set of reaction rate functions as the \emph{kinetics}.}

{A particular important example of a kinetics is that of  \emph{mass-action kinetics}. In this case the reaction rate functions are given by
$$ v_j(x) = \kappa_j x_1^{\alpha_{1j}}\cdot \ldots \cdot x_n^{\alpha_{nj}},\qquad j=1,\dots,\ell,$$
where $\kappa_j$ is a positive number called the \emph{reaction rate constant} { and we assume $0^0=1$}.  
Other important examples are Michaelis-Menten kinetics and Hill kinetics. All three types of kinetics fulfil the assumption in Eq~\eqref{eq:invariance}. 
}

{For a choice of reaction rate functions $v=(v_1,\ldots,v_\ell)$, the ODE system modelling the species concentrations over time  with initial condition $x(0) =x_0$, is}
\begin{equation}\label{eq:ode}
  \dot{x} = f(x),\qquad x\in \R^n_{\geq 0},\qquad\textrm{where }\quad f(x)= Nv(x).
\end{equation} 
{Under assumption Eq~\eqref{eq:invariance}}, 
the orthants $\R^n_{>0}$ and $\R^n_{\geq 0}$ are {\em forward-invariant} under $f$ in Eq~\eqref{eq:ode} \cite[Theorem 5.6]{smirnov-2002},  \cite[Section 16]{Amann}.  
{Forward-invariance implies that the solutions to the ODE system stays in $\R^n_{>0}$ (resp. $\R^n_{\geq 0}$) for all positive times if the initial condition is in $\R^n_{>0}$ (resp. $\R^n_{\geq 0}$).}

\medskip
 
The trajectories of the ODEs in Eq~\eqref{eq:ode} are  confined to the
so-called \emph{stoichiometric compatibility classes}, which are defined as follows. 
Let $s=\text{rank}(N)$ { be  the rank of the network} and $d=n-s$ {be the corank. }
{Further, let $W\in \R^{d\times n}$ be any matrix  of full rank $d$ such that $WN=0$,}
see Fig~\ref{fig:setup_running} for an example.  This matrix is zero-dimensional if $N$ has full rank $n$.  
{For each} $c\in   \R^d$,  {there is an associated}  \emph{stoichiometric compatibility class} {defined as  } 
$$ \mP_c := \{ x\in \R^n_{\geq 0} \st Wx=c\}.$$
{This set} is  empty if $c\notin W(\R^n_{\geq 0})$.
The \emph{positive stoichiometric compatibility class} is defined as the relative interior of $\mP_c$, {that is, the intersection of $\mP_c$ with the positive orthant}:
$$ \mP_c^+ := \{ x\in \R^n_{> 0} \st Wx=c\} = \mP_c \cap \R^n_{>0}.$$
The sets $\mP_c^+$ and $\mP_c$  are convex. {Since by construction} $Wx$ is conserved over time and determined by the initial condition, then {$\mP_c^+$ and $\mP_c$ are also} forward-invariant. 

An equation of the form $\omega\cdot x = c'$ for some $\omega\in
\im(N)^\perp$ and $c'\in\R$ is called a \emph{conservation relation}. In
particular, $Wx=c$ forms a system of $d$ conservation relations. 

{For the running example in Fig~\ref{fig:setup_running}, the rank of the network is $s=1$ and the corank is $d=1$. }The matrix
$W$  in the figure leads to the conservation relation
$ x_1+x_2 = c$. 
{Here  the stoichiometric compatibility class $\mP_{c}$ has non-empty interior, that is, }$\mP_{c}^+\neq \emptyset$, if and only if
$c>0$.

In the following,  {to ease the notation, we implicitly assume a reaction network comes with a kinetics (a set of reaction rate functions) and the associated ODE system.} 
 
\paragraph{Dissipative and conservative reaction networks. }
A  reaction network is \emph{dissipative} if, {for all  stoichiometric compatibility classes $ \mP_c$, there exists a compact set where the trajectories of $ \mP_c$  eventually enter}
 (see \S3.2 in the \SI). A  reaction network is \emph{conservative} if 
there exists a conservation relation with only positive coefficients, or, equivalently, {if for all species $X_i$ there is a conservation relation such that the coefficient of $x_i$ is positive and all  other coefficients are  non-negative}.
 This is equivalent to the stoichiometric compatibility classes being compact sets  \cite{benisrael}. 
Hence, in particular, a conservative reaction network is dissipative {because we can choose the attracting compact set to be the stoichiometric compatibility class itself}.
{Because of  the conservation relation $x_1+x_2=c$, } the reaction network of the  running example is conservative.

\paragraph{Equilibria. }
Given the ODE in Eq~\eqref{eq:ode}, the set of non-negative equilibria is {the set of points for which $f(x)$ vanishes: }
\begin{equation}\label{eq:V}
  V  = \{ x \in \R^n_{\geq 0} \st  f(x) = 0\}.
\end{equation}
We are interested in the positive equilibria in each  stoichiometric compatibility class, that is, {in} the set $V  \cap {\mP_{c}^+}$. Generically, this set consists of isolated points obtained as the {simultaneous} positive solutions to the equations 
\begin{equation}\label{eq:ss}
  f(x)=0,\qquad Wx = c.
\end{equation}
{Fig~\ref{fig:setup_running} shows a representation of the set $V$ together with examples of stoichiometric compatibility classes for the running example. }
 {The figure suggests that the set $V$ intersects each stoichiometric compatibility class in exactly one point.}

We {introduce}  some definitions: a network admits \emph{multiple equilibria} (or is \emph{multistationary}) if there exists $c\in \R^d$ such that $V \cap \mP_{c}^+$ contains at least two points, {that is, the system in Eq~\eqref{eq:ss}
has at least two positive solutions.  }
Equilibria belonging to $V \cap \mP_{c}$ but not to $V  \cap\mP_{c}^+$ for some $c$ are \emph{boundary equilibria}. 
A boundary equilibrium has at least one coordinate equal to zero.

\paragraph{The function $\varphi_c(x)$. }
{Some of the $n$ equations in the system $f(x)=0$ might be redundant. Indeed, every vector $\omega\in \im(N)^{\perp}$ fulfils $\omega \cdot f(x)=0$, and hence gives a linear relation among the entries of $f(x)$. As a consequence, there are (at least) as many independent linear relations as rows of $W$, that is, $d$,  and there are  at most $s=n-d$ linearly independent  equations in the system $f(x)=0$.
Thus $d$ of the equations   are redundant. } 
 {By removing these from $f(x)=0$, the system in Eq~\eqref{eq:ss} becomes a system of $n$ equations in $n$ variables.}

 {In order to systematically choose $d$ equations to remove, we proceed as follows. }
We choose the matrix of conservation relations $W\in\R^{d\times n}$ to be row reduced 
and let
$i_1,\dots,i_d$ be the  indices of the first non-zero coordinate of each row. {Then the scalar product of the $j$-th row of $W$  with $f(x)$ can be used to express $f_{i_j}(x)$ as a linear combination of the entries of $f(x)$ with indices different from $i_1,\dots,i_d$. It follows that the equations $f_{i_1}(x)=0,\dots,f_{i_d}(x)=0$ can be removed.}

For $c\in \R^d$,  we define the $\mC^1$-function  $\varphi_c(x) \colon \R^n_{\geq 0} \rightarrow \R^n$ by
\begin{equation}
  \varphi_c(x)_i =  \begin{cases}  f_i(x) & i\notin \{i_1,\dots,i_d\} \\
    (Wx-c)_i  & i\in \{i_1,\dots,i_d\}.
  \end{cases} \label{eq:varphi}
  \end{equation}
  {
    For the running example  in Fig~\ref{fig:setup_running} the
    matrix $W$ is already row reduced with $i_1=1$. Hence $\varphi_c$
    is obtained by replacing $f_1(x)$ with $x_1+x_2-c$:
    \begin{displaymath}
        \varphi_c(x) =
        \begin{pmatrix}
          x_1+x_2- c \\
          \kappa_1 x_1-  2\kappa_{2} x_2^2+  \kappa_{3} x_1x_2
        \end{pmatrix}
        .
    \end{displaymath}
  }
  
{ As the function $\varphi_c(x)$ is obtained by replacing redundant
  equations in $f(x)=0$ with equations defining $\mP_c$, we have }
\begin{displaymath}
  V  \cap \mP_{c}= \{ x\in \R^n_{\geq 0} \st \varphi_c(x)=0\}.
\end{displaymath}
{Consequently, }  a network admits \emph{multiple equilibria}   if the equation $\varphi_c(x)=0$ has at least two positive solutions for some  $c\in \R^d$.

\paragraph{{A theorem for }unique and multiple equilibria. }

  Let  $M(x)\in \R^{n\times n}$ be the Jacobian matrix of $\varphi_c(x)$, {that is, the matrix with  $(i,j)$-th entry equal to the partial derivative of $\varphi_{c,i}(x)$ with respect to $x_j$.}
 The matrix $M(x)$  does not depend on $c$, see Eq~\eqref{eq:varphi}.   
 
 {We say that an} equilibrium $x^*\in V \cap \mP_{c}$  is \emph{non-degenerate} if
  the Jacobian of $\varphi_c$ at $x^*$, $M(x^*)$, is non-singular, that
  is, if  $\det(M(x^*))\neq 0$ \cite{wiuf-feliu}.

\medskip
\noindent
{\bf Theorem 1 (Unique and multiple equilibria). } 
Assume  the reaction rate functions fulfil Eq~\eqref{eq:invariance}, let $s=\text{rank}(N)$ and let $\mP_c$ be a stoichiometric compatibility class such that  $\mP_c^+\neq \emptyset$, where $c\in \R^d$. Further, assume that  

\smallskip
\noindent (i) The  network is dissipative.
 
\noindent (ii) There are no boundary equilibria in  $\mP_c$.  

\smallskip
\noindent Then the following holds.
    
\smallskip
\noindent    (A') {\bf Uniqueness of equilibria.} If
    \begin{displaymath}
      \sign(\det(M(x)))=(-1)^s\quad \textrm{ for  \textbf{all}  positive equilibria} \quad  
                 x  \in V\cap \mP_c^+,     
          \end{displaymath}
  then there is exactly one  positive equilibrium in $\mP_c$.
 Further, this equilibrium is
    non-degenerate.
    
    \smallskip
    \noindent  (B') {\bf Multiple equilibria.} If
    \begin{displaymath}
        \sign(\det(M(x)))=(-1)^{s+1}\quad \text{ for \textbf{some} equilibrium }  \quad  
                 x \in V\cap \mP_c^+,     
    \end{displaymath}
    then there are at least two positive equilibria in  $\mP_{c}$, at
    least one of which is non-degenerate.
    If all positive equilibria in $\mP_{c}$ are non-degenerate, then
    there are at least three and always an odd number.
 
\medskip
{The proof of }Theorem 1 is based on relating $\det(M(x))$ to the  Brouwer degree of  $\varphi_c$ at $0$ (see \S1-\S4 in the \SI).
Note that the only situation that is not covered by Theorem 1 is when $\sign(\det(M(x)))$ takes the value $0$ for some $x$, but never the value $(-1)^{s+1}$.  The determinant of $M(x)$ is the same as the \emph{core determinant} in \cite[Lemma 3.7]{helton:determinant}. See also  \cite[Remark 9.27]{wiuf-feliu}.

{
To check whether the  sign conditions in part (A') or (B') hold requires  information 
  about the equilibria in $\mP_c^+$. As such, these conditions are difficult to check.
 If $\sign(\det(M(x)))$ is constant for all $x$ in a set containing the  positive equilibria, then the condition in (A') is 
 always fulfilled. In particular, this is  
  the case
}
for \emph{injective networks}, where $\sign(\det(M(x)))=(-1)^s$ for all $x\in \R^n_{>0}$ \cite{wiuf-feliu} (see also \cite{feliu2012,craciun2005multiple,fein-044,inj-006,banaji2010,pantea} for related work on injective networks).  The latter might  be verified or falsified without any knowledge about the equilibria of the system ({see the comments to Step 5 and Step 7 in the section ``Procedure for finding parameter regions for mono- and multistationarity''}).

\medskip
\noindent
{{\bf Corollary 1 (Unique  equilibria). } 
Assume that the assumptions of Theorem~1 hold and that $\sign(\det(M(x)))=(-1)^s$ for all $x\in \R^n_{>0}$. Then there is exactly one  positive equilibrium in each stoichiometric compatibility class.  Further, this equilibrium is
    non-degenerate.}
\medskip

{The conclusions of Theorem 1  refer specifically to
  non-degenerate equilibria. Non-degenerate equilibria are always
  isolated from each other within a given stoichiometric compatibility
  class, as $\det(M(x))\not=0$ ensures $M(x)$     is  locally invertible.
In some situations we might be able to ``lift" non-degenerate equilibria of a reaction network to another reaction network that in some sense is larger, thereby proving lower bounds on the number of non-degenerate equilibria of the larger reaction network.
This is for example the case if the smaller network is embedded in the larger  \cite{joshi-shiu-II,joshi-shiu-III}, if the smaller network is without inflows/outflows while the larger has all inflows/outflows  \cite{craciun-feinberg}, or if the smaller is obtained by elimination of intermediate species \cite{feliu:intermediates}.  
 Conditions for the existence of degenerate equilibria, where  $\det(M(x))$ is expected to change sign, are also known \cite{conradi-switch,OteroMuras:2012fn}. }

\paragraph{Positive parameterizations and a corollary. }
{  Verifying condition (A') or (B') is considerably easier if there exists 
  a positive parameterization of the set $V\cap \R^n_{>0}$ of all
  positive  equilibria.  In this subsection we define such a parameterization and restate Theorem~1 as
Corollary~2  in this situation. In the following sections this
  corollary will become the foundation for the procedure to partition the
  parameter space into regions with different equilibrium properties.
}

{By a \emph{positive parameterization} of the set of positive equilibria we mean
a surjective function    }
 \begin{equation} \label{eq:param}
 \begin{array}{rccc}
  \Phi\colon &  \R^{m}_{> 0} & \rightarrow & {V\cap } \R^n_{> 0}   \\ 
  & \hat{x}={(\hat{x}_1,\dots,\hat{x}_m)} & \mapsto & {(\Phi_{1}(\hat{x}),\dots,\Phi_{n}(\hat{x})),  }
\end{array}
\end{equation}
for some $m<n$, such that  $\hat{x}\in \R^{m}_{>0}$ is the vector of free variables.  In other words,  a positive parameterization  implies
 that $x_{1},\dots,x_n$ are expressed at equilibrium as functions of  $\hat x$: 
$$ x_{i}= \Phi_{i}(\hat x),\qquad i={1},\dots,n, $$
such that  $x_{1},\dots,x_n$ are positive provided  $\hat x$ is positive.  Thus
\begin{equation}\label{eq:Vparam}
V\cap \R^n_{>0} = \{ \Phi(\hat{x}) \st \hat{x} \in \R^m_{>0}\}.
\end{equation}
{Typically, the number of free variables equals the corank of the network, that is $m=d=n-s$. }

We say that a parameterization is \emph{algebraic} if the components $\Phi_i(\hat x)$ are  polynomials or rational functions (quotients of polynomials) {and  can be given such that the denominator is positive for all $\hat{x}$}. See  Fig~\ref{fig:running_Exa} (Step~{6}) for an {application to the running example.} 
{Note that the parameterizations considered here do not make use of the conservation relations.}

{
A positive equilibrium $\Phi(\hat x)$, $\hat x\in\R^m_{>0}$, 
 } belongs to the stoichiometric compatibility class $\mP_c$ where 
\begin{equation}\label{eq:c_phi}   
  c:= W \Phi(\hat{x}).
\end{equation}
{Combining Eq~\eqref{eq:Vparam} and Eq~\eqref{eq:c_phi}, it follows that}  the positive solutions to Eq~\eqref{eq:ss} for a  given $c$ are in one-to-one correspondence with the positive solutions to Eq~\eqref{eq:c_phi}, that is,
$$ V\cap \mP_c^+ = \{\Phi(\hat{x}) \st   \hat{x} \in \R^m_{>0} \quad\textrm{and}\quad c= W \Phi(\hat{x}) \}.$$

  {In order to restate Theorem 1 using the parameterization $\Phi$, } 
we consider the determinant of $M(x)$ evaluated at $\Phi(\hat x)$, 
\begin{equation}\label{eq:akp}
  a(\hat{x})  = \det (M(\Phi(\hat{x}))), \quad \hat x\in \R^m_{>0}.
\end{equation}

\medskip
\noindent
{\bf {Corollary 2} (Positive parameterization). } 
Assume  the reaction rate functions fulfil Eq~\eqref{eq:invariance} and let $s=\text{rank}(N)$.  Further, assume that  

\smallskip
\noindent (i) The  network is dissipative.
 
\noindent (ii)  There are no boundary equilibria in  $\mP_c$,  for all $c\in \R^d$ such that $\mP_c^+\neq \emptyset$.
 
\noindent (iii) The set of positive equilibria admits a positive parameterization as in Eq~\eqref{eq:param}.

\smallskip
\noindent Then the following holds.
    
\smallskip
\noindent    (A) {\bf Uniqueness of equilibria.} If 
$$\sign(a(\hat{x}))=(-1)^s\quad \textrm{ for  \textbf{all}  } \hat{x}\in \R^m_{>0},$$
    then there is exactly one  positive equilibrium in each    $\mP_c$ with $\mP_c^+\neq \emptyset$.     Further, this equilibrium is  non-degenerate.
    
    \smallskip
    \noindent  (B) {\bf Multiple equilibria.} If 
    $$\sign(a(\hat{x}))=(-1)^{s+1}\quad \textrm{ for  \textbf{some}  }\hat{x}\in \R^m_{>0},$$ 
    then there are at least two positive equilibria in  {the stoichiometric compatibility class }$\mP_{c}$  where  $c:= W \Phi(\hat{x})$.
    Further,  at    least one of {the equilibria} is non-degenerate.     If all positive equilibria in $\mP_{c}$ are non-degenerate, then
    there are at least three equilibria  and always an odd number.

\medskip
 {Note that, contrary to Theorem 1, the stoichiometric compatibility class $\mP_c$ is not fixed in the corollary.}

{
In the next section we formulate a procedure based on Corollary~1 and Corollary~2 to find regions of mono- and multistationarity.
Before that we end this section with an application   to the running example. The analysis is divided into seven steps which prelude the steps of the procedure.
}

\paragraph{
  {Application of Corollary~1 and Corollary~2 to the running example.
  } 
}

 We start with the setup given in Fig~\ref{fig:setup_running} and first check whether the sign condition of Corollary~1 is fulfilled, in which case there is a single equilibrium in all stoichiometric compatibility classes.  The steps of the analysis are illustrated in Fig~\ref{fig:running_Exa}.  
 
 The assumptions of the corollary are easily verified in this case. As we are assuming mass-action kinetics, Eq~\eqref{eq:invariance} is fulfilled  (Fig~\ref{fig:running_Exa}, Step 1). Further the network is conservative, hence dissipative and (i)  is fulfilled (Fig~\ref{fig:running_Exa}, Step 2). It is easily seen that there are no boundary equilibria in any stoichiometric compatibility class with non-empty interior (Fig~\ref{fig:running_Exa}, Step 3).  Hence (ii) is fulfilled. We then construct $\varphi_c(x)$ and calculate the determinant of  $M(x)$. It  is a polynomial (in fact, a linear function) in $x_1,x_2$ with coefficients containing both positive and negative terms (Fig~\ref{fig:running_Exa}, Step 4). By choosing $(x_1,x_2)\in\R^2_{>0}$ with $x_1$ large enough,  the determinant of  $M(x)$ is positive 
 (Fig~\ref{fig:running_Exa}, Step 5). Therefore  Corollary~1 cannot be applied as $s=1$. We note that this conclusion is independent of the specific  choice of the parameter vector  $\kappa=(\kappa_1,\kappa_2,\kappa_3)$, so in fact it holds for all parameter values.

 Corollary~2 has the same assumptions as Corollary~1.
 We  find a positive parameterization by solving the equilibrium equation for $x_1$. That is, we treat it as an equation in $x_1$, while $x_2$ $(= \hat{x})$ is treated as a parameter. The  function $a(x_2 )$ obtained by substituting $x$ by $\Phi(x_2)$ in the determinant of $M(x)$ is given in Fig 2, Step 6. It is clear from the expression of $a(x_2)$ that it takes the sign $-1$ for all   $x_2>0$. Also this conclusion does not depend on the specific  value of $\kappa$.  By application of Corollary~2(A) with $s = 1$, we conclude that there exists a unique positive non-degenerate equilibrium in each stoichiometric compatibility class with $c > 0$, for all values of the reaction rate constants (Fig~\ref{fig:running_Exa}, Step 7). The possibility of multiple equilibria is therefore excluded. 
 In this particular example the existence of a positive parameterization is essential to draw the conclusion.

  \medskip
  
 To illustrate how Corollary 2 can be used to find parameter regions for multistationarity, we consider the polynomial $a(\widehat{x})$  given in the first row of Fig~\ref{fig:examples}, where $n=6$ and $s=4$ (the example is worked out in detail below):
         \begin{align*}
          a(\widehat{x})  &=  \frac{1}{\k_3} \big(  \k_{2}\k_{4}\k_{5}^2 \mathbf{\left(\k_{1}-\k_3
          \right) }x_{4}x_{5}^{2}    +  ( \k_{1}+\k_{2})\k_3\k_{4}\k_{5}\k_{6} {x_{5}^2}  
          + 2\k_{1}\k_{2}\k_3\k_{4}\k_{5} {x_{4}x_{5}} \\ & \quad + \k_{1}( \k_{2} +\k_{3})\k_3\k_{5}\k_{6}  {x_{5}}
+\k_{1}\k_{2}\k_{3}^2\k_{5} {x_{4}} + \k_{1}\k_{2}\k_{3}^2\k_{6}  \big).
        \end{align*}
 Only one of the coefficients of the polynomial $a(\hat{x})$ in $x_4,x_5$ can be negative. If $\k_3\leq \k_1$, then  $\sign( a(\hat{x}) )=(-1)^4=1$ for all positive $\hat x$. 
Corollary 2(A)  implies  that there is a unique positive non-degenerate equilibrium in each stoichiometric compatibility class with non-empty positive part.

 Oppositely, we show that for $\k_3> \k_1$,  Corollary 2(B) applies. For that,  let $x_4=T$ and $x_5=T$.  Then $a(\hat{x})$ becomes a polynomial in $T$ with negative leading coefficient of degree $3$.
For $T$ large enough,  $  a(\widehat{x})$ is negative, and we conclude that there exists $\hat x$ such that  $\sign(a(\hat{x}))= (-1)^{5}=-1$.  Corollary 2(B)  implies that there exists a stoichiometric compatibility class $\mP_{c}$ that contains at least two positive equilibria.  
In summary, the region of the parameter space for which multistationarity exists is completely characterized by the inequality $\k_3>\k_1$.

   \medskip
 Step 5 and 7 are sign analyses of $\det(M(x))$ and $a(\hat x)$, respectively. These are  crucial steps and  essential for determining parameter regions with mono- and multistationarity.   
   In general, the sign of a polynomial might be studied by studying the signs of the coefficients of the monomials in the polynomial. If all {coefficients have the same sign,} 
   then the polynomial {is either} positive or negative for all $x\in\R^n_{>0}$, respectively,  $\hat x\in\R^m_{>0}$, depending on the sign, and Corollary~1, respectively, Corollary~2(A) applies. If this is not the case, then Corollary~2(B) might be applicable {if we can show that the polynomial has the sign $(-1)^{s+1}$ for some $\hat x\in\R^m_{>0}$}. 
      In {the two examples discussed here, the signs of $\det(M(x))$ and $a(\hat x)$
  are straightforward to analyse. However, this is not always the case, see the section ``Checking the steps of the procedure".}
 
\begin{figure*}[!t]
  \centering
  \includegraphics[scale=0.8]{./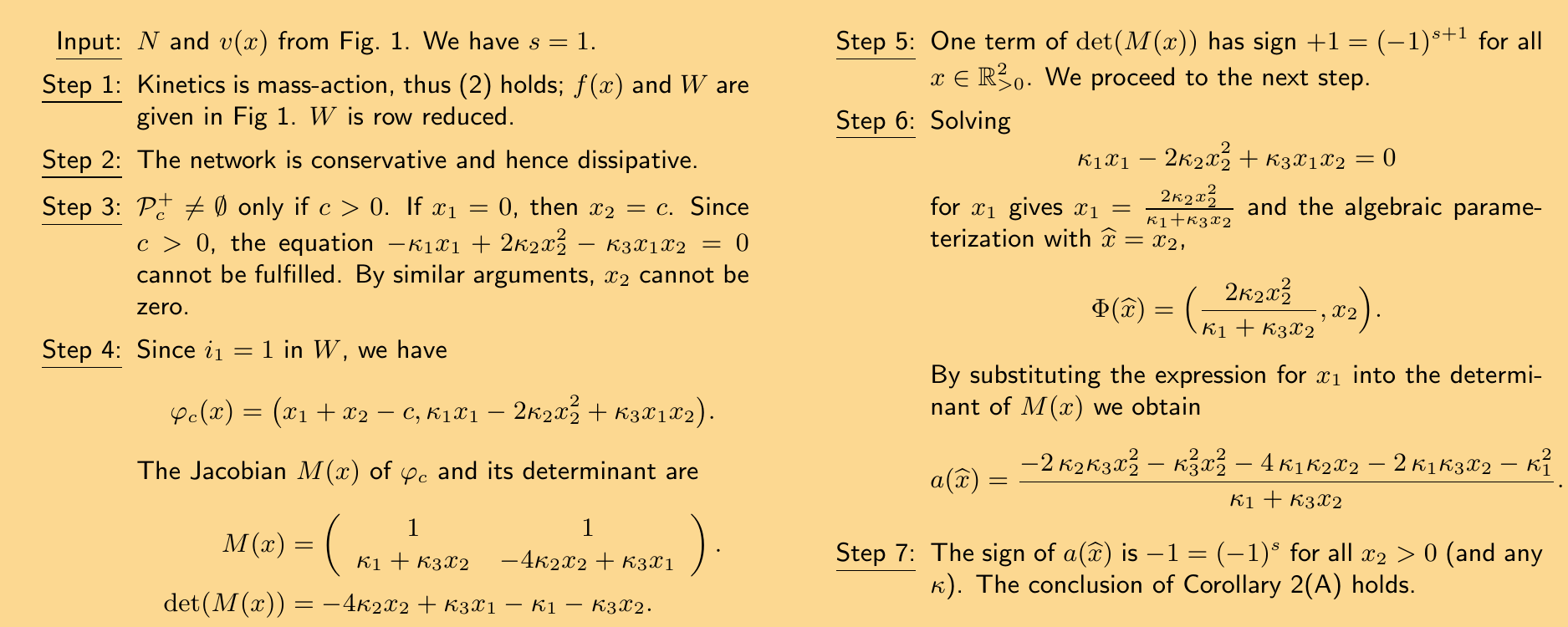}
  \caption{\label{fig:running_Exa}
      {    Step 1-3 check the assumptions of Corollary~1 and Corollary~2. In step 4 the function $\varphi_c(x)$ is constructed and the determinant of $M(x)$ is found. Step 5 is the sign analysis of the polynomial $\det(M(x))$ for $x\in\R^2_{>0}$.
      Step 6 establishes a positive parameterization and finds the polynomial $a(\hat x)$. Step 7 is similar to step 5, but for $a(\hat x)$.
   }}
\end{figure*}

\subsection*{Procedure for finding parameter regions for multistationarity}

 In the {previous subsection} we applied {Corollary~1 and Corollary~2} {to the running example} by going through a number of steps corresponding to the conditions of the statements and the calculation of the determinant. In this section we outline the steps formally.  Afterwards we discuss the steps and how they can be verified either manually or algorithmically, that is, without user intervention. Finally we  devise an algorithm to conclude  uniqueness of equilibria or to find regions in the parameter space where multistationarity occurs. We conclude this section with some extra examples that follow the steps of the procedure.
 
We assume  the reaction rate functions $v(x)$  depend on some parameters $\k$. The reaction rate functions are further assumed to be polynomials (as for mass-action 
kinetics) or quotients of polynomials (as for Michaelis-Menten and Hill kinetics with integer exponents).

{
 The input to  the procedure   is $v(x)$ and $N$ (the stoichiometric matrix) and the output is  parameter regions for which the  network admits multistationarity or uniqueness of equilibria. }  
 
\medskip
\noindent
{\bf Procedure (Identification of parameter regions for multistationarity) } 

\medskip
\noindent  {\bf Input:}  $N$ and $v(x)$ {depending on $\kappa$}.

\noindent  {\bf 1. } Find $f(x)$, a row reduced matrix $W$ {of size $d\times n$} such that {$WN=0$}, 
and check that $v(x)$ {vanishes in the absence of one of the reactant species,} 
that is, check that it satisfies Eq~\eqref{eq:invariance}.

\noindent {\bf 2. } Check that the network is dissipative.
 
\noindent {\bf 3. } Check for boundary equilibria in  $\mP_c$ for $\mP_c^+\neq \emptyset$ and $c\in\R^d$.

\noindent {\bf 4. } Construct $\varphi_c(x)$, $M(x)$ and compute $\det(M(x))$.

\noindent {\bf 5. } {Analyze the sign of  $\det(M(x))$.} {Find conditions on the parameters $\k$ such that 
 $\sign(\det(M(x)))=(-1)^s$ for all $x\in\R^n_{>0}$, in which case  Corollary~1 holds.
 }

{If Corollary~1 does not hold for all  $\k$, continue to the next step.}

\noindent {\bf 6. } Obtain an \textbf{algebraic} parameterization $\Phi(\hat{x})$ of the set of positive equilibria {for all $\kappa$}, as in Eq~\eqref{eq:param}, {  such that the coefficients of the numerator and the denominator of each $\Phi_i(\hat{x})$ possibly depend on  $\kappa$.} Compute $a(\hat{x})=\det(M(\Phi(\hat x)))$.   By hypothesis, $a(\hat{x})$ can be written as the quotient of two polynomials in $\hat{x}$ with coefficients depending on $\kappa$, whose denominator takes positive values.

\noindent {\bf 7. } {Analyze} the sign of the numerator  of $a(\hat{x})$.

{\bf 7a.}  { Identify coefficients with sign $(-1)^{s+1}$ and coefficients  that can have different signs depending on the parameters.}

{\bf 7b.} Use the terms corresponding to identified coefficients to construct parameter inequalities such that, whenever these inequalities hold, one has either $\sign(a(\hat{x}))=(-1)^s$ for all $\hat{x}\in\R^m_{>0}$  or  $\sign(a(\hat{x}))=(-1)^{s+1}$ for at least one  $\hat{x}\in\R^m_{>0}$, {in which case either {Corollary~2(A) or (B)}  holds.}

\medskip

There is no guarantee that all steps of the procedure can be carried
out successfully, let alone automatically.  While step 1 and 4
usually are straightforward ({only computational issues might
  arise for large networks}),  step 2, 3, 5, 6 and 7 might in
particular require case specific approaches.  
  However, there exist computationally feasible sufficient criteria that  guarantee the conditions in each step can be checked  efficiently.

\paragraph{{Checking the steps} of the procedure. }

\subparagraph{\textit{Step 2: establishing  dissipativity. }} 
If the network is not dissipative, then at least one concentration  grows to infinity over time. This is typically not the case for realistic networks, but it needs to be ruled out in order to apply  the procedure. 

We start by checking whether the network is conservative. This implies solving the linear system $\omega^t\, N = 0$ with the constraint $\omega >0$. 
Alternatively, conservation relations are often easily established by inspection of the reactions.
For example, in many signalling networks, the total concentration of enzyme (free and bounded) and of substrate (phosphoforms) are conserved.

{If the network is not conservative, then we check whether it is \emph{strongly endotactic} \cite{endotactic1,endotactic2}. Strongly endotactic reaction networks are in particular  \emph{permanent}, that is,  dissipative and the compact set can be chosen such that it does not intersect the boundary of $\R_{>0}^n$, see  \cite{endotactic1,endotactic2,johnston} for details. }

If the network is neither conservative nor strongly endotactic, then we can use the 
following proposition to decide on dissipativity  (see the \S3.2 in the \SI).

\medskip\noindent
{\bf Proposition 1 (Dissipative network). }
Let  $||\cdot ||$ be a norm in $\mathbb{R}^n$.
Assume that for each $c$ with $\mP_c^+\neq \emptyset$, there exists a
vector $\omega_c\in \R^n_{>0}$ and a number $R>0$ such that
$\omega_c \cdot f(x)< 0$ for all $x\in \mP_c$ with $||x||>R$. Then the
network is dissipative.

\medskip
Thus, we look for {vectors $\omega_c$ with all coordinates
  positive and} such that $\omega_c  \cdot f(x) <0$ for large $x$.   To avoid restricting the parameter values, this computation should be done
symbolically.

\subparagraph{\textit{Step 3: absence of  boundary equilibria. }}
For systems of moderate size  it is often possible to establish
nonexistence of boundary equilibria by arguments similar to those
employed in the analysis of the running  example: for each $i$, assume
$x_i=0$, and show that it leads to a contradiction.

A systematic procedure to check for the existence of boundary equilibria relies on computing the so-called {\em minimal siphons} of the network \cite{angelisontag}. 
 A \emph{siphon} is a set of species $Z\subseteq \{X_1,\dots,X_n\}$ fulfilling the following closure property: if $X_i\in Z$ and 
 $X_i$ is produced in reaction $R_j$ (that is, $\beta_{ij}>0$), then there exists  $X_k\in Z$ such that $X_k$ is consumed in the same reaction (that is, $\alpha_{kj}>0$).
 A \emph{minimal siphon} is a siphon that does not properly contain any other siphon. 

\medskip\noindent
{\bf Proposition 2 (Siphons)} (\cite{Shiu-siphons,marcondes:persistence})
If for every minimal siphon $Z$ there exists a subset $\{X_{i_1},\dots,X_{i_k}\}\subseteq Z$,  and a conservation relation   $\lambda_1 x_{i_1}+\dots+\lambda_k x_{i_k} = c$ for some  positive $\lambda_1,\dots,\lambda_k$, then the network has no boundary equilibria in any stoichiometric compatibility class $\mP_c$ with $\mP_c^+\neq \emptyset$.
 
\medskip
{
The hypothesis of the proposition can be summarised by saying that each minimal siphon contains the support of a positive conservation relation.}

{For example, the running example has only one minimal siphon, namely $\{X_1,X_2\}$. The conservation relation $x_1+x_2 = c $ fulfils the requirement of Proposition 2, and hence the network has no boundary equilibria in any $\mP_{c}$ with $c>0$.}

{
  More information about using siphons to preclude boundary equilibria is given in the section ``Computational issues'' below and in     \S5.1 of the \SI.
}

 \subparagraph{\textit{Step 5: determining the sign of $\det(M(x))$.}}
 {If the kinetics is mass-action, then $\det(M(x))$ is a polynomial in $x$. In general,  if the reaction rate functions are rational functions in $x$, then so is $\det(M(x))$.  In the latter case, if the $j$th reaction rate function fulfils $v_j(x)=p_j(x)/q_j(x)$ with $p_j(x)\ge0$ and $q_j(x)>0$ for all $x\in\R^n_{>0}$, then $\det(M(x))=p(x)/q(x)$, where $q(x)= \prod_{j=1}^\ell q_j(x)^2>0$. It follows from the definition of $M(x)$ and by differentiation of $v_j(x)$, $j=1,\ldots,\ell$.
 }
 
 {We determine conditions on the parameters such that all coefficients of $p(x)$ have sign $(-1)^s$. Then 
  the sign of $\det(M(x))$ is also $(-1)^s$ for all $x\in \R^n_{>0}$ and Corollary 1 holds. }

\subparagraph{\textit{Step 6: finding an algebraic positive parameterization. }}

Computer algebra systems like Maple or Mathematica can be used
to find a parameterization.  {One  strategy} is to  solve the equations $f_i(x) = 0$, $i\notin \{i_1,\dots,i_d\}$, for some subset of (at most) $s$ variables, treating the remaining (at least) $d$ variables  as coefficients of the system. 
If a  parameterization {found in this way} exists but is not positive,  another set of variables should be tried out. This can be systematically addressed by trying out  all possible subsets of variables.  It requires computation and analysis of at most
$\left(
  \begin{smallmatrix}
    n \\ d
  \end{smallmatrix}
\right)
$ 
parameterizations.  Alternatively, one can compute the circuits of degree one of the matroid associated with the equilibrium equations \cite{Gross:2015vo}.
 
In some cases, the network structure implies that a positive parameterization of the set of equilibria exists. 
A set, say $\{X_{k+1},\dots,X_{n}\}$ with $n-k$ elements for some $k$, is \emph{non-interacting} if  two species never appear on the same side of a reaction and they have coefficient at most one in all reactions. 
{In this case the equilibrium equations $f_{k+1}(x)=\dots=f_n(x)=0$ form a linear system in the variables $\{x_{k+1},\dots,x_{n}\}$. Provided that the determinant of the coefficient matrix of the linear system is not identically zero, this system can be solved } and we obtain a  positive parameterization of the non-interacting variables $x_{k+1},\dots,x_{n}$ at equilibrium   in terms of the remaining variables $x_{1}, \dots,x_{k}$ \cite{Fel_elim,saez_graph}.  {A necessary condition for the determinant of the coefficient matrix not being identically zero}  is that there is no conservation relation of the form $x_{i_1}+\dots+x_{i_l}$  with $i_1,\dots,i_l\in \{k+1,\dots,n\}$.
If {a non-interacting set with $k=d$ exists, that is, with $s=n-d$ elements}, then this guarantees the existence of the desired parameterization.   
{In the running example there is not a non-interacting set because both species have coefficient $2$ in the reaction $2X_1\rightarrow 2X_2$.}

The non-interacting condition can be relaxed in some cases by requiring that  none of the  species in $\{X_{k+1},\dots,X_{n}\}$ appear together in a reactant  (these sets are called \emph{reactant-non-interacting} \cite{meritxell-noninter}). Proceeding as above, provided that the determinant of the coefficient matrix is not identically zero, $x_{k+1},\dots,x_{n}$ can be expressed at equilibrium   in terms of  $x_{1}, \dots,x_{k}$. Conditions that ensure this is a positive parameterization 
are given in \cite{meritxell-noninter}. {In the running example, species $X_1$ is a reactant-non-interacting set and we can obtain a positive parameterization of $x_1$ in terms of $x_2$, see Fig \ref{fig:setup_running} and Fig \ref{fig:running_Exa}.} 
 
If the network admits so-called \emph{toric steady states}, then a positive parameterization also exists \cite{PerezMillan}.

\subparagraph{\textit{Step 7: the  sign of $a(\hat{x})$ and the Newton polytope.} }
This is perhaps the hardest step of all.  
{We write $a(\hat{x})=p(\hat x)/q(\hat x)$ with $q(\hat x)$ positive for all $\hat x$ and would like to determine the sign of $p(\hat{x})$.}
We first look for conditions that ensure  uniqueness of positive equilibria  by imposing that all coefficients of $p(\hat{x})$ {as a polynomial in $\hat x$} have sign $(-1)^s$. 

We next identify the monomials of $p(\hat x)$, 
{where the sign of the coefficient, say $\beta$,  is $(-1)^{s+1}$ for some parameter values. For each of these monomials we check whether the monomial can  ``dominate'' the sign of $p(\hat{x})$.
That is to say,  if $\sign(\beta)=(-1)^{s+1}$, then we determine whether there is an $\hat x$ such that $p(\hat{x})$ also has the sign $(-1)^{s+1}$.}
If it is the case, then the condition $\sign(\beta)=(-1)^{s+1}$ is a sufficient condition for multiple equilibria { according to Corollary~2(B)}.  

Given a coefficient of a monomial with sign $(-1)^{s+1}$, it might not be straightforward to decide if the polynomial $p(\hat x)$ has the same sign for some value of $\hat{x}$.  
(For example, the polynomial $x^2-2xy+y^2=(x-y)^2$ has one monomial with negative sign, but the polynomial itself can never be negative.) 
{When the number of variables is small, one can attempt to decide the sign as we did in the examples above. Otherwise, } 
 our strategy is to determine whether the monomial of interest corresponds to  a vertex of the \emph{Newton polytope}. {If that is the case, then the monomial can dominate the sign of $p(\hat x)$ (see $\S$5.2 in the \SI).}
The Newton polytope of $p(\hat{x})$ is defined as the convex hull of the exponent vectors $\alpha=(\alpha_1,\dots,\alpha_m)\in \R^m$ corresponding to the monomials $\hat x_1^{\alpha_1}\cdot \ldots \cdot \hat x_m^{\alpha_m}$ of $p(\hat{x})$. If $\alpha$ is a vertex of the Newton polytope, then there exists $\hat{x}\in \R^m_{>0}$ such that the sign of $p(\hat{x})$  agrees with the sign of the coefficient of the monomial (see $\S$5.2 in the \SI).

Fig~\ref{fig:examples} shows the Newton polytopes associated with the {polynomials} 
in the figure.
The vertex corresponding to the monomial of interest is shown  in red.

\paragraph{An algorithm. }

In the previous subsection we have outlined computational criteria that might be used to verify the conditions of  the steps in the procedure.
These computational criteria are only sufficient, that is, even if they fail  the   procedure might still work on the given network.  For example, a sufficient computational criterion for the absence of boundary equilibria is based on Proposition 2. However, it might happen that Proposition 2 cannot be applied, but that the network nonetheless has no boundary equilibria in stoichiometric compatibility classes with non-empty interior.

{
We  have collected   sufficient computational criteria that guarantee   the conditions  of the procedure are fulfilled. In this way  the procedure is formulated as an algorithm with decision diagram  shown in Fig \ref{fig:decision_diagram}.  }
{If one step of the algorithm fails, then we say that the algorithm ends indecisively.  In that case  we might check whether the step  can be verified by other means.}

 For simplicity, we have restricted to mass-action kinetics. Under this assumption, $\det(M(x))$ is a polynomial in   $x$ and the parameters $\k$, and $a(\hat x)$ is a  rational function in $\hat x$ and  $\k$  because the parameterization is assumed to be algebraic.

\begin{figure*}[!h]
\centering
\includegraphics[scale=0.7]{./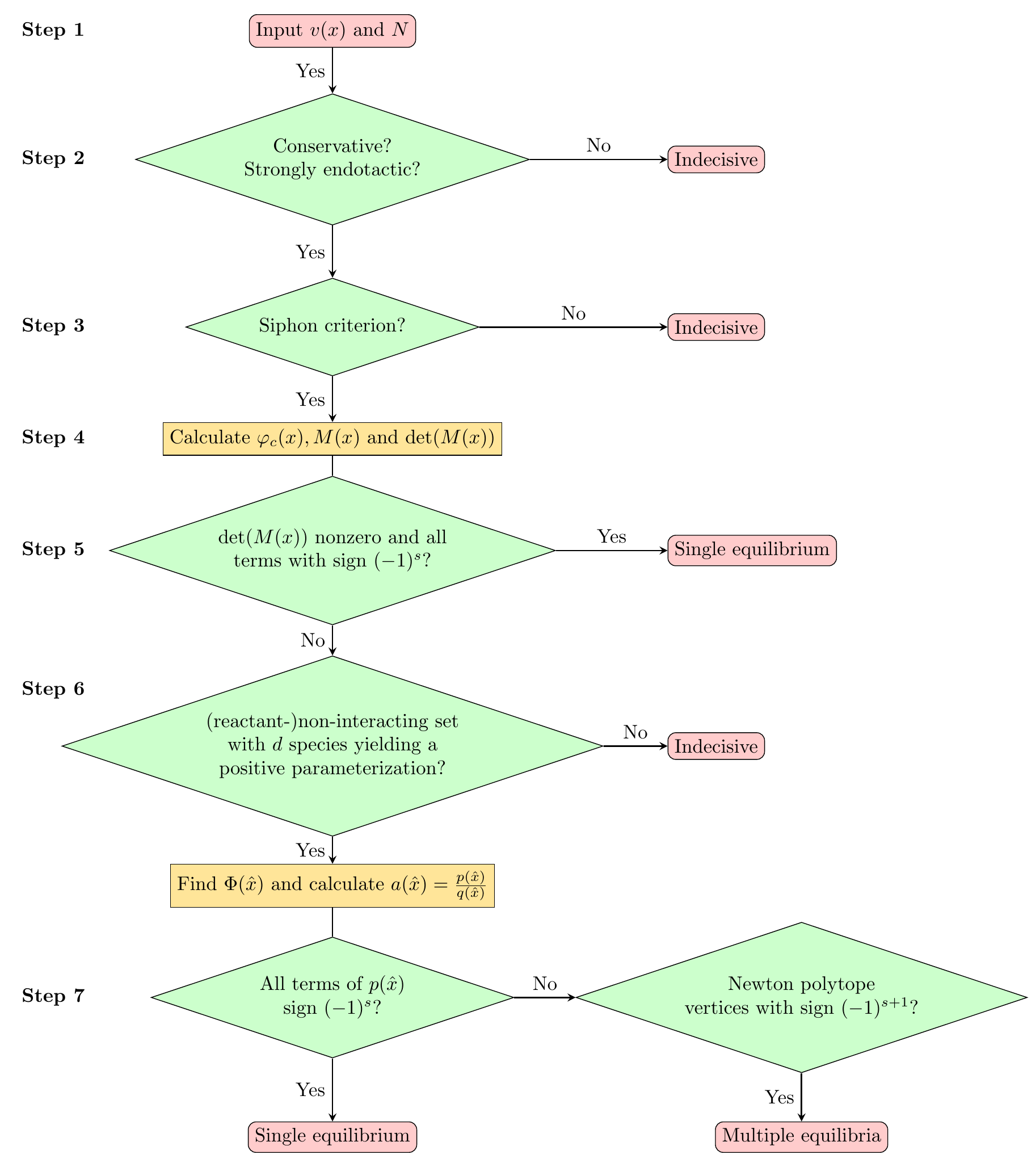}
  \caption{  \label{fig:decision_diagram}
  {  {\bf Decision diagram of the algorithm.} At each step either the condition is fulfilled or the algorithm terminates indecisively. If that is the case, the corresponding condition might still be verified manually  and  the algorithm resumed from the next following step.    }}
\end{figure*}

\section*{Applications to selected examples}

To illustrate several aspects of the algorithm we provide a detailed
step-by-step analysis of a collection  of examples.

\subsection*{Two-component system} 

We have chosen this example to illustrate the situation where 
an algebraic parameterization is not required, as already
$\det(M(x))$ is of constant sign. The algorithm therefore stops
successfully at Step~5 (and consequently skips Step~6 and~7).

We consider a simple version of a two-component system consisting of a
histidine kinase HK that autophosphorylates and transfers the
phosphate group to a response regulator RR, which undergoes
autodephosphorylation.  
The reactions of the network are 
\begin{align*}
    {\rm HK}   & \cee{->[\k_1]}    {\rm HK}_p &
       {\rm HK}_p+ {\rm RR} & \cee{->[\k_2]}{\rm HK} +{\rm RR}_p &
       {\rm RR}_p  &  \cee{->[\k_3]}{\rm RR}.
\end{align*}
We let  $X_1={\rm HK}$, $X_2={\rm HK}_{p}$, $X_3={\rm RR}$ and $X_4={\rm RR}_p$.
The stoichiometric matrix $N$ and a row reduced matrix
$W$ such that $WN=0$
are
\begin{displaymath}
  N=\left(
    \begin{array}{rrrr}  
      -1 & 1  & 0  \\
      1 & -1  & 0  \\ 
      0 & -1  & 1 \\
      0 & 1 & -1
    \end{array}
  \right),\qquad 
  W = \begin{pmatrix} 
    1 & 1 & 0 & 0 \\ 
    0 & 0 & 1 & 1 
  \end{pmatrix}.
\end{displaymath}
The matrix $W$ gives rise to the conservation relations
$ x_1+x_2 = c_1$ and $x_3+x_4 = c_2$. With mass-action kinetics, the vector of reaction rates is
$v(x) = (\kappa_1 x_1,\kappa_{2} x_2x_3,\kappa_{3} x_4),$
and the function $f(x)=Nv(x)$ is
\begin{align*}
  f(x) &= (
         -\kappa_1 x_1  + \kappa_{2} x_2x_3,
         \kappa_1 x_1  - \kappa_{2} x_2x_3   -\kappa_{2} x_2x_3    +\kappa_{3} x_4,
         \kappa_{2} x_2x_3     -\kappa_{3} x_4
         ).
\end{align*}
We apply the algorithm to this network.

\smallskip\noindent
{\bf Step 1. } Mass-action kinetics fulfills assumption in Eq~\eqref{eq:invariance} on the vanishing of reaction rate functions. The function $f(x)$ and $W$ are given above. The matrix $W$ is row reduced.

\smallskip\noindent
{\bf Step 2. }  The network is conservative since $(1,1,1,1)\in \im(N)^\perp$.  Therefore the 
network is dissipative.

\smallskip\noindent
{\bf Step 3.  } 
The minimal siphons of the network are $\{X_1,X_2\}$ and
$\{X_3,X_4\}$. These two sets are the supports of the conservation
relations.
By Proposition~2, 
there are no boundary equilibria in any $\mP_c$ as long as $\mP_c^+\neq \emptyset$.

\smallskip\noindent
{\bf Step 4.  } 
 With our choice of $W$, we have $i_1=1,i_2=3$.  
Hence  $\varphi_c$ is obtained by replacing the components $f_1(x),f_3(x)$ of
$f(x)$ by the expressions derived from the two conservation relations: 
\begin{align*}
  \varphi_c(x) &= \bigl( x_1+x_2- c_1, 
                 \k_1 x_1  - \k_{2} x_2x_3, x_3+x_4-c_2, \k_{2} x_2x_3 -\k_{3} x_4 \bigr).
\end{align*}
The Jacobian $M(x)$ of $\varphi_c$ and its determinant are 
      \begin{align*}
        M(x) & =  \left(
          \begin{array}{cccc}  
            1 & 1 & 0  & 0  \\
            \k_1 & -\k_2x_3 & -\k_2x_2 & 0 \\ 0 & 0 & 1 & 1  \\ 
            0 &  \k_2x_3 & \k_2x_2 & -\k_3
          \end{array}
        \right). \\
        \det(M(x)) & = \k_1\k_2x_2+ \k_2\k_3x_3+\k_1\k_3.
      \end{align*}

\smallskip\noindent
{\bf Step 5.  }      All terms of $\det(M(x))$ have sign $+1=(-1)^s$, since $s=2$, and thus the conclusion of Corollary~1 holds. The network admits exactly one non-degenerate equilibrium point in every stoichiometric compatibility class with non-empty positive part.

\subsection*{Hybrid histidine kinase}

This example has been analysed in \cite{feliu:unlimited}. Taken with
mass-action kinetics the network is known to be  multistationary for
specific choices of reaction rate constants. 
We have chosen this example to illustrate how the algorithm can be used
to sharpen known results: not only does it establish multistationarity
for some parameter values, it provides
precise conditions for when it occurs and allows a complete partition
of the parameter space into regions with and without
multistationarity. It also illustrates the use of an algebraic
parameterization,  which can be obtained by identifying sets of
reactant-non-interacting species, and the use of the Newton polytope in Step 7.

This reaction network is an extension of the two-component
system discussed above and it is given in the first row of
Fig~\ref{fig:examples}. Specifically,  the histidine kinase is
assumed to be \emph{hybrid}, that is, it has two ordered
phosphorylation sites \cite{feliu:unlimited}. Whenever the second
phosphorylation site is occupied,  the phosphate group can be
transferred to a response protein.

\begin{figure*}[!ht]
\includegraphics[scale=0.85]{./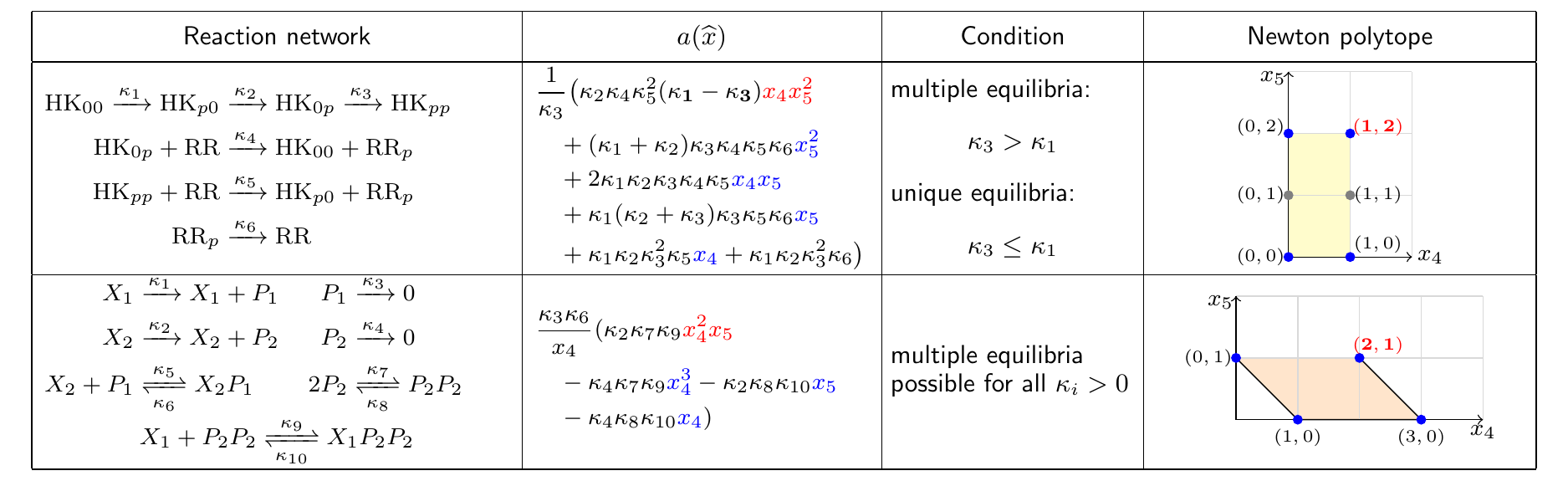}
  \caption{\label{fig:examples} 
    {
      Two examples describing a hybrid histidine kinase (row 1) and a
      gene transcription network (row 2). \emph{Column 1}: the reaction network;
      \emph{Column 2}: the function $a(\hat{x})$ where monomials with
      coefficients of  constant sign $(-1)^s$ are in  black, and those
      that can have sign $(-1)^{s+1}$ are in red; \emph{Column 3}:
      parameter conditions for multistationarity; \emph{Column 4}: Newton
      polytope where each point corresponds to the exponent vector  of
      a monomial of the numerator of $a(\hat{x})$ (e.g. $(1,2)$ is the
      exponent vector of the monomial $x_4x_5^2$), black
      points are the vertices of the Newton polytope and red numbers
      indicate the exponents of the red monomials in column 2.
    }
  }
  \end{figure*}

Using the notation $X_1={\rm HK}_{00}$, $X_2={\rm HK}_{p0}$, $X_3={\rm
  HK}_{0p}$, $X_4={\rm HK}_{pp}$,  $X_5={\rm RR}$ and $X_6={\rm
  RR}_p$, the stoichiometric matrix $N$ and a row reduced matrix
$W$ such that $WN=0$ are
\begin{displaymath} 
  N = 
  \left(
    \begin{array}{rrrrrr}
      -1 & 0 & 0 & 1 & 0 & 0 \\ 
      1 & -1 & 0 & 0 & 1 & 0 \\
      0 & 1 & -1 & -1 & 0 & 0 \\
      0 & 0 & 1 & 0 & -1 & 0 \\
      0 & 0 & 0 & -1 & -1 & 1 \\
      0 & 0 & 0 & 1 & 1 & -1
    \end{array}
  \right),\qquad 
  W = 
  \left(
    \begin{array}{rrrrrr}
      1 & 1 & 1 & 1 & 0 & 0 \\ 
      0 & 0 & 0 & 0 & 1 & 1
    \end{array}
  \right).
\end{displaymath}
The matrix $W$ gives rise to the conservation relations
  $x_1+x_2+x_3+x_4 = c_1$ and $x_5+x_6 = c_2.$ 
We assume mass-action kinetics 
$$ v(x)= (\k_1 x_1, \k_2 x_2, \k_3x_3,\k_4 x_3x_5, \k_5x_4x_5, \k_6 x_6),$$
and  the function is 
\begin{align*}
  f(x) &= 
         (-\kappa_1 x_1  + \kappa_{4} x_3x_5 ,  \kappa_1 x_1  - \kappa_{2} x_2 +\kappa_5 x_4 x_5 , -\kappa_{3} x_3    +\kappa_{2} x_2 -\kappa_4 x_3 x_5 ,  \\ & \qquad \kappa_{3} x_3     -\kappa_{5} x_4 x_5, -\kappa_{4} x_3x_5    -\kappa_{5} x_4x_5  +\kappa_{6} x_6, \kappa_{4} x_3x_5     -\kappa_{6} x_6 +\kappa_5 x_4 x_5).
\end{align*}

We apply the algorithm to this network.

\smallskip\noindent
{\bf Step 1. } Mass-action kinetics fulfills assumption in Eq~\eqref{eq:invariance} on the vanishing of reaction rate functions. The function $f(x)$ and $W$ are given
above. The matrix $W$ 
is row reduced.

\smallskip\noindent
{\bf Step 2. }  Since $(1,1,1,1,1,1)\in \im(N)^\perp$ the network is conservative and hence dissipative.

\smallskip\noindent
{\bf Step 3.  } 
The network has two minimal siphons $\{X_1,X_2,X_3,X_4\}$ and
$\{X_5,X_6\}$, which are  respectively the supports of the two conservation
relations. 
We apply  Proposition~2 to conclude that 
there are no boundary equilibria in any $\mP_c$ as long as $\mP_c^+\neq \emptyset$.
 
\smallskip\noindent
{\bf Step 4.  } 
Since $i_1=1,i_2=5$, the function $\varphi_c$ is obtained by replacing the components $f_1(x),f_5(x)$ of
$f(x)$ by the expressions derived from the two conservation relations: 
 \begin{align*}
 \varphi_c(x) &= 
(x_1+x_2+x_3+x_4-c_1 ,  \kappa_1 x_1  - \kappa_{2} x_2 +\kappa_5 x_4 x_5 , -\kappa_{3} x_3    +\kappa_{2} x_2 -\kappa_4 x_3 x_5 ,  \\ & \qquad \kappa_{3} x_3     -\kappa_{5} x_4 x_5, x_5+x_6-c_2  , \kappa_{4} x_3x_5     -\kappa_{6} x_6 +\kappa_5 x_4 x_5).
 \end{align*}
The Jacobian $M(x)$ of $\varphi_c(x)$ and its determinant are 
\begin{align*} M(x) & = \left(
\begin{array}{cccccc}
1 & 1 & 1 & 1 & 0 & 0 \\
\k_1 & -\k_2 & 0 & \k_5x_5 & \k_5x_4 & 0 \\
0 & \k_2 & -\k_3- \k_4x_5 & 0 & -\k_4x_3 & 0 \\
0 & 0 & \k_3 & -\k_5x_5 & -\k_5x_4 & 0 \\ 
  0 & 0 & 0 & 0 & 1 & 1 \\
  0 & 0 & \k_4x_5 & \k_5x_5 & \k_4x_3+\k_5x_4 & -\k_6
\end{array}
\right),\\
\det (M(x)) &= 
\kappa_2\kappa_4\kappa_5(\kappa_1- \kappa_3)x_3x_5  + \kappa_1\kappa_2\kappa_4\kappa_5x_4x_5 +\kappa_4\kappa_5\kappa_6(\kappa_1+\kappa_2)x_5^2 \\ &\quad  + \kappa_1\kappa_2\kappa_3\kappa_4x_3 + \kappa_1\kappa_2\kappa_3\kappa_5x_4  + \kappa_1\kappa_5\kappa_6(\kappa_3+\kappa_2)x_5 + \kappa_1\kappa_2\kappa_3\kappa_6.
 \end{align*}
 
\smallskip\noindent
{\bf Step 5.  }   The sign of the first coefficient of $\det (M(x))$   depends on the parameters. If $\kappa_1\ge \kappa_3$, then the sign is positive and $\det (M(x))$ has sign $+1=(-1)^4$ ($s=4$) as the remaining  terms are  positive. According to Corollary~1, there is a single non-degenerate equilibrium in each stoichiometric compatibility class with non-empty positive part. If  $\kappa_1< \kappa_3$, then Corollary~1 cannot be applied.
We proceed to the next step to investigate the parameter space further.

\smallskip\noindent
{\bf Step 6.  }  {The set $\{X_1,X_2,X_3,X_6\}$ is reactant-non-interacting and consists of $s=4$ elements. }
We solve the equilibrium equations {$f_1=f_2=f_3=f_6=0$} for
$x_1,x_2,x_3,x_6$. This gives the following algebraic parameterization $\Phi \colon
\R^{2}_{> 0} \rightarrow  {V\cap } \R^6_{> 0} $ of the set of equilibria in terms
of $\widehat{x}=(x_4,x_5)$:
 \begin{align*}
  \Phi(x_4,x_5) & = \Big(\frac {\k_{4}\k_{5}x_4x_5^2}{\k_1\k_3},
  \frac{\k_{5}( \k_{4}x_{5}+\k_{3}) x_4x_5}{\k_{2}\k_{3}}, \frac{\k_{5}x_{4}x_{5}}{\k_{3}}, x_4,x_5,\frac {\k_{5}( \k_{4}x_{5}+\k_{3})x_{4}x_{5} }{\k_{3}\k_{6}}
 \Big).
 \end{align*}
The function $a(\hat{x})$, which is     $\det(M(x)) $ evaluated at 
$\Phi(x_4,x_5)$, is the polynomial given in the first row of Fig~\ref{fig:examples}.

 \smallskip\noindent
 {\bf Step 7.  } We assume $\kappa_1< \kappa_3$, as the case $\kappa_1\ge \kappa_3$ is analysed in Step 5.
Only one coefficient of  $a(\hat{x})$ has sign $-1=(-1)^{s+1}=(-1)^5$.
The monomial associated with this term is  $x_4x_5^2$. As the point $(1,2)$ (the degrees of the monomial) is a vertex of the Newton polytope (see Fig~\ref{fig:examples}), then there exists $\hat{x}\in\R^m_{>0}$ such that the sign of $a(\hat{x})$ is $-1$. 
Corollary~2(B)  implies that there exists  $c=(c_1,c_2)$ such that $\mP_{c}$ contains at least two positive equilibria.

Multistationarity is thus completely characterized by the inequality $\k_3>\k_1$.
This condition states
that the reaction rate constant for   phosphorylation of the first
site of the hybrid kinase is larger if the second site is phosphorylated than if it is not.

\subsection*{Gene transcription network}

We  consider the gene transcription network given in row 2 of
Fig~\ref{fig:examples}. This example has been studied in
\cite{gaskins:royal}. The particularities of this example are that the
network is dissipative but not conservative,  and  that it displays multistationarity for \emph{all}  parameters $\kappa$. Further, this network illustrates  the situation where the algorithm stops inconclusively at some step, but
can be resumed after successful manual verification. 

The network  represents a
gene transcription motif with two proteins $P_1,P_2$, 
produced by their respective genes $X_1,X_2$, and such that $P_2$
dimerises  \cite{gaskins:royal}. Further, the proteins cross regulate
each other as depicted in Fig~\ref{fig:examples}. 
Using the notation  $X_1=X_1$, $X_2=X_2$, $X_3=P_1$, $X_4=P_2$,
$X_5=X_2P_1$, $X_6=P_2P_2$, and $X_7=X_1P_2P_2$
the stoichiometric matrix $N$ 
and a row reduced matrix
$W$  such that $WN=0$
are
\begin{displaymath}
  \begin{split}
      N & =
      \left(
        \begin{array}{rrrrrrrrrr}
          0 & 0 & 0 & 0 & 0 & 0 & 0 & 0  & -1 & 1 \\ 
          0 & 0 & 0 & 0 & -1 & 1 & 0 & 0 & 0 & 0  \\
          1 & 0 & -1 & 0 & -1 & 1 & 0 & 0 & 0 & 0  \\
          0 & 1 & 0 & -1 & 0 & 0 & -2 & 2 & 0 & 0  \\
          0 & 0 & 0 & 0 & 1 & -1  & 0 & 0 & 0 & 0 \\
          0 & 0 & 0 & 0 & 0 & 0 & 1 & -1 & -1 & 1 \\
          0 & 0 & 0 & 0 & 0 & 0 & 0 & 0 & 1 & -1
        \end{array}
      \right), \\
      W &= 
      \left(
        \begin{array}{rrrrrrr}
          1 & 0 & 0 & 0 & 0 & 0 & 1 \\ 
          0 & 1 & 0 & 0 & 1 & 0 & 0
        \end{array}
      \right).
  \end{split}
\end{displaymath}
{From $W$ we find the 
conservation relations
  $x_1+x_7=c_1$ and $x_2+x_5=c_2$. Here $s=5$.}
{We consider mass-action kinetics such that
$$ v(x)= (\k_1 x_1, \k_2 x_2, \k_3x_3,\k_4 x_4, \k_5x_2x_3, \k_6 x_5,\k_7x_4^2, \k_8x_6,\k_9x_1x_6,\k_{10}x_7)$$
and $f(x)=Nv(x)$ is the function}
\begin{align*}
  &( -\k_{9}x_{1}x_{6}+\k_{10}x_{7}, -\k_{5}x_{2}x_{3}+\k_{6}x_{5}, 
         \k_{1}x_{1}-\k_{3}x_{3}  -\k_{5}x_{2}x_{3}+\k_{6}x_{5},\k_{2}x_{2}-\k_{4}x_{4} \\
       & \quad 
         -2\k_{7}x_4^{2}+2\k_{8}x_{6},     \k_{5}x_{2}x_{3}-\k_{6}x_{5},
         \k_{7}x_4^{2}-\k_{8}x_{6}-\k_{9}x_{1}x_{6}+\k_{10}x_{7},
         \k_{9}x_{1}x_{6}-\k_{10}x_{7}).
\end{align*}
We apply the algorithm to this network:

\smallskip\noindent
{\bf Step 1. } Mass-action kinetics fulfills assumption  in Eq~\eqref{eq:invariance} on the vanishing of reaction rate functions. The function $f(x)$ and $W$ are given
above. The matrix $W$ 
is row reduced.

\smallskip\noindent
{\bf Step 2. }  The network is neither conservative nor strongly
endotactic. Thus the algorithm terminates inconclusive. We take a
manual approach: we pick $\omega_c = (1,1,1,1,2,2,3)\in \R^7_{>0}$ and
observe 
\begin{displaymath}
  \omega_c\cdot f(x)= \k_1 x_1 + \k_2 x_2 - \k_3 x_3 - \k_4 x_4.
\end{displaymath}
Note that $x_1,x_2$ are bounded (due to the conservation relations) while  
 $x_3,x_4$  can be arbitrarily large. Then, for $x_3,x_4$ large
 enough, $\omega_c\cdot  f(x)<0$ and the network is dissipative by
 Proposition~1 (as has been shown in \cite{gaskins:royal} by other means).

\smallskip\noindent
{\bf Step 3.  }
{This network has two minimal siphons: $\{X_1,X_7\}$ and $\{X_2,X_5\}$, which are the supports of the two conservation relations. Therefore, by Proposition 2, there are no boundary steady states in stoichiometric compatibility class with non-empty positive part.}

{In section \S5.1 in the \SI we illustrate how to apply a simplification technique, based on the removal of so-called \emph{intermediates} and \emph{catalysts}, to check whether Proposition~2 holds for this network. }

 \smallskip\noindent
{\bf Step 4.  } 
Using that $i_1=1,i_2=2$ for our choice of $W$, the function $\varphi_c(x)$ is:
 \begin{align*}
 &
 (x_1+x_7-c_1,x_2+x_5-c_2, 
          \k_{1}x_{1}-\k_{3}x_{3},
          -\k_{5}x_{2}x_{3}+\k_{6}x_{5}\k_{2}x_{2}-\k_{4}x_{4} \\
          & \quad 
          -2\k_{7}x_4^{2}+2\k_{8}x_{6},     \k_{5}x_{2}x_{3}-\k_{6}x_{5},
          \k_{7}x_4^{2}-\k_{8}x_{6}-\k_{9}x_{1}x_{6}+\k_{10}x_{7},
          \k_{9}x_{1}x_{6}-\k_{10}x_{7}).
 \end{align*}
 The matrix $M(x)$ and its determinant are:
\begin{align*}
M(x) & = {\small \left( \begin {array}{ccccccc} 1&0&0&0&0&0&1\\ 0&1&0
&0&1&0&0\\ \k_{{1}}&-\k_{{5}}x_{{3}}&-\k_{{5}}x_{{2}}-\k_{{3}}&0&\k_{{6}}&0&0\\ 0&\k_{{2}}&0&-4\k_{{7}}x_{{4}}-\k_{{4}}&0&2\k_{{8}}&0\\ 0&\k_{{5}}x_
{{3}}&\k_{{5}}x_{{2}}&0&-\k_{{6}}&0&0\\ -\k_{{9}}x_{{6}}&0&0&2\,\k_{{7}}x_{{4}}&0&-\k_{{9}}x_{{1}}-\k_{{8}}&\k_{{10}}
\\ \k_{{9}}x_{{6}}&0&0&0&0&\k_{{9}}x_{{1}}&-\k_{{10}}\end {array} \right), }
\\
  \det(M(x)) &=
2\k_1\k_2\k_5\k_7\k_9x_1x_2x_4
    -\k_{{3}}\k_{{4}}\k_{{5}}\k_{{8}}\k_{{9}}x_{{3}}x_{{6}} 
    -\k_{{3}}\k_{{4}}\k_{{5}}\k_{{8}}\k_{{10}}x_{{3}} \\
    &\qquad 
    -\k_{{3}}\k_{{4}}\k_{{6}}\k_{{8}}\k_{{9}}x_{{6}}
    -\k_{{3}}\k_{{4}}\k_{{6}}\k_{{8}}\k_{{10}}.
\end{align*}

 \smallskip\noindent
{\bf Step 5.} One coefficient of $\det (M(x))$ has sign $(-1)^{s+1}=
1$ {for all values of $\k$. Thus we proceed to the next step}.

 \smallskip\noindent
{\bf Step 6.}   There is not a  set of non-interacting species {nor reactant-non-interacting} with $s=5$ elements. Thus the algorithm terminates inconclusively.

We take a manual approach and solve the equilibrium equations $f_3=f_4=f_5=f_6=f_7=0$ for
$x_1,x_2,x_3,x_6,x_7$. This gives the following algebraic parameterization $\Phi \colon
\R^{2}_{> 0} \rightarrow  \R^7_{> 0} $ of the set of equilibria   in terms
of $\widehat{x}=(x_4,x_5)$:
 \begin{align*}
  \Phi(x_4,x_5) = \Big(\frac{\k_{2}\k_{3}\k_{6}x_{5}}{\k_{1}\k_{4}\k_{5}x_{4}},
        \frac {\k_{4}x_{4}}{\k_{2}}, \frac {\k_{2}\k_{6}x_{5}}{\k_{4}\k_{5}x_{4}},
        x_4,x_5,
        \frac {\k_{7}x_4^2}{\k_{8}},  \frac
        {\k_{2}\k_{3}\k_{6}\k_{7}\k_{9}x_{4}x_{5}}{\k_{1}\k_{4}\k_{5}\k_{8}\k_{10}} 
        \Big).
 \end{align*}
Evaluating $\det (M(x))$ at $\Phi(x_4,x_5)$ we   obtain the polynomial 
 \begin{align*}
a(x_4,x_5) &= \frac{\k_{3}\k_{6}}{x_4}( \k_{2}\k_{7}\k_{9}x_4^2x_5
        -\k_{4}\k_{7}\k_{9}x_4^3 -\k_{2}\k_{8}\k_{10}x_{5}
        -\k_{4}\k_{8}\k_{10} x_{4}).
 \end{align*}

\smallskip\noindent
{\bf Step 7.} 
{The coefficient of the monomial  $x_4^2x_5$ of the numerator of $a(x_4,x_5) $ has sign $(-1)^{s+1}=(-1)^6=1$. 
Since the monomial $x_4^2x_5$  is a vertex of the associated Newton polytope (see Fig~\ref{fig:examples}), there exists $(x_4,x_5)\in\R^2_{>0}$ such that the sign of $a(x_4,x_5)$ is $1$. }
We conclude from Corollary~2(B) that for all $\k_i>0$ there exists $c$ such that $\mP_{c}$ contains at least two positive equilibria.

\subsection*{Special classes of networks}

There are several classes of networks for which some of the steps of the procedure  are automatically  fulfilled. 
We review some of them here. 

Post-Translational Modification (PTM) networks consist of enzymes ($E_i$), substrates ($S_i$) and intermediate  species  ($Y_i$) \cite{TG-rational}. Allowed reactions are of the form
$$ E_i+S_j \rightarrow Y_k, \qquad Y_k \rightarrow E_i+S_j,\qquad Y_j\rightarrow Y_i.$$
All intermediates are assumed to be the reactant, respectively,  the product   of some reaction.
These networks are conservative (hence dissipative) and boundary equilibria are precluded provided the underlying substrate network obtained by ignoring enzymes and intermediates is strongly connected  \cite{marcondes:persistence}, see also \S5.1 in the \SI.  When equipped with mass-action kinetics, these networks have a non-interacting set with $d$ elements consisting of all enzymes and some of the substrates, namely one per (minimal) conservation relation involving the substrates  \cite{TG-rational}. Thus, a positive parameterization can always be found under the conditions stated above in Step 6.  The class of PTM networks is contained in the class of cascades of PTM networks. Also this class admits  a positive parameterization in terms of the concentrations of the enzymes and some of the substrate forms  \cite{fwptm}.
 
 Cascades of PTM networks might  further be generalized to so-called MESSI networks  \cite{PerezMESSI}.
These networks are all conservative. 
Easy-to-check conditions for the absence of boundary equilibria and to decide whether the network admits toric steady states (and hence a positive parameterization) are given in  \cite{PerezMESSI}.

A   class  of networks that cannot have boundary equilibria in any stoichiometric compatibility class with non-empty interior is given
in \cite{Gnacadja:2011jw}.

{The two examples in Table~\ref{tab:cherry} are both PTM networks. Hence  they are conservative and positive parameterizations exist. {The underlying substrate network is strongly connected  (they pass the criterion based on minimal siphons)}. For both networks the conditions shown in Table 1 are obtained by the algorithm. See \S 6.1 and \S 6.2 in the \SI}.
{For illustration purposes, we apply the algorithm  in  \S6.3 of the \SI to an additional network and show that it is monostationary.}

\section*{Computational issues }

The computational complexity of some of the steps in the procedure are demanding. Some conditions can be checked using linear algebra and do not depend on parameter values, others depend on parameter values and require symbolic manipulations.  In some situations, the calculation can be done for even  large networks at the cost of time, while in other situations symbolic software (like Mathematica and Maple) have inherent limits to what it can process. We offer here  a few remarks about computational strategies and time complexity.

\begin{enumerate}[(1)]
\item Dissipativity. There are efficient algorithms to  check whether the network is conservative and strongly endotactic, using linear algebra or mixed-integer linear programming \cite{johnston,control}. We are not aware of a systematic way to check if Proposition 1 is fulfilled or not.

\item Finding the minimal siphons of a network requires in general exponential time and there might be exponentially many of these \cite{Nabli2016}. Different algorithms developed in Petri Net theory can be applied to find the minimal siphons; see for example \cite{Nabli2016,Shiu-siphons,angelisontag} and references therein.  The complexity of this computation can often be substantially reduced  by   removing so-called intermediates and catalysts from the network \cite{marcondes:persistence} (see  $\S$5.1 in the \SI for details).
 
\item Finding  all non-interacting and reactant-non-interacting sets requires in general exponential time. 
One strategy is the following.
We first remove all species $S_i$ for which $\alpha_{ij}>1$ or $\beta_{ij}>1$ for some reaction $R_j$ (the latter constraint is omitted if we are looking for reactant-non-interacting sets only). Then we build non-interacting (reactant-non-interacting) sets  by adding  new species  recursively until no more species can be added without having an interacting pair of species in the set. 
\item Calculation of the symbolic determinant of the matrix $M(x)$, and hence also of $a(\hat x)$, often fails in our experience for networks with more than 20 variables on common laptops\cite{feliu-bioinfo}. However, this clearly depends on the sparsity of the matrix $M(x)$, that is, on the number and order of the reactions. Strategies to reduce the complexity of the computation by expanding the determinant along the non-symbolic rows (conservation relations) were inspected in \cite{feliu-bioinfo}. 
\item 
Positive parameterizations: The worst case scenario involves checking $\sum_{i=1}^d \binom{n}{i}$ different sets of variables, each with at most $d$ variables.
\item Finding the vertices of the Newton polytope can be done with existing symbolic software, for example  Polymake \cite{polymake} or Maple, as we demonstrate in the \SI\!\!. 
\end{enumerate}

We stress that it is always  beneficial to guide the procedure/algorithm whenever possible in the sense that, if something is known for the network, there is no reason to go through many possibilities.

\section*{Discussion}

The main result of this paper, the {procedure} to identify parameter
regions for unique and multiple equilibria, combines Brouwer degree
theory and algebraic geometry.  In particular,  under the assumptions of {Corollary~2}, we show  that there exist stoichiometric compatibility classes with at least two equilibria if, and only if, a certain multivariate polynomial can attain a specific sign. 

Discriminating regions of the parameter space where multistationarity occurs is a hard mathematical problem, theoretically addressable by   computationally expensive means \cite{rob-024}. Our {approach} beautifully overcomes these difficulties by building on a simple idea,  the computation of the Brouwer degree of a function related to a dissipative network. Additionally, not only closed-form expressions in the parameters are obtained, but, as illustrated in examples, these expressions are often interpretable in biochemical terms, providing an  explanation of \emph{why} multistationarity occurs.

 {The procedure applies theoretically to any choice of algebraic reaction rate functions. However, in practice, the procedure works well with mass-action kinetics. For example, we have considered the two-site phosphorylation cycle depicted in the second row of Table~\ref{tab:cherry}, but now modelled with Michaelis-Menten kinetics instead of mass-action kinetics. This network is known to be multistationary \cite{Hell:2015iq}, and the conditions to apply {Corollary~1 and Corollary~2} are  valid. However, a positive algebraic parameterization does not exist, and hence our approach cannot be used to find parameter conditions for multistationarity. }
 
 However,  {Corollary~1} might  be used with rational reaction rate functions for monostationary networks. This is the case for example for {the one-site phosphorylation cycle $S\cee{<=>} S_p$} with Michaelis-Menten kinetics \cite{Hell:2015iq}. This network has two species and rank one. The sign of $\det(M(x))$ is $-1$ for all parameter values and all $x\in \R^2_{>0}$.  By {Corollary~1}, the network admits exactly one positive equilibrium in every stoichiometric compatibility class $\mP_c$ with $\mP_c^+\neq \emptyset$ for all parameter values.
 
 If a reaction network does not have any conservation relation, then the set of equilibria consists typically of a finite number of points. In this case an algebraic parameterization is an algebraic expression of the equilibria in terms of the parameters of the system. {Since $m=0$, then $\R^m$ consists of a single point and it follows directly that there is a unique equilibrium.  } Such an expression rarely exists. Therefore the procedure applies mainly to reaction networks with conservation relations. In particular, this rules out reaction networks where each species is produced and degraded.

Several natural questions remain outside the reach of our {procedure}. Firstly one would like to determine the particular stoichiometric compatibility classes  for which there are multiple equilibria. As stated in {Corollary~2}, if $\sign(a(\hat x)) =(-1)^{s+1}$, then $c:=W\Phi(\hat x)$ defines a stoichiometric compatibility class  with multiple equilibria. However, this only establishes $c$ indirectly through $\hat x$. {In some situations, it might be possible to find a positive parametrization that uses some of the conservation relations (ideally, all but one) and the stoichiometric compatibility classes with multiple/single equilibria  would be determined up to a single parameter. }

Secondly, one could ask for parameter regions that differentiate between the precise number of equilibria (that is, $0,1,2,\ldots$). This question should be seen in conjunction with the previous question: in typical examples,  when there are two equilibria in a particular stoichiometric compatibility class,  then there exists another class for which there are three. Hence the  number of equilibria cannot be separated from the  stoichiometric compatibility classes.

{A third question concerns the stability of the equilibria, which cannot be obtained from our procedure. It is, however, known that if the sign of the Jacobian evaluated at an equilibrium is $(-1)^{s+1}$, then it is unstable \cite{feliu2012}.  This is in particular the case for an equilibrium fulfilling the condition in {Corollary~2(B).}}

{We have shown that for some reaction networks our procedure can be formulated as an algorithm.}
We consider therefore our research a step in the direction of providing `black box tools' to analyse complex dynamical systems. Such tools would easily find their use in systems and synthetic biology, where it is commonplace to consider (many) competing models. A particular problem is to exclude  models that cannot explain observed qualitative features, such as multistationarity.

\paragraph*{Supporting Information:}
{\bf Proof of mathematical statements and examples.} In this document we first prove the claims  of the main text. 
Next, we provide   details on how to check the steps of the
procedure. 
{
  Finally, we give details of the examples  in
  Table~\ref{tab:cherry} and include an extra example which is a PTM network.
}

\paragraph*{Acknowledgments}
EF and CW acknowledge funding from the Danish Research Council of Independent Research. Alicia Dickenstein, Timo de Wolff and Bernd Sturmfels are thanked for discussions and the idea to use the vertices of the Newton polytope to study the sign of polynomials. {Meritxell Sáez, Amirhossein Sadeghi Manesh, Anne Shiu and Angélica Torres are thanked  for their comments on preliminary versions of the manuscript.}

%%%%%%%%%%% SI 
\newpage

\begin{center}
{\Large \bf Identifying parameter regions for multistationarity

\medskip

}

\bigskip

\bigskip

{\large \bf Supplementary Information }

\bigskip

Carsten Conradi, Elisenda Feliu, Maya Mincheva, Carsten Wiuf
{\large }

 \thispagestyle{plain}

\medskip
\today

\end{center}
\bigskip
In this document we prove the claims  of the main text. 

Sections \ref{sec:notation} to \ref{sec:multi} focus on the proofs of the {theorem and its corollaries}  in the main text.
We start by  introducing some preliminaries before recapitulating  the main facts about Brouwer degree theory. Then we compute the Brouwer degree for a special class of functions (Theorem~\ref{theo:deg_g_forward}). We proceed to introduce the necessary background on  reaction networks and to state and prove a key result regarding the Brouwer degree of a reaction network with a dissipative semiflow (Theorem~\ref{thm:dissipative}). In Section~\ref{sec:multi} we use Theorem~\ref{thm:dissipative} to prove Theorem 1 of the main text.   The {first four sections of the document are self-contained and  do} not require parallel reading of the main text. For this reason some parts of the main text are repeated here for convenience.

Subsequently in Section~\ref{sec:steps}, we provide details on how to check the steps in the procedure of the main text. 
In Section~\ref{sec:mainex}  we give details of the examples  in the main text {and apply the algorithm to an extra network that is monostationary.}

\tableofcontents

\section{Preliminaries}
\label{sec:notation}

\subsection{Convex sets}
We let $\R^n_{\geq 0}$ denote the non-negative orthant of $\R^n$ and $\R^n_{> 0}$ denote the positive orthant of $\R^n$.

For a subset $B$  of $\R^n$, we let $\bd(B)$ denote the boundary of $B$ and $\cl(B)$ the closure of $B$, such that $\cl(B) =\bd(B) \cup B$.  If $B$ is open, then $\bd(B)\cap B=\emptyset$. If $B$ is bounded, then $\cl(B)$ is compact.

A set $B$ is \emph{convex} if the following holds:
$$ \mbox{if}  \quad x_1, x_2 \in B  \quad\mbox{then}  \quad \lambda x_1 + (1-\lambda) x_2 \in  B \quad\mbox{for all}  \quad 0\leq \lambda \leq 1.$$ 

Let $B\subseteq \R^n$ be a convex set. We say that $v\in \R^n$ \emph{points inwards} $B$ at $x\in \bd(B)$ if    $x+\epsilon v\in \cl(B)$ for all $\epsilon>0$ small enough. In particular, $v=0$ points inwards $B$ at all $x\in \bd(B)$. If $v$ points inwards $B$ at $x\in \bd(B)$, then it also points inwards $\cl(B)$ at  $x\in \bd(B)$.
The vector $v$ \emph{points outwards} $B$ at $x\in \bd(B)$, if it does not point inwards $B$ at $x\in \bd(B)$.

We will use the following  facts about convex sets. 

\begin{lemma}\label{lem:inwards}
Let $B\subseteq\R^n$ be a convex set. Then the following holds:
\begin{itemize}
\item[(i)]  The closure $\cl(B)$ of $B$ is convex.
\item[(ii)]   \label{fact2}
Assume $B$ is  open and consider $x_1\in B$, $x_2\in\bd(B)$. Let 
$$[x_1,x_2)=\{tx_1+(1-t)x_2 \st 0< t\le 1\}$$
be the half-closed line segment between $x_1$ and $x_2$. Then 
$[x_1,x_2)\in B$. 
\item[(iii)] \label{fact3}
Let $x_1\in B$ and $x_2\in\bd(B)$. Then the vector $x_1-x_2$ points inwards $B$ at $x_2$. If $B$ is open, then the vector $x_2-x_1$ points outwards  $B$ at $x_2$.
\end{itemize} 
\end{lemma}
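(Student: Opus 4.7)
The plan is to handle (i), (ii), (iii) in order, leveraging each for the next.

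For (i), I would take $x_1, x_2 \in \cl(B)$, pick sequences $x_1^{(n)}, x_2^{(n)} \in B$ converging to them, and for $\lambda \in [0,1]$ note that $\lambda x_1^{(n)} + (1-\lambda)x_2^{(n)} \in B$ by convexity of $B$. Passing to the limit shows $\lambda x_1 + (1-\lambda)x_2 \in \cl(B)$. This is routine.

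For (ii), the case $t=1$ is immediate since $y = x_1 \in B$. For $0 < t < 1$, I would set $y = tx_1 + (1-t)x_2$ and, using openness of $B$, fix $r > 0$ with $B_r(x_1) \subset B$. The key observation is that every $w \in B_{tr}(y)$ can be rewritten as $w = tx_1' + (1-t)x_2$ with $x_1' := x_1 + (w - y)/t \in B_r(x_1) \subset B$. Picking a sequence $x_2^{(n)} \in B$ with $x_2^{(n)} \to x_2$ (possible since $x_2 \in \bd(B) \subset \cl(B)$), convexity of $B$ gives $tx_1' + (1-t)x_2^{(n)} \in B$, and this converges to $w$, so $w \in \cl(B)$. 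Thus $B_{tr}(y) \subset \cl(B)$. The main obstacle is concluding $y \in B$ rather than merely $y \in \cl(B)$: I would invoke the standard fact for a convex set with nonempty interior that $\intt(\cl(B)) = \intt(B)$, which since $B$ is open gives $\intt(\cl(B)) = B$. Since $B_{tr}(y)$ is an open subset of $\cl(B)$, it lies in $\intt(\cl(B)) = B$, and in particular $y \in B$.

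For (iii), the first statement is a direct consequence of (i): for $0 < \epsilon \le 1$, the point $x_2 + \epsilon(x_1 - x_2) = \epsilon x_1 + (1-\epsilon)x_2$ is a convex combination of points of $\cl(B)$, hence lies in $\cl(B)$ by (i). For the second statement, with $B$ open, I would argue by contradiction: suppose $y_\epsilon := x_2 + \epsilon(x_2 - x_1) \in \cl(B)$ for some small $\epsilon > 0$. Rearranging,
\[
x_2 = \tfrac{1}{1+\epsilon}\,y_\epsilon + \tfrac{\epsilon}{1+\epsilon}\,x_1,
\]
so $x_2$ lies on the open segment from $x_1 \in B$ to $y_\epsilon \in \cl(B)$. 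Applying (ii) with $y_\epsilon$ in place of $x_2$ (possibly after noting $y_\epsilon \in B$ gives the conclusion by convexity directly, while $y_\epsilon \in \bd(B)$ gives it by (ii)) forces $x_2 \in B$, contradicting $x_2 \in \bd(B)$ and $B$ open. Hence $y_\epsilon \notin \cl(B)$ for any $\epsilon > 0$, so $x_2 - x_1$ does not point inwards, i.e., it points outwards.

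The main obstacle is the topological step in (ii) promoting $B_{tr}(y) \subset \cl(B)$ to $B_{tr}(y) \subset B$; once (ii) is in hand, (iii) follows cleanly, and (i) is the easiest piece.
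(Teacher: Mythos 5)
Your proposal is correct, but it takes a more self-contained route than the paper, which simply cites Rockafellar (Theorems 6.2 and 6.1) for parts (i) and (ii) and only argues (iii) directly. Your (i) is the standard limiting argument and is fine. For (ii), your decomposition $w = t x_1' + (1-t)x_2$ with $x_1'\in B_r(x_1)$ correctly yields $B_{tr}(y)\subseteq \cl(B)$, but the final promotion to $y\in B$ via $\intt(\cl(B))=\intt(B)$ is the one place I would push back: that identity is exactly the convexity fact whose textbook proof goes through the line-segment principle you are in the middle of proving, so as written the argument leans on a result that is usually derived from the conclusion. It is not wrong (the identity can be proved independently), but you can avoid the detour entirely: take $x_2^{(n)}\in B$ with $x_2^{(n)}\to x_2$ and set $x_1^{(n)} := x_1 + \tfrac{1-t}{t}\bigl(x_2 - x_2^{(n)}\bigr)$; for $n$ large, $x_1^{(n)}\in B_r(x_1)\subseteq B$, and $y = t x_1^{(n)} + (1-t)x_2^{(n)}\in B$ directly by convexity. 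Your (iii) is essentially the paper's argument: the inward statement is the same convex-combination observation, and your contradiction for the outward statement, writing $x_2 = \tfrac{1}{1+\epsilon}y_\epsilon + \tfrac{\epsilon}{1+\epsilon}x_1$ and applying (ii) (or convexity, according to whether $y_\epsilon$ lies in $B$ or on $\bd(B)$), is a slightly streamlined version of the paper's midpoint argument and in fact proves the stronger conclusion that $y_\epsilon\notin\cl(B)$ for every $\epsilon>0$.
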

\begin{proof} (i) See Theorem 6.2 in \cite{rockafellar}. (ii) See Theorems 6.1 in \cite{rockafellar}. (iii) Consider $x=x_2+\epsilon(x_1-x_2)=(1-\epsilon)x_2+\epsilon x_1$ with $0<\epsilon<1$. By convexity, $x$ belongs to $\cl(B)$, hence $x_1-x_2$ points inwards $B$ at $x_2\in\bd(B)$. Assume that $x_2-x_1$ also points inwards $B$ at $x_2$ and that $B$ is open. Then, for small $\epsilon$ we have $x=x_2+\epsilon(x_1-x_2) \in B$ by (ii) (which is stronger than $x\in \cl(B)$), and  $x'=x_2+\epsilon(x_2-x_1)\in \cl(B)$ by definition of pointing inwards. Again by (ii), 
$\frac{1}{2}x+\frac{1}{2}x'=x_2\in B$, contradicting that $x_2\in \bd(B)$ ($B$ is open). Hence $x_2-x_1$ points outwards $B$ at $x_2$.  \end{proof}

\subsection{Functions} 
Given an open set $B\subseteq \R^n$,  we let $\mC^k(B,\R^m)$ denote the set of $\mC^k$-functions from $B$ to $\R^m$. If $B$ is open and bounded, then we let $\mC^k(\cl(B),\R^m)$ denote the subset of $\mC^k(B,\R^m)$-functions $f$ whose $j$-th derivative $d^jf$, $j=0,\dots,k$, extends continuously to the boundary of $B$. Equivalently,  $d^jf$ is uniformly continuous in $B$ for $j=0,\dots,k$, since $\cl(B)$ is compact.

For $f\in\mC^1(B,\R^n)$ and $x^*\in B$, we let
$J_f(x^*)\in \R^{n\times n}$ be the \emph{Jacobian} of $f$ evaluated at $x^*$, that is, $J_f(x^*)$ is the matrix with  $(i,j)$-entry  $\partial f_i (x^*)/ \partial x_j$. We say that $y\in \R^n$ is a \emph{regular value} for $f$ if $J_f(x)$ is non-singular for all $x\in B$ such that
$y=f(x)$. If this is not the case, then we say that $y$ is a \emph{critical value} for $f$.  

If $B\subseteq \R^n$ is open and bounded, $f\in \mC^1(\cl(B),\R^n)$ and $y$ is a regular value for $f$ such that $y\notin f(\bd(B))$, then the set 
 $$\{x\in B | f(x)=y\}.$$ is finite \cite[Lemma 1.4]{vandervorst}.

\section{Brouwer degree and a theorem}

\subsection{Brouwer degree}
We first recall basic facts about the {\em Brouwer degree}. We refer to Section 14.2 in 
\cite{Teschl} for background and fundamental properties of the Brouwer degree.   
See also the lecture notes by Vandervorst \cite{vandervorst}.

 In this section we let $B\subseteq\R^n$ be an open bounded set.  We use the symbol $\deg(f,B,y)$ to denote  the Brouwer degree (which is an integer number) of a function $f\in\mathcal{C}^0(\cl(B),\R^n)$ with respect to $(B,y)$, $y\in \R^n\setminus f(\bd(B))$. 

A main property of the Brouwer degree is that if $y\notin f(\cl(B))$, then
$\deg(f,B,y)=0$ (but not vice versa) and if $\deg(f,B,y)\neq 0$, then
there exists at least one $x\in B$ such that $y=f(x)$. In particular, the Brouwer degree can be used to study the number of solutions to the equation
\begin{displaymath}
  f(x) = y,\qquad  x \in B,
\end{displaymath}
provided $y\notin f(\bd(B))$ and $f\in \mC^0(\cl(B),\R^n)$. 

The Brouwer degree 
$\deg(f,B,y)$ is characterized by the following properties:
\begin{enumerate}[{(A}1{)}]
\item {\bf Normalization.} Let $\id_B$ denote the identity map from $B$ to itself.  If $y\in B$, then 
    \begin{displaymath}
    \deg(\id_B,B,y)=1.
  \end{displaymath}
  \item {\bf Additivity.} If $B_1$ and $B_2$ are disjoint open subsets
  of $B$ such that $y\notin f\big(\cl(B) \setminus (B_1 \cup B_2)\big)$, then 
  \begin{displaymath}
    \deg(f,B,y) = \deg(f,B_1,y) + \deg(f,B_2,y).
  \end{displaymath}
\item {\bf Homotopy invariance.} Let $f,g\colon \cl(B)\rightarrow \R^n$ be two homotopy equivalent $\mC^0$-functions via a continuous homotopy $H\colon \cl(B)\times [0,1]\rightarrow \R^n$ such that $H(x,0)=f(x)$ and $H(x,1)=g(x)$. If  $y\notin H(\bd(B) \times [0,1])$, then
  \begin{displaymath}
    \deg(f,B,y)=\deg(g,B,y).
  \end{displaymath}
\item {\bf Translation invariance.} $\deg(f,B, y) = \deg(f-y,B,0)$.
\end{enumerate}

To prove our main result (Theorem~\ref{theo:main} below) we need the
following well-known property of the Brouwer degree, see e.g.~\cite[Theorem 14.4]{Teschl}:
 \begin{theorem}
  \label{theo:degree_sum}
  Let $f\in \mC^1(\cl(B),\R^n)$ with $B\subseteq \R^n$ an open  bounded set.
  If $y$ is a regular value for $f$ and $y\notin f(\bd(B))$, then
  \begin{equation}
    \label{eq:sum_formula}
    \deg(f,B,y) = \sum_{\{x\in B \st f(x)=y\} }
    \sign(\det(J_f(x))),
  \end{equation}
 where the sum over an empty set is defined to be zero.
\end{theorem}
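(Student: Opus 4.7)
The plan is to localize the degree computation to small neighborhoods of each preimage and then reduce each local contribution to the degree of an invertible linear map. First, I would argue that the set $S := \{x \in B \mid f(x)=y\}$ is finite. Since $y$ is a regular value, $J_f(x)$ is invertible for every $x \in S$, and the inverse function theorem implies that each such $x$ is isolated in $S$. The hypothesis $y \notin f(\bd(B))$ together with continuity of $f$ and compactness of $\cl(B)$ ensures that $S$ is a compact subset of $B$; a compact set of isolated points is finite. Write $S = \{x_1, \ldots, x_k\}$, possibly with $k=0$.

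Next, using the inverse function theorem again, I would pick pairwise disjoint open balls $B_i \subseteq B$ centered at $x_i$ on which $f$ is a diffeomorphism onto an open neighborhood of $y$. By construction $y \notin f\bigl(\cl(B)\setminus\bigcup_i B_i\bigr)$, so the additivity property (A2) yields
$$\deg(f,B,y) = \sum_{i=1}^k \deg(f,B_i,y),$$
interpreted as $0$ when $k=0$. It then suffices to show that $\deg(f, B_i, y) = \sign(\det J_f(x_i))$ for each $i$.

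For this, I would use the straight-line homotopy between $f|_{\cl(B_i)}$ and its affine approximation $A_i(x) := y + J_f(x_i)(x - x_i)$; after shrinking $B_i$ if necessary, the homotopy avoids the value $y$ on $\bd(B_i)$, so (A3) and translation invariance (A4) reduce the problem to proving
$$\deg(L, B', 0) = \sign(\det L)$$
for every invertible linear map $L$ and every open ball $B'$ around the origin. The group $GL(n,\R)$ has exactly two path components, distinguished by the sign of the determinant, and any continuous path of invertible matrices is a legitimate homotopy because $0$ is never attained on the boundary of $B'$. Hence the degree of $L$ equals that of a canonical representative in its component: the identity $\id$ in $GL^+$, for which normalization (A1) gives $1$; and a reflection $R = \diag(-1, 1, \ldots, 1)$ in $GL^-$.

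The main obstacle will be the anchoring computation $\deg(R, B', 0) = -1$, since the two components of $GL(n,\R)$ cannot be joined through invertible maps and so (A3) alone cannot bridge them. I would handle this by considering a suitable smooth perturbation of $R$ having exactly two regular preimages of $0$ whose Jacobian determinants carry opposite signs, and then applying additivity together with the known value of the degree of the identity; equivalently, one can appeal to the multiplicativity of the Brouwer degree under composition combined with $R\circ R = \id$, which forces $\deg(R,B',0)^2 = 1$ and, ruling out $+1$ by the perturbation argument, yields $-1$. Once this last ingredient is established, the previous reductions assemble to give the stated sum formula.
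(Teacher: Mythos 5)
The paper does not actually prove this statement: it is imported as a ``well-known property'' of the Brouwer degree with a citation to Teschl (Theorem~14.4 there), where in effect the sum formula is the definition of the degree at a regular value and the real work is well-definedness and homotopy invariance. Your route is the complementary, equally standard one: take the axiomatic characterization (A1)--(A4) as given and derive the formula. The skeleton is sound --- finiteness of the fibre, localization by additivity, linearization by the straight-line homotopy on a small enough ball, reduction to invertible linear maps, and the two path components of $GL(n,\R)$ anchored at $\id$ and a reflection $R=\diag(-1,1,\dots,1)$. All of those reductions check out.

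The one genuine defect is the anchoring computation $\deg(R,B',0)=-1$, exactly the step you flag as the main obstacle. A ``perturbation of $R$ having exactly two regular preimages of $0$ whose Jacobian determinants carry opposite signs'' cannot exist: any such perturbation $g$ is homotopic to $R$ without hitting $0$ on $\bd(B')$, so $\deg(g)=\deg(R)$, while your own localization/linearization machinery would give $\deg(g)=1+\deg(R)$ --- a contradiction, so no such $g$ exists and the argument as stated has nothing to run on. (In dimension one this is just the intermediate value theorem: the boundary signs of $-x$ force the signed count of zeros to be $-1$, never $0$.) The fallback via multiplicativity fares no better: multiplicativity is not among the axioms (A1)--(A4) you are allowed to use, and $\deg(R)^2=1$ only narrows the value to $\pm1$, after which you again lean on the broken perturbation step. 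The standard repair is to apply your two-zero idea to a map of known degree zero rather than to a perturbation of $R$: take $g(x)=(x_1^2-1,x_2,\dots,x_n)$ on the ball of radius $2$. The homotopy $g_t(x)=(x_1^2-1+2t,x_2,\dots,x_n)$ avoids $0$ on the boundary and ends at a map with no zeros, so $\deg(g,B',0)=0$; on the other hand $g$ has exactly the two nondegenerate zeros $(\pm1,0,\dots,0)$ with Jacobian determinants $\pm2$, so your localization gives $0=1+\deg(R,B',0)$, i.e.\ $\deg(R,B',0)=-1$. With that substitution the proof is complete.
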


\begin{corollary}\label{cor:odd}
Under the assumptions of Theorem \ref{theo:degree_sum}, assume $\deg(f,B,y) =\pm 1$.
Then the equation {$f(x)=y$}
has at least one solution  $x \in B$ and the number of solutions  in  $B$ is  odd.
\end{corollary}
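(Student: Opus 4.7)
The plan is to combine the sum formula of Theorem~\ref{theo:degree_sum} with an elementary parity argument. First I would observe that since $\deg(f,B,y)=\pm 1\neq 0$, the basic non-vanishing property of the Brouwer degree (recalled earlier in the section) immediately yields at least one $x\in B$ with $f(x)=y$, settling the existence claim.

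Next I would invoke the fact, stated in the ``Functions'' subsection, that under the hypotheses (namely $f\in\mC^1(\cl(B),\R^n)$, $y$ a regular value, and $y\notin f(\bd(B))$), the solution set $S:=\{x\in B\mid f(x)=y\}$ is finite. Let $k=|S|$, and partition $S=S_+\cup S_-$ according to whether $\sign(\det(J_f(x)))$ equals $+1$ or $-1$; these are the only possibilities since regularity of $y$ forces $\det(J_f(x))\neq 0$ for $x\in S$. Writing $k_{\pm}=|S_{\pm}|$, Theorem~\ref{theo:degree_sum} gives
\begin{equation*}
  \pm 1 = \deg(f,B,y) = k_+ - k_-.
\end{equation*}

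The final step is arithmetic: from $k_+ + k_- = k$ and $k_+ - k_- = \pm 1$ we get $2k_+ = k \pm 1$, so $k \pm 1$ is even, hence $k$ is odd. In particular $k\geq 1$, giving a second route to existence. There is no real obstacle here; the only subtlety worth stating explicitly is why $S$ is finite and why every $\sign(\det(J_f(x)))$ is $\pm 1$ (rather than $0$), both of which follow directly from the regular-value hypothesis already recorded in the preliminaries. The proof is therefore essentially a two-line consequence of Theorem~\ref{theo:degree_sum}.
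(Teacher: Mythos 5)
Your argument is correct and is exactly the intended one: the paper states this corollary as an immediate consequence of Theorem~\ref{theo:degree_sum} without further proof, and your parity argument ($k_+ - k_- = \pm 1$ together with $k = k_+ + k_-$ forces $k$ odd, hence $k\geq 1$) is precisely the reasoning that justifies it. The finiteness of the solution set and the non-vanishing of each $\det(J_f(x))$ are, as you note, supplied by the regular-value hypothesis already recorded in the preliminaries.
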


\subsection{The Brouwer degree for a special class of functions}

In this section we  use Theorem~\ref{theo:degree_sum}
and the homotopy invariance of the Brouwer degree (A3) to compute the Brouwer degree
of certain functions. 
  Specifically, we are concerned with  $\mC^1$-functions 
 \begin{equation}\label{eq:f}
f\colon \R^n_{\ge 0}\to\R^n,
 \end{equation} 
and matrices $W\in \R^{d\times n}$  of maximal rank $d$. A priori there is no relationship between $f$ and  $W$.

Assume that $W$ is row reduced and let $i_1,\dots,i_d$ be the indices of the first non-zero coordinate of each row, $i_1<\ldots <i_d$.
Let $c\in  \R^d$   
and define the $\mC^1$-function 
$$ {\varphi_c} \colon \R^n_{\geq 0} \rightarrow \R^n$$
by
\begin{equation}\label{eq:phic}
\varphi_c(x)_i =  \begin{cases}  f_i(x) & i\notin \{i_1,\dots,i_d\} \\
(Wx-c)_i  & i\in \{i_1,\dots,i_d\}.
\end{cases}
\end{equation}
We say that $\varphi_c$ is constructed from $f$ and $W$.  The dependence of  $\varphi_c$  on $f$ and $W$ is omitted in the notation. We will make use of this construction with different choices of $f$ and $W$.

 Define the positive closed and open level sets of $W$ by
\begin{align}
\mP_c &= \{ x\in \R^n_{\geq 0} \st Wx=c\}\label{eq:stoich},  &
\mP_c^+ &= \{ x\in \R^n_{>0} \st Wx=c\}.
\end{align}
It follows readily that the two set are convex. By reordering the columns of $W$, the vector $(x_1,\dots,x_n)$ and the coordinates of $f$ simultaneously, if necessary, we can assume without loss of generality that $\{i_1,\dots,i_d\} = \{1,\dots,d\}$. In this case, $W$ has the block form
\begin{equation}\label{eq:Wblock} 
W = ( I_d \quad \widehat{W} ),
\end{equation}
where $\widehat{W} \in \R^{d\times s}$, $s:=n-d$, and $I_d$ is the identity matrix of size $d$. The last $s$ coordinates of the function $\varphi_c$ come from $f$.

Assuming this reordering,
 let $\pi\colon \R^n \rightarrow \R^s$ be the projection onto the last $s$ coordinates. Using \eqref{eq:Wblock}, it follows that 
\begin{equation}\label{eq:Wc}
Wx=c \quad\text{if and only if }\quad (x_{1},\dots,x_d)^T  = c- \widehat{W}(\pi(x)). 
\end{equation}
In particular, for $x,y\in \R^n$ fulfilling $Wx=Wy$, we have that
\begin{equation}\label{eq:Wc2}
x=y \quad\text{if and only if }\quad  \pi(x)=\pi(y). 
\end{equation}
If $Wf(x)=0$, then it follows from \eqref{eq:Wc2} that  $f(x)=0$ if and only if $\pi(f(x))=0$.
 
 Our first result concerns the Brouwer degree of $\varphi_c$. 
The proof of the  theorem is adapted from the proof of
Lemma~2 in \cite{mono-018} in order to account for the reduction in dimension introduced by $\mP_c$.

\begin{theorem}
  \label{theo:deg_g_forward}
 Let  $f\colon \R^n_{\ge 0}\to\R^m$  be a $\mC^1$-function and $W\in \R^{d\times n}$  a matrix of rank $d$.  Let $s:=n-d$,  $c\in \R^d$, $\mP_c$ as in \eqref{eq:stoich}  and $\varphi_c$ as in \eqref{eq:phic}.
 Let $B_c$ be an open, bounded  and convex subset of $\R^n_{> 0}$ such that 
 \begin{itemize}
\item[(i)] $B_c\cap \mP_c\neq \emptyset$.
\item[(ii)] $f(x)\not= 0$ and $Wf(x)=0$  for  $x\in\bd(B_c)\cap \mP_c$.
\item[(iii)] for every $x\in \bd(B_c) \cap \mP_c$, the  vector $f(x)$ points inwards 
$B_c$ 
at $x$.
  \end{itemize}
 Then
  \begin{displaymath}
    \deg(\varphi_c,B_c,0)=(-1)^s.
  \end{displaymath}
\end{theorem}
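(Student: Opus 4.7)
The plan is to use the homotopy invariance of the Brouwer degree (A3) to deform $\varphi_c$ into a simple affine map whose degree can be computed directly via Theorem~\ref{theo:degree_sum}. Throughout, I assume without loss of generality, after a reordering of coordinates, that $W$ has the block form $(I_d\;\widehat{W})$ of Eq.~\eqref{eq:Wblock}, so that $\{i_1,\dots,i_d\}=\{1,\dots,d\}$.

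First, by hypothesis (i) we may choose a point $x^*\in B_c\cap \mP_c$. I would introduce the comparison map
\begin{displaymath}
  \psi(x)_i=\begin{cases}(Wx-c)_i & i\in\{1,\dots,d\},\\ -(x_i-x_i^*) & i\in\{d+1,\dots,n\},\end{cases}
\end{displaymath}
and the straight-line homotopy $H\colon \cl(B_c)\times[0,1]\to\R^n$ between $\varphi_c$ and $\psi$, which leaves the first $d$ coordinates fixed and sets the last $s$ to $(1-t)f_i(x)-t(x_i-x_i^*)$.

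The core of the proof is to verify $0\notin H(\bd(B_c)\times[0,1])$. Suppose $H(x,t)=0$ for some $x\in\bd(B_c)$. The first $d$ equations force $Wx=c$, so $x\in\bd(B_c)\cap \mP_c$. I would then split into three cases. For $t=0$, one has $\pi(f(x))=0$, which together with $Wf(x)=0$ from hypothesis (ii) and Eq.~\eqref{eq:Wc2} yields $f(x)=0$, contradicting the other half of (ii). For $t=1$, one has $\pi(x)=\pi(x^*)$; combined with $Wx=Wx^*=c$ and Eq.~\eqref{eq:Wc2} this gives $x=x^*\in B_c$, contradicting $x\in\bd(B_c)$ since $B_c$ is open. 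For $0<t<1$, setting $\lambda=t/(1-t)>0$, the last $s$ equations give $\pi(f(x))=\lambda\pi(x-x^*)$; using $Wf(x)=0=W(x-x^*)$ and Eq.~\eqref{eq:Wc2} again, I conclude $f(x)=\lambda(x-x^*)$. By Lemma~\ref{lem:inwards}(iii), $x-x^*$ points outwards $B_c$ at $x$, and so does $\lambda(x-x^*)=f(x)$ since $\lambda>0$; this contradicts hypothesis (iii).

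By homotopy invariance, $\deg(\varphi_c,B_c,0)=\deg(\psi,B_c,0)$. To compute the latter, note that $\psi(x)=0$ forces $Wx=c$ and $\pi(x)=\pi(x^*)$, which by the block form of $W$ uniquely determines $x=x^*$; thus $\psi^{-1}(0)\cap B_c=\{x^*\}$. The Jacobian of $\psi$ is the block matrix
\begin{displaymath}
  J_\psi=\begin{pmatrix} I_d & \widehat{W}\\ 0 & -I_s\end{pmatrix},
\end{displaymath}
with determinant $(-1)^s\neq 0$, so $0$ is a regular value. Theorem~\ref{theo:degree_sum} then gives $\deg(\psi,B_c,0)=\sign((-1)^s)=(-1)^s$, completing the argument.

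The main obstacle is the case $0<t<1$ in the verification that the homotopy avoids $0$ on $\bd(B_c)$: it is precisely here that convexity of $B_c$, the relation $Wf=0$ on the boundary, and the inward-pointing condition on $f$ must be combined through Lemma~\ref{lem:inwards}(iii). The cases $t=0$ and $t=1$ are essentially unpacking (ii) and the openness of $B_c$, and the final degree computation is purely linear algebra.
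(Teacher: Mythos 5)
Your proposal is correct and follows essentially the same route as the paper's proof: the same comparison map $G(x)=(Wx-c,\pi(\bar x-x))$ (your $\psi$), the same straight-line homotopy (with the parameter direction reversed), the same three-case analysis on the boundary using Eq.~\eqref{eq:Wc2} and Lemma~\ref{lem:inwards}(iii) to derive a contradiction with the inward-pointing hypothesis, and the same block-triangular Jacobian computation via Theorem~\ref{theo:degree_sum}. No gaps.
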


\begin{proof}
Without loss of generality, we might assume that $W$ has the block form in \eqref{eq:Wblock}.
Choose an arbitrary point $\bar{x}\in B_c\cap \mP_c$, which exists by assumption (i), and consider the continuous function $G\colon
  \cl(B_c) \rightarrow \R^n$ defined by
  $$G(x)=(Wx - c, \pi(\bar{x}-x)) \in \R^d\times \R^s \cong \R^n,$$
  where  $\pi$ is  the projection map onto the last $s$ coordinates of $\R^n$.
By \eqref{eq:Wblock}, the Jacobian of $G$ has the block form
    $$ J_G(x) = \begin{pmatrix}  I_d &  \widehat{W}  \\ 0 & -I_s  \end{pmatrix}.  $$
 Therefore, $\det(J_G(x))= (-1)^s$ for all $x$. In particular, $0$ is a regular value for $G$. 
 Furthermore, if $G(x)=0$, then $x\in \mP_c$ since $Wx=c$ and $\pi(\bar{x})=\pi(x)$. Using \eqref{eq:Wc2}, we conclude that $\bar{x}=x$. Since $\bar{x} \notin \bd(B_c)$, it follows that $G$ does not vanish on the boundary. 
   We apply Theorem~\ref{theo:degree_sum} to compute the degree of $G$ for $0$:
  $$\deg(G,B_c,0)= \sign(\det(J_G(\bar{x})))= (-1)^s.$$

Consider now the  following homotopy between the functions $\varphi_c$ and $G$:
  \begin{eqnarray*}
    H \colon \cl(B_c) \times [0,1] & \rightarrow   & \R^n \\ 
    (x,t) & \mapsto &  t  \varphi_c(x) + (1-t) G(x). 
  \end{eqnarray*}
  Clearly, $H$ is continuous.  To apply (A3) to find the degree of $\varphi_c$, we need to show that
  $H(\bd(B_c) \times [0,1])\neq 0$ for all $t\in [0,1]$. Since
  $$ H(x,t)= (Wx-c,t\pi(f(x))+(1-t) \pi(\bar{x}-x)),$$
  $H(x,t)=0$ implies that $Wx=c$ and hence $x\in \mP_c$. Thus, we need to show that 
\begin{equation}\label{eq:tpi}
t\pi(f(x))+(1-t) \pi(\bar{x}-x)\neq 0\quad \textrm{for all }\quad x\in \bd(B_c)\cap \mP_c.
\end{equation}
For $t=1$, \eqref{eq:tpi} follows from \eqref{eq:Wc2} using that $f(x)\neq 0$ and $Wf(x)=0$ for $x \in \bd(B_c)\cap \mP_c$ by assumption (ii). 
For $t=0$, we have already shown that $G$ does not vanish on the boundary of $B_c$.

Assume now that for $t\in (0,1)$, \eqref{eq:tpi} does not hold. That is, there exists $x' \in \bd(B_c)\cap \mP_c$ such that    \begin{equation*}
 \pi( f(x')) = \frac{t-1}{t} \pi(\bar{x}-x').
  \end{equation*}
 Since $x' \in \bd(B_c)\cap \mP_c$, we have that  $Wf(x')=0$ and $W(\bar{x}-x')=0$. We conclude using  \eqref{eq:Wc2} that 
  \begin{equation}\label{eq:H} 
  f(x')=  \frac{t-1}{t} ( \bar{x} - x'). 
    \end{equation}
 Since $\frac{t-1}{t}<0$,
  $\bar{x}\in B_c$ and $x' \in \bd(B_c)$, it follows from Lemma~\ref{lem:inwards}(iii) that $f(x')$ points outwards  $B_c$ at  $x'$,  contradicting assumption (iii).

 Therefore,  $H(x,t)\neq 0$ for all  $x\in \bd(B_c)$ and $t\in [0,1]$.  As a consequence, the homotopy invariance of the Brouwer degree (A3), gives the desired result
  \begin{displaymath}
    \deg (\varphi_c,B_c,0)=\deg(G,B_c,0)=(-1)^s.
\vspace{-0.5cm}
  \end{displaymath}
\end{proof}

\section{Chemical reaction networks}

\subsection{Setting}
Consider a chemical reaction network with species set   $\{X_1,\dots,X_n\}$ and reactions: 
\begin{equation}\label{eq:reaction}
R_j\colon \sum_{i=1}^n \alpha_{ij} X_i \rightarrow \sum_{i=1}^n\beta_{ij} X_i, \qquad j=1,\dots,\ell,
\end{equation} 
where $\alpha_{ij},\beta_{ij}$ are non-negative integers.  {The left hand side is called the reactant complex and the right hand side the product complex.}

The ODE system associated with the chemical reaction network $G$  (as described in the main text) takes the form
\begin{equation}\label{eq:ode}
\dot{x}=f(x)=Nv(x),\quad f\colon \R^n_{\ge 0}\to\R^n,
\end{equation}
where $N\in \R^{n\times \ell}$ is the stoichiometric matrix and $v(x)$ is the vector of rate functions, which are assumed to be $\mC^1$-functions (e.g. mass-action monomials).

We say that the network has rank $s$ if  the rank of the stoichiometric matrix is $s$ {and define $d=n-s$ to be the corank of the network}. 
The stoichiometric compatibility classes are the convex sets $\mP_c$ defined in \eqref{eq:stoich}, where $W$ is a matrix such that the rows form a basis of $\im(N)^\perp$.  By construction,  a trajectory of \eqref{eq:ode} is confined to the stoichiometric compatibility class where its initial condition belongs to.
The positive stoichiometric compatibility classes $\mP_c^+$ are defined accordingly.

The positive solutions to the system of equations $\varphi_c(x)=0$ with $\varphi_c$ as in \eqref{eq:phic}, are precisely the positive equilibria of the network in the stoichiometric compatibility class $\mP_c$.

Let $\phi(x,t)$ denote the flow of the ODE system and let the semiflow of the ODE system be the restriction of the flow to  $t\ge 0$.
It is assumed that the choice of rate functions $v(x)$ is such that 
\begin{equation}\label{eq:assumption}
v_j(x)=0 \quad  \textrm{if} \quad x_{i}=0 \quad \textrm{ for some }i \textrm{ with }\alpha_{ij}>0.
\end{equation}
In particular,   mass-action kinetics fulfil this condition.
Under this assumption, the non-negative and the positive  orthants, $\R^n_{\geq 0}$ and $\R^n_{> 0}$, are forward invariant under the ODE system \eqref{eq:ode}, cf.\, \cite[Section 16]{Amann}. That is, if $x_0\in \R^n_{\geq 0}$ (resp. $\R^n_{> 0}$), then the solution to the ODE system \eqref{eq:ode} with initial condition $x_0$ is confined to $\R^n_{\geq 0}$ (resp. $\R^n_{> 0}$):
\begin{equation}
  \label{eq:forward_invariance}
  x_0 \in  \R^n_{\geq 0}  \Rightarrow \phi(x_0,t)\in \R^n_{\geq 0} ,\qquad \forall t \geq  0 \quad \textrm{ in the interval of definition}.
\end{equation}
Forward invariance implies that the semiflow $\phi(x,t)$ 
maps $\R^n_{\geq 0} $ to itself for any fixed $t\geq 0$ for which the solution is defined.

Since the dynamics is confined to the stoichiometric compatibility classes, this implies that for a point $x'$ at the relative boundary of $ \mP_c$, the vector $f(x')$ points inwards $\mP_c$. Further,  both $\mP_c$ and $\mP_c^+$ are also forward invariant sets. Recall that these are convex sets.

\subsection{Conservative and dissipative networks}
 
\begin{definition}A chemical reaction network is conservative if $\im(N)^\perp$ contains a positive vector, that is, if $\R^n_{>0} \cap \im(N)^\perp\neq \emptyset$.
\end{definition}
A network is \emph{conservative} if and only if  
 the stoichiometric compatibility classes $\mP_c$ are compact  subsets of $\R^n_{\geq 0}$  \cite{benisrael}.

 \begin{definition}\label{def:Kc}
 Consider a network with associated ODE system $\dot{x}=Nv(x)$. 
The semiflow of the network  is  \emph{dissipative} if  for all $c\in  \R^d$ such that $\mP_c^+\neq \emptyset$, there exists a compact set $K_c\subseteq \mP_c$  such that $\phi(x,t)\in K_c$ for all $x\in \mP_c$ and $t\ge t(x)$, for some $t(x)\geq 0$. That is, all trajectories in $\mP_c$ enter $K_c$ at some point.
  \end{definition}

The set $K_c$ is called \emph{attracting} (and sometimes \emph{absorbing})  \cite{hofbauer}.
  Equivalently, the semiflow of a network is dissipative if all trajectories are \emph{eventually uniformly bounded}, that is, there exists a constant $k>0$ such that 
 $$ \limsup_{t\rightarrow +\infty}  x_i(t) \leq k$$
 for all $i=1,\dots,n$ and all initial conditions in $\mP_c$, provided that  $\mP_c^+\neq \emptyset$ for $c$ arbitrary.

 If the semiflow of the network is  dissipative, then the unique solution to the ODE system \eqref{eq:ode} for a given initial condition is defined for all $t\geq 0$, in which case the semiflow is said to be forward complete.

\begin{lemma}\label{lem:dis}
Consider a network with a dissipative semiflow and let $c\in\R^d$ such that $\mP_c^+\neq\emptyset$. 
   Then the following holds:
\begin{enumerate}[(i)]
\item  {An attracting set $K_c$ can be chosen such that $K_c\cap \R^n_{>0}$ is non-empty, that is,  $K_c\not\subseteq \bd(\R_{\ge 0}^n)$.}
\item All $\omega$-limit points in $\mP_c$ of the system are contained in $K_c$.  In particular, all positive equilibria in  $\mP_c$ belong to $K_c$. 
\item There exists an attracting set $K_c'$ such that $K_c\subseteq K_c'$,  $K_c'$ is forward invariant and all $\omega$-limit points outside the boundary of $\R^n_{\geq 0}$ are in the interior of $K_c'$ (relatively to $\mP_c$).
\end{enumerate}
\end{lemma}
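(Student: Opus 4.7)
The plan is to dispatch the three parts in order, using (a) the forward invariance of $\R^n_{>0}$ and $\mP_c$ established in the setup, (b) the equivalent formulation of dissipativity via eventual uniform boundedness, and (c) elementary $\omega$-limit set arguments. The only real subtlety lies in (iii), whose obstacle is the compactness of the enlarged set.

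For (i), I would simply pick any $x_0\in \mP_c^+$, which exists by hypothesis. The forward invariance of $\R^n_{>0}$ recorded in Eq.~\eqref{eq:forward_invariance} gives $\phi(x_0,t)\in \R^n_{>0}$ for all $t\ge 0$, and by dissipativity $\phi(x_0,t)\in K_c$ for $t\ge t(x_0)$. Therefore $\phi(x_0,t(x_0))\in K_c\cap \R^n_{>0}$, so the $K_c$ supplied by the definition already satisfies the stated property and no enlargement is needed. For (ii), let $y\in\mP_c$ be an $\omega$-limit point of some $x\in \mP_c$, witnessed by $t_n\to\infty$ with $\phi(x,t_n)\to y$. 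Eventually $t_n\ge t(x)$, so $\phi(x,t_n)\in K_c$, and since $K_c$ is compact (hence closed) the limit $y$ lies in $K_c$. A positive equilibrium $x^\ast$ satisfies $\phi(x^\ast,t)=x^\ast$, so it is its own $\omega$-limit point and therefore lies in $K_c$.

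For (iii), the plan is to inflate $K_c$ to a forward invariant compact set whose relative interior in $\mP_c$ contains the positive $\omega$-limit points. The natural candidate is to choose a bounded open (relative) neighborhood $U$ of $K_c$ in $\mP_c$ with compact closure $\overline{U}$, and set
$$ K_c' := \overline{\bigcup_{t\ge 0}\phi(\overline{U},t)}. $$
By construction $K_c'$ is forward invariant, contains $K_c\subseteq \overline{U}$, and inherits the attracting property from $K_c$. By (ii), the positive $\omega$-limit points lie in $K_c\cap \R^n_{>0}\subseteq U\cap \R^n_{>0}$, which is relatively open in $\mP_c$ and contained in $K_c'$, placing them in its relative interior.

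The hard part will be verifying that $K_c'$ is actually compact. The pointwise dissipativity condition provides only an entry time $t(x)$ that may a priori depend on $x$ in an uncontrolled way, so trajectories starting in $\overline{U}$ could in principle wander arbitrarily far before being captured by $K_c$. To close this gap I would exploit the \emph{eventually uniformly bounded} reformulation of dissipativity recalled in this section: combined with continuity of the semiflow on $\overline{U}\times[0,T]$ for each finite $T$, it should yield a uniform norm bound on $\bigcup_{t\ge 0}\phi(\overline{U},t)$, after shrinking $U$ if necessary, from which compactness of $K_c'$ in the finite-dimensional ambient space follows. Forward invariance, attractivity and the interior-point property then follow from the construction as indicated above.
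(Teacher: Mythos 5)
Parts (i) and (ii) are correct. For (ii) you give the direct version of the paper's contradiction argument; for (i) your argument is actually different from (and slightly stronger than) the paper's: the paper simply enlarges a given attracting set by a positive point, whereas you observe that \emph{any} attracting set must already meet $\R^n_{>0}$, because a trajectory started at some $x_0\in\mP_c^+$ stays in the open orthant by forward invariance and must eventually enter $K_c$ by dissipativity. Both are valid; yours avoids the enlargement.

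Part (iii) is where the gap is. Your construction $K_c'=\cl\bigl(\bigcup_{t\ge 0}\phi(\overline{U},t)\bigr)$ is exactly the one the paper outsources to the first part of the proof of Lemma~2 in Hofbauer's index-theorem paper, and the forward invariance, attractivity and relative-interior claims do follow as you say. But the compactness of $K_c'$, which you correctly identify as the crux, is not established by the route you propose. The \emph{eventually uniformly bounded} reformulation only controls $\limsup_{t\to\infty}x_i(t)$; it says nothing uniform about the transient excursions of trajectories started in $\overline{U}$ before they are captured, and continuity of $\phi$ on $\overline{U}\times[0,T]$ only helps once you already have a single finite $T$ that works for all of $\overline{U}$. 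The missing step is precisely the production of such a uniform time: for each $x\in\overline{U}$ pick $t(x)$ with $\phi(x,t(x))\in U$ (possible since the orbit enters $K_c\subseteq U$), use continuity of $\phi(\cdot,t(x))$ to get a relatively open $V_x\ni x$ with $\phi(V_x,t(x))\subseteq U$, extract a finite subcover of the compact set $\overline{U}$, and let $T$ be the maximum of the finitely many $t(x_i)$. Then every orbit starting in $\overline{U}$ returns to $\overline{U}$ within time $T$, so by induction $\bigcup_{t\ge 0}\phi(\overline{U},t)=\bigcup_{0\le t\le T}\phi(\overline{U},t)$, which is compact as the continuous image of $\overline{U}\times[0,T]$. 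Without this finite-subcover argument (or an explicit appeal to Hofbauer, as the paper makes), the phrase ``it should yield a uniform norm bound'' leaves the decisive step unproved.
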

\begin{proof}
(i) {Consider an attracting set  $K''_c\subseteq \mP_c$ and assume that $K''_c\subseteq \bd(\R_{\ge 0}^n)$. Since $\mP_c^+\neq \emptyset$, there exists a compact set $K_c\subseteq \mP_c$ that includes $K''_c$ and such that $K_c\cap \R^n_{>0}$ is non-empty. This set is also an attracting set. }

(ii) If it  were not the case, there would exist an $\omega$-limit point $x'\in \mP_c\setminus K_c$, a trajectory $\phi(x,t)$ and a sequence of time points $t_i$ such that $\lim_{i\rightarrow \infty} t_i=\infty$     and $\lim_{i\rightarrow \infty} \phi(x,t_i) = x'$.   As $K_c$ is closed, there exists an open ball $B_\epsilon(x')$ in $\R^n$ such that $B_\epsilon(x')\cap K_c=\emptyset$ and $\phi(x,t)\in B_\epsilon(x')$ for arbitrary many time points. However, this contradicts that $K_c$ is an attracting set.

(iii) 
By (ii) and choosing $K_c$ potentially larger, all $\omega$-limit points outside the boundary of $\R^n_{\geq 0}$ are in the interior of $K_c$ (relatively to $\mP_c$).
The existence of an attracting set $K_c'$ that includes $K_c$ and is forward-invariant is proven in the first part of the proof of Lemma 2 in \cite{hofbauer}. In the notation of  \cite{hofbauer}, $K_c'=K^+$. 
\end{proof}

 The semiflow of a conservative network is   dissipative.  Indeed, it is sufficient to take $K_c =\mP_c$, since $\mP_c$ is compact.  If the network is not conservative,  then the  semiflow associated with the network might still be dissipative (see Example ``Gene transcription network'' in the main text).  However, in general, it is not straightforward to show that. In some cases it is possible to prove dissipativity by constructing 
  a suitable Lyapunov function. It is the idea underlying the proof of the next proposition.

  \begin{proposition}\label{prop:diss_flow}
Assume that for all $c\in\R^d$ such that $\mP_c^+\neq \emptyset$, there exists a vector $\omega \in \R^{n}_{>0}$
and a real number $r>0$ such that $\omega \cdot f(x)< 0$ for all $x\in \mP_c$ with $||x||\geq r$, where $||\cdot ||$ is any norm. (Note that   $\omega$ and $r$ might depend on $c$.)
Then the semiflow of the network is dissipative.
 \end{proposition}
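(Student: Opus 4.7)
The plan is to use $V(x) := \omega \cdot x$ as a Lyapunov-like function on each stoichiometric compatibility class, and to exhibit the attracting compact set as a sufficiently large sublevel set of $V$. Fix $c$ with $\mP_c^+ \neq \emptyset$, take the corresponding $\omega \in \R^n_{>0}$ and $r > 0$, and put $M_0 := \sup\{\omega \cdot x : x \in \mP_c,\ \|x\| \leq r\}$, which is finite by compactness of the ball intersected with $\mP_c$. Choose any $M > M_0$ and define
\[
K_c := \{x \in \mP_c : \omega \cdot x \leq M\}.
\]
Since $\omega > 0$ and $x \geq 0$, the constraint $\omega \cdot x \leq M$ forces $x_i \leq M/\omega_i$ componentwise, so $K_c$ is bounded and, being closed in $\R^n$, compact, with $K_c \subseteq \mP_c$. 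The key observation enabled by $M > M_0$ is that any $x \in \mP_c$ with $\omega \cdot x \geq M$ must satisfy $\|x\| > r$ (otherwise it would lie in the $M_0$-sublevel set), and therefore $\omega \cdot f(x) < 0$ there.

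Next I would establish forward invariance of $K_c$. If some trajectory starting in $K_c$ left $K_c$, then at the first exit time $t^*$ continuity of the flow gives $\omega \cdot \phi(x_0, t^*) = M$, but the key observation forces $\tfrac{d}{dt}(\omega \cdot \phi(x_0,t))\big|_{t=t^*} = \omega \cdot f(\phi(x_0, t^*)) < 0$, contradicting that the trajectory exits $K_c$ just after $t^*$. The very same argument, applied with $M$ replaced by $\omega \cdot x_0$, shows that every trajectory is trapped in the compact sublevel set $\{x \in \mP_c : \omega \cdot x \leq \omega \cdot x_0\}$ for all forward time, so in particular the semiflow is globally defined on $\R^n_{\geq 0}$.

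Then I would show that every trajectory reaches $K_c$ in finite time. For $x_0 \notin K_c$, let $g(t) := \omega \cdot \phi(x_0, t)$. As long as the trajectory stays outside $K_c$, it is confined to the compact set $\{x \in \mP_c : M \leq \omega \cdot x \leq g(0)\}$, on which $\omega \cdot f$ is continuous and, by the key observation, strictly negative; hence it attains a maximum $-\delta < 0$ there. Integrating $g'(t) \leq -\delta$ shows that the trajectory enters $K_c$ at a time no later than $(g(0) - M)/\delta$. Combined with forward invariance, this yields exactly the dissipativity condition from Definition~\ref{def:Kc}.

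The principal subtlety is this last step producing a \emph{finite} entry time rather than mere convergence of $g(t)$ to some level $\geq M$; this requires a uniform strictly negative upper bound on $g'$ along the escape portion of the trajectory. The choice $M > M_0$ is precisely what guarantees the inclusion $\{x \in \mP_c : \omega \cdot x \geq M\} \subseteq \{\|x\| > r\}$, and it is this inclusion that lets compactness of the escape region together with continuity of $\omega \cdot f$ deliver the uniform $\delta$. Everything else in the proof is a straightforward application of this compactness-plus-continuity argument to the scalar Lyapunov function $V = \omega \cdot x$.
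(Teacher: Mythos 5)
Your proof is correct, and it follows the paper's overall strategy: both arguments use the linear Lyapunov function $V(x)=\omega\cdot x$ and take the attracting set $K_c$ to be a sublevel set of $V$ intersected with $\mP_c$, chosen large enough that $\{x\in\mP_c:\|x\|\le r\}$ sits inside it, so that $\dot V=\omega\cdot f<0$ on $\mP_c\setminus K_c$; compactness of $K_c$ (from $\omega>0$ and $x\ge 0$) and forward invariance come out the same way in both. Where you genuinely diverge is in showing that every trajectory enters $K_c$ in \emph{finite} time. The paper argues by contradiction with $\omega$-limit sets: if a trajectory never entered $K_c$, then $V$ along it would decrease to a limit $R'\ge R$, the $\omega$-limit set would lie in a level set of $V$, and its invariance would force $\dot V=0$ there, contradicting strict negativity. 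You instead note that the escape region $\{x\in\mP_c: M\le \omega\cdot x\le \omega\cdot x_0\}$ is compact, that $\omega\cdot f$ is continuous and strictly negative on it and hence bounded above by some $-\delta<0$, and integrate to obtain an explicit entrance time $(\omega\cdot x_0-M)/\delta$. Your route is more elementary, avoids the LaSalle-type invariance reasoning entirely, and yields a quantitative bound on the entry time; the choice $M>M_0$ is exactly the right device to guarantee that the complement of the sublevel set lies in $\{\|x\|>r\}$, which is what makes the uniform bound $-\delta$ available.
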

 
\begin{proof}
Let $c\in\R^d$ with $\mP_c^+\neq \emptyset$ and let $\omega$ be as given in the statement. Define 
       $$ V(x) =\sum_{i=1}^n \omega_i x_i \quad \text{for}\quad x\in \R^n_{\geq 0}.$$
The function $V(x)$ satisfies  $V(0)=0$ and $V(x)> 0$ for all  $x\in \R^n_{\geq 0}$, different from $0$. 
Further, for $||x||\geq r$ and $x\in \mP_c$, $\dot{V}(x) = \nabla V \cdot f(x) = \omega \cdot f(x)  <0$  by assumption.
Thus, $V(x) $  is a strict Lyapunov function and $V(\phi(x,t))$ is strictly decreasing along trajectories $\phi(x,t)$ in $\mP_c$ as long as $||\phi(x,t)||\geq r$.
Choose  $R>0$ such that
$$\{x\in \R^n_{\geq 0}\st ||x||\leq r\} \subseteq  \{ x\in \R^n_{\geq 0} \st V(x)\le R  \},$$
and define $K_c=\{ x\in \R^n_{\geq 0} \st V(x)\le R  \}\cap \mP_c$.
The set $K_c$ is compact by construction  and forward invariant since $\dot{V}(x)<0$ for all $||x||\ge r$. Further, all trajectories eventually enter $K_c$ within finite time, that is, $K_c$ is attracting. 
Indeed, if this were  not the case, then there would exist $x\in \mP_c$, $x\notin K_c$ (hence $||x||>r$) such that $V(\phi(x,t))$ is decreasing  for all $t\geq 0$ in the interval of definition and bounded {below}
by $R$. As a consequence, the trajectory is defined for all $t\geq 0$ and  ($*$) $\lim_{t\to\infty}V(\phi(x,t))=R'\ge R$ for some $R'$.  Hence   $\phi(x,t)$ is in $B_\epsilon:=\{x\mid V(x)\le R'+\epsilon\}$  for large $t$ (and any $\epsilon>0$). Since $B_\epsilon$ is compact it follows that the semiflow $\phi(x,t)$ has at least one $\omega$-limit point in $B_\epsilon$. By virtue of ($*$), all $\omega$-limit points $x'$ of $\phi(x,t)$ must fulfil $V(x')=R'$. Further, the set of $\omega$-limit points is forward invariant and since $V(x')=R'$ it must be that $\dot{V}'(x')=0$. This  contradicts the assumption that $\dot{V}'(x) <0$ for all $x$ with $||x||\ge r$.
We conclude that there exists $t(x)\ge 0$ such that $\phi(x,t)\in K_c$ for all $x\in\mP_c$ and $t\ge t(x)$. Hence, the semiflow is dissipative.
\end{proof}

 \medskip\noindent
\subsection{Degree for dissipative semiflows}
The main results to establish a characterization of regions of multistationarity (Theorem~\ref{theo:main}) are  Theorem \ref{theo:degree_sum} and the theorem below.
  The proof of the theorem relies on Theorem~\ref{theo:deg_g_forward} and ideas developed in \cite{hofbauer}.
  
      \begin{theorem}\label{thm:dissipative}
   Consider a network of rank $s$ with an associated ODE system $\dot{x}=f(x)$ where $f(x)=Nv(x)$ as in \eqref{eq:ode}. Assume \eqref{eq:assumption}  holds on the rate functions and let $W\in \R^{d\times n}$, $d=n-s$, be a row reduced matrix such that the rows of $W$ form a basis of $\im(N)^\perp$. Let $c\in \R^d$ such that $\mP_c^+\neq \emptyset$. Further, assume that:
   \begin{itemize}
   \item  The semiflow of the network is dissipative, and that
   \item  $f(x)\neq 0$ for all $x\in \bd(\R^n_{\geq 0})\cap \mP_c$.  That is, there are no boundary equilibria in $\mP_c$.
\end{itemize}
 Then there exists an open bounded and convex set $B_c\subseteq \R^n_{>0}$ that contains all positive  equilibria of the network in the stoichiometric compatibility class $\mP_c$, and such that
   $$ \deg(\varphi_c,B_c,0)=(-1)^s,$$
where $\varphi_c$ is defined in \eqref{eq:phic} from $f$ and $W$.
    \end{theorem}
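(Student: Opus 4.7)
The strategy is to apply Theorem~\ref{theo:deg_g_forward} to a carefully constructed convex, open, bounded set $B_c \subseteq \R^n_{>0}$. Its three hypotheses translate to: (i) $B_c \cap \mP_c \neq \emptyset$; (ii) $f(x) \neq 0$ on $\bd(B_c) \cap \mP_c$; and (iii) $f(x)$ points inwards $B_c$ at every such boundary point. The condition $Wf \equiv 0$ required in hypothesis (ii) of Theorem~\ref{theo:deg_g_forward} is automatic, since the rows of $W$ span $\im(N)^\perp$ and $f = Nv$.

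The first step would be to invoke Lemma~\ref{lem:dis} to obtain a compact, forward-invariant, attracting set $K_c' \subseteq \mP_c$ whose relative interior in $\mP_c$ contains every $\omega$-limit point of the semiflow lying outside $\bd(\R^n_{\geq 0})$. Because there are no boundary equilibria in $\mP_c$, every equilibrium in $\mP_c$ is positive, is its own $\omega$-limit, and thus lies in the relative interior of $K_c'$; in particular $K_c'$ already contains all positive equilibria. The second step would be to construct $B_c$ as a convex, open, bounded neighborhood of $K_c'$ inside $\R^n_{>0}$---for instance a small $\delta$-thickening of the convex hull $\text{conv}(K_c') \subseteq \mP_c$, intersected with $\R^n_{>0}$. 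Conditions (i) and (ii) then follow from the construction: $B_c \cap \mP_c$ contains $K_c'$; no positive equilibrium lies on $\bd(B_c)$, since all equilibria lie in the relative interior of $K_c'$; and no point of $\bd(\R^n_{\geq 0}) \cap \mP_c$ is an equilibrium by hypothesis. To verify (iii) I would split $\bd(B_c) \cap \mP_c$ into an ``inner'' portion on $\bd(\R^n_{\geq 0})$ and an ``outer'' portion in $\mP_c^+$: on the inner portion, forward invariance of $\R^n_{\geq 0}$ yields $f_i(x) \geq 0$ whenever $x_i = 0$, so $x + \epsilon f(x) \in \R^n_{\geq 0} \subseteq \cl(B_c)$ for small $\epsilon > 0$, and $f(x)$ points inwards $B_c$; on the outer portion, I would use the forward invariance and attraction of $K_c'$ to argue that $f$ is directed towards $K_c'$, and therefore inwards $B_c$ when $\delta$ is sufficiently small.

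The main obstacle is exactly the outer-boundary condition in step (iii): dissipativity supplies only eventual entry into $K_c'$, not the pointwise monotonicity of the distance to $K_c'$ or to $\text{conv}(K_c')$ that a naive distance-based $\delta$-thickening would demand. Overcoming this will likely require either replacing the distance-based thickening by a sublevel set of a convex Lyapunov-like function adapted to the semiflow, or replacing $K_c'$ by a convex forward-invariant superset constructed from the attracting compact set; either refinement must be executed so that $B_c$ remains convex, open, bounded, and contained in $\R^n_{>0}$. Once such a $B_c$ has been constructed and all three hypotheses of Theorem~\ref{theo:deg_g_forward} are verified, that theorem immediately gives $\deg(\varphi_c, B_c, 0) = (-1)^s$, which is the conclusion.
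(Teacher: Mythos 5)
Your plan correctly identifies Theorem~\ref{theo:deg_g_forward} as the engine and correctly isolates the sticking point, but the sticking point is fatal to the route you propose: condition (iii) of Theorem~\ref{theo:deg_g_forward} simply cannot be arranged for the vector field $f$ itself. Dissipativity gives only \emph{eventual} entry of trajectories into $K_c$; at a point of the ``outer'' part of $\bd(B_c)\cap\mP_c$ the vector $f(x)$ may well point away from $K_c'$ (trajectories can overshoot or spiral before being absorbed), and no choice of $\delta$-thickening repairs this. Your two suggested fixes do not close the gap either: a dissipative semiflow need not admit a \emph{convex} forward-invariant absorbing neighbourhood (sublevel sets of a Lyapunov function adapted to the flow are generally not convex, and convexity is essential to Theorem~\ref{theo:deg_g_forward} via Lemma~\ref{lem:inwards}); and even if a convex forward-invariant set existed, forward invariance only puts $\phi(x,\epsilon)$ in the closure, which does not yield the strict ``$x+\epsilon f(x)\in\cl(B_c)$'' form of pointing inwards that the theorem requires. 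So the proposal, as written, does not constitute a proof.

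The paper's resolution is a genuinely different device that you should compare with your attempt. Theorem~\ref{theo:deg_g_forward} is applied not to $f$ but to an auxiliary function
\begin{displaymath}
g(x)=\tfrac{1}{T}\bigl(\phi(x,T)-x\bigr)+T\rho(x),
\end{displaymath}
where $T$ is the \emph{maximum} entry time into $K_c$ from $\cl(B_c)\cap\mP_c$ and $\rho=\psi_1\cdot(\widetilde x-\cdot)$ is built from a $\mC^1$-partition of unity. For this $g$ the inward-pointing condition is immediate from convexity of $\cl(B_c)$, because $\phi(x,T)\in K_c\subseteq\cl(B_c)$, so the chord $\phi(x,T)-x$ points inwards; the correction $T\rho$ handles the points of $K_c\cap\bd(\R^n_{\ge0})$ where the displacement could vanish. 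This yields $\deg(\varphi_c^g,B_c,0)=(-1)^s$, and the theorem is then finished by the homotopy $H(x,t)=\bigl(Wx-c,\tfrac1t\pi(\phi(x,t)-x)+t\pi(\rho(x))\bigr)$ connecting $\varphi_c^g$ to $\varphi_c$; showing $H$ does not vanish on $\bd(B_c)\times[0,T]$ is where the absence of boundary equilibria, the properties of $\rho$, and forward invariance of $\R^n_{\ge0}$ are actually used. None of these steps (the time-$T$ map, the partition-of-unity correction, the admissibility of the homotopy) appear in your proposal, and they are precisely what bridges the gap you flagged.
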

   \begin{proof}
The idea of the proof is to construct a function $g$ defined on $\R^n_{\ge 0}$ and a set $B_c\subseteq \R^n_{> 0}$
such that the conditions of Theorem \ref{theo:deg_g_forward} are fulfilled for $g, W$ and $B_c$. 
If we let  $\varphi_c^g$ be the function $\varphi_c$ in \eqref{eq:phic}  constructed from the function $g$ and $W$,
this will imply that $ \deg(\varphi_c^g,B_c,0)=(-1)^s.$
 Subsequently, we will use homotopy invariance  to conclude that also $ \deg(\varphi_c,B_c,0)=(-1)^s$.

The function $g$ will be  defined as
   $$g(x)=\frac{1}{T}(\phi(x,T)-x)+T\rho(x),$$ 
  where $\phi(x,t)$ is the semiflow of $\dot{x} =f(x)$, $K_c$ is a suitably chosen attracting set, $T$ is the maximum entrance time into $K_c$ from a specific set, and $\rho(x)$ is an auxiliary function with certain  useful properties (see below).
   
The proof is divided into four steps. In step (A) we define the set $B_c$, choose $K_c$ and find basic properties of $B_c$ and $K_c$.  In step (B), we construct the function $\rho$. In step (C), we properly define $g$ and show that $g$, $B_c$ and $W$ have the required properties to apply Theorem~\ref{theo:deg_g_forward}.  In step (D) we show that $\varphi_c^g$ and $\varphi_c$ are homotopy equivalent and conclude the proof of the theorem using the homotopy invariance of the Brouwer degree.
     
     \medskip 
{\bf (A)} 
Let $K_c\subseteq \mP_c$ be  as in Definition \ref{def:Kc}, that is, a compact attracting set of all trajectories with initial condition in $\mP_c$. According to Lemma~\ref{lem:dis}, 
{$K_c$ can be chosen such that  $K_c$ is forward invariant, $K_c\cap \R^n_{>0}\neq \emptyset$, and all $\omega$-limit points in   $\mP_c^+$ are interior points of $K_c$ (relatively to $\mP_c$).}

\begin{figure}
\begin{center}
\begin{tikzpicture}[scale=0.6]
\draw[dashed,fill=gray!2!white] (0,6) .. controls (-1.3,5.5) and (-1.7,4.5) .. (-2,3.9) .. controls (-2.6,2) and (-2,0) .. (-0.5,-0.5) .. controls (2,-1.2) and (5,-1.5) .. (5.5,0);
\draw[fill=yellow!10!white,draw=orange!80!red,dashed] (0,0) -- (5.5,0) .. controls (6,2) and (4,7) ..  (0,6) -- (0,0);
\draw[->] (5.5,0) -- (7,0);
\draw[->] (0,6) -- (0,7);
\draw[-,thick] (4.5,1) -- (6,0);
\draw[-,blue,line width=1.5pt] (0,4) -- (4.5,1);
\draw[blue,fill] (0,4) circle (1.5pt);
\draw[blue,fill] (4.5,1) circle (1.5pt);
\draw[fill=green,very nearly transparent,line width=2pt,dashed] (2.25,2.5) circle [x radius=100pt, y radius=40pt,rotate=-34];
\draw (-1,5.8) node {$B$};
\draw[orange] (2,5.3) node {$B_c$};
\draw[blue] (2.2,3) node {$K_c$};
\draw (-0.7,3) node {$U_1$};
\draw[] (6.1,0.5) node {$\mP_c$};
\end{tikzpicture}
\caption{\footnotesize Step (A). The set $\mP_c$ is the straight line connecting the two axis. The compact attracting set $K_c$ is depicted in blue. The set $B\subseteq \R^n$ is an open set containing $K_c$ and $B_c=B\cap \R^n_{>0}$ is the restriction of $B$ to the positive orthant (shown in orange), such that $B_c$ is open. Hence $K_c$ is contained in $B_c$, except for  points on the boundary  $K_c \cap \bd(\R^n_{\geq 0} )$, hence also $B_c\cap \mP_c\not=\emptyset$. Step (B). The open set $U_1\subseteq\R^n$ (in green) is chosen such that $K_c\subseteq U_1\subseteq B$. In the $\mC^1$-partition of unit, the support of $\psi_1$ is in $U_1$ and that of $\psi_2$ is in $\R^n\setminus K_c$. }\label{fig:sets}
\end{center}
\end{figure}
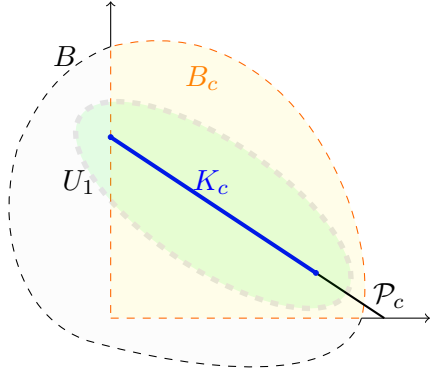

 Let $B\subseteq \R^n$ be an open, bounded and convex set   containing $K_c$, that is, $K_c\subseteq B$.       Let $B_c=\R^n_{> 0} \cap B$. Then $B_c$ is also open, bounded and convex.  
 Since $K_c\subseteq \R^n_{\geq 0}\cap B$, then $B_c$ contains all points in $K_c$ except  those on the boundary $K_c \cap \bd(\R^n_{\geq 0} )$. Further, 
 \begin{equation*}\label{eq:E1}
K_c\subseteq \cl(B_c)\subseteq \R^n_{\geq 0}, \quad \text{and}\quad   K_c\cap \bd(B_c) = K_c \cap \bd(\R^n_{\geq 0}),
 \end{equation*}
see Figure \ref{fig:sets}.
Since $ \emptyset \neq K_c\cap \R^n_{>0} =  B \cap K_c \cap\R^n_{> 0} =B_c \cap K_c \subseteq B_c \cap \mP_c$, then
 \begin{equation*}\label{eq:E2}
 B_c\cap \mP_c\neq \emptyset.
 \end{equation*} 
 Since $f(x)\neq 0$ for all $x\in \bd(\R^n_{\geq 0})\cap \mP_c$ by assumption and $K_c$ contains all zeros of $f$ in $\mP_c$,  then  $B_c$ contains all zeros of $f$ in $\mP_c$, that is
\begin{equation}\label{eq:E3}
\{x\in \mP_c\mid  f(x)=0\}\subseteq B_c.
 \end{equation}

  \medskip
{\bf (B)} The function $\rho\colon \R_{\ge 0}^n\to \R^n$ in the definition of $g$  is defined such that it has the following properties:
\begin{enumerate}[(i)] 
 \item $\rho(x)$  points inwards $B_c$
  for all $x\in \bd(B_c)\cap \mP_c$. 
  \item $\rho(x)=0$ for $x\in \R^n_{>0}\cap\bd(B_c)\cap \mP_c$.
 \item $\rho(x)\neq 0$  for $x\in K_c\cap \bd(B_c)$. 
 \item $W\rho(x)=0$ for all $x\in \bd(B_c)\cap \mP_c$. 
  \end{enumerate}
We first construct two other functions $\widetilde\rho$ and $\psi_1$, and subsequently define $\rho\colon \R_{\ge 0}^n\to \R^n$ as the product $\rho=\widetilde\rho\,\psi_1$.
Let $\widetilde{x}\in K_c\cap \R^n_{>0}$ and  define $\widetilde{\rho}\colon \R^n\rightarrow \R^n$ as  $\widetilde{\rho}(x):= \widetilde{x}-x$.
Let $U_1\subsetneq B$ be an open set containing $K_c$ (which exists since $B$ is open), see  Figure \ref{fig:sets}. Consider the open cover of $\R^n$ given by $U_1$ and $U_2=\R^n\setminus K_c$, such that $U_1\cap U_2\not=\emptyset$ and $U_1\cup U_2=\R^n$. 
Choose a $\mC^1$-partition of unit $\psi_1,\psi_2\colon \R^n\rightarrow [0,1]$ associated with this open cover.  This implies in particular that the support of $\psi_i$ is included in $U_i$ and $\psi_1(x)+\psi_2(x)=1$ for all $x$.

 Define  $\rho\colon \R_{\ge 0}^n\to \R^n$ by $\rho(x)=\psi_1(x)\widetilde{\rho}(x)$, $x\in \R^n_{\geq 0}$ (note the restriction to $\R^n_{\geq 0}$). This function fulfils  properties (i)-(iv) above. 
Property (i): Follows by definition of $\rho(x)=\psi_1(x)(\widetilde x-x)$, $\psi_1(x)\ge 0$ and Lemma~\ref{lem:inwards}(iii), using that $\widetilde{x}\in B_c$ and $x\in \bd(B_c)$.
 Property (ii):  Since the support of $\psi_1$ is contained in $U_1$,  $\psi_1(x)=0$ for all $x\notin U_1$, in particular for all $x\in \R^n_{>0}\cap\bd(B_c)\cap \mP_c$, since $\R^n_{>0}\cap\bd(B_c)\subseteq \bd(B)$ and $\bd(B)\cap U_1=\emptyset$. Property (iii): Similarly,  $\psi_1(x)=1$ (since $\psi_2(x)=0$) for all $x\notin  U_2=\R^n\setminus K_c$, that is, for all $x\in K_c$; hence $\rho(x)\not=0$ for $x\in K_c\cap \bd(B_c)$ since $\widetilde x\not\in\bd(B_c)$. Property (iv):  $W\rho(x)=\psi_1(x)W(\widetilde x-x)=0$ as $x,\widetilde{x}\in \mP_c$.

\medskip
 {\bf (C)}  
Let $T$ be defined as the maximum of the entry times to $K_c$ from any $x\in \cl(B_c)\cap \mP_c$.  
 The time $T$ is finite because $\cl(B_c)\cap \mP_c$ is compact and the semiflow is dissipative with respect to $K_c$.  
Note that once a trajectory is in $K_c$, it stays there since $K_c$ is forward invariant
 Redefine $T$ to be any positive number if $T=0$. 

We define 
$$g\colon \R^n_{\geq 0}\to \R^n, \quad g(x):= \frac{1}{T}(\phi(x,T) -x) + T\rho(x),$$

 Observe that $Wg(x)=0$  for all $x\in \bd(B_c)\cap \mP_c$, using property (iv) in step (B)  and that $\phi(x,T),  x\in \mP_c$. By definition of $T$, $\phi(x,T)\in \cl(B_c)\cap\mP_c $ if $x\in \cl(B_c)\cap\mP_c$ and hence $\frac{1}{T}(\phi(x,T) -x)$   points inwards  $B_c$ 
 at $x\in \bd(B_c)\cap \mP_c$ by convexity of $\cl(B_c)$.  Also $ T\rho(x)$ points inwards  $B_c$ 
  at $x$ by property (i) in step (B). 
Hence,  $g(x)$ points inwards $B_c$ 
 at $x\in \bd(B_c)\cap \mP_c$ by convexity again.

Therefore, the function $g$, together with $B_c$ and $W$, fulfil the conditions  of Theorem~\ref{theo:deg_g_forward}.
By letting $\varphi_c^g$ be the function $\varphi_c$ in \eqref{eq:phic}  constructed from  $g$ and $W$,
we conclude that $$ \deg(\varphi_c^g,B_c,0)=(-1)^s.$$

\medskip
 {\bf (D)}
 We  define a homotopy between $\varphi_c$ and $\varphi_c^g$  on $\cl(B_c)\times [0,T]$ by
 $$H(x,t)= \left\{ \begin{array}{cl} \varphi_c(x) & \text{if }\quad t=0 \\ 
 \big(\, Wx-c,\frac{1}{t}\pi(\phi(x,t)-x)+ t \pi( \rho(x)) \, \big) & \text{if }\quad 0< t\le T. \end{array}\right.$$
The function $H(x,t)$ is continuous since $\phi(x,t)$ is differentiable and is the semiflow of $\dot{x}=f(x)$. Note that 
$H(x,0)=\varphi_c(x)$ and $H(x,T) = (Wx-c,\pi(g(x))) = \varphi_c^g(x)$. Thus $H(x,t)$ is a homotopy between $\varphi_c(x)$ and $\varphi_c^g(x)$.
We need to show that $H(x,t)$ does not vanish on the boundary $\bd(B_c)$.

 If   $H(x,0)=\varphi_c(x)=0$, then  $x\in \mP_c$ is an equilibrium of the ODE system. Hence $H(x,0)$ does not vanish on $\bd(B_c)$ since $B_c$ contains all zeros of $f$ in $\mP_c$, see \eqref{eq:E3}.  
 Now let $x'\in \bd(B_c)$ and assume that $H(x',t)=0$ for some $t\in (0,T]$. It follows that $x'\in \mP_c$, hence   
   \begin{equation}\label{eq:phirho2}
   x'\in \bd(B_c)\cap \mP_c,\quad\text{ and}\quad \pi(\phi(x',t)-x') = -t^2\pi(\rho(x')).
   \end{equation}
Using  \eqref{eq:Wc2} and property (iv) in step (B) we have that  
   \begin{equation}\label{eq:phirho}
      \phi(x',t) =x' -t^2\rho(x').
   \end{equation}
By construction of $K_c$, all fixed points and periodic orbits are contained in $K_c$. If $\rho(x')=0$, then \eqref{eq:phirho} implies 
$x'\in K_c\cap \bd(B_c)$ as $x'\in\bd(B_c)$ by assumption. However, this contradicts property (iii) in step (B). 
Hence, it must be the case that $\rho(x')\neq 0$. 

Using that $x'\in \bd(B_c)\cap\mP_c$ from \eqref{eq:phirho2} and $\rho(x')\neq 0$, we conclude that $x'\in \bd(\R^{n}_{\geq 0})$ by property (ii) in step (B), since $x'\not\in \R^n_{>0}\cap\bd(B_c)\cap \mP_c$.
It follows that  there exists $i$  such that $x'_i=0$ and we have
{$$ \phi(x',t)_i = x_i'- t^2\rho(x')_i =   x_i' -   t^2\psi_1(x') \widetilde{\rho}(x')_i=  - t^2\psi_1(x') \widetilde{x}_i <0.$$}
{Here we have used that $\psi_1(x') \neq 0$, since $\rho(x')=\psi_1(x') \widetilde{\rho}(x')\not =0$ 
and  $\psi_1(x')$ is a scalar. }
Now, by the inequality above, $\phi(x',t)$  does not belong to $\R^n_{\geq 0}$. However, this contradicts the forward invariance of $\R^n_{\geq 0}$ with respect to the flow.
Therefore, $H$ does not vanish on $\bd(B_c)\times [0,T]$.

With this in place, homotopy invariance of the Brouwer degree  implies that
$$\deg(\varphi_c,B_c,0)= \deg(H(x,0),B_c,0) = \deg(H(x,T),B_c,0) = \deg(\varphi_c^g,B_c,0)=(-1)^s.$$
  \end{proof} 
   
 \begin{remark}
 The statement and proof of the theorem focus exclusively on one stoichiometric compatibility class, that is, on a fixed value $c\in \R^d$. 
 Therefore, if a semiflow admits an attracting set in one specific stoichiometric compatibility class (and not necessarily in all), then the theorem and computation of the Brouwer degree holds for this specific class.
 \end{remark}

\section{Multistationarity in dissipative networks}\label{sec:multi}
In this section we prove the  theorem and corollaries stated in the main text, which are  consequences of Theorem~\ref{thm:dissipative} from the previous section.

Consider the Jacobian of the map $\varphi_c(x)$. Because $\varphi_c(x)$ is independent of $c$, we denote the Jacobian by $M(x)$. The $i$-th row of this matrix is given as
$$M(x)_i :=  J_{\varphi_{c}}(x)_i =  
 \begin{cases}  J_{f_i}(x) & i\notin \{i_1,\dots,i_d\} \\
W_i  & i\in \{i_1,\dots,i_d\},
\end{cases}
$$
where $W_i$ is the $i$-th row of $W$.
That is, one can think of $M(x)$ as being the matrix obtained from the Jacobian of $f(x)$, with  the $i_j$-th row, $j=1,\dots,d$, replaced by the $j$-th row of $W$.

An equilibrium $x^*$  is said to be \emph{non-degenerate} if $M(x^*)$ has rank $n$, that is, if $\det(M(x^*))\neq 0$.

\begin{theorem}  \label{theo:main} 
Assume  the reaction rate functions fulfil \eqref{eq:assumption}, let $s=\text{rank}(N)$ and let $\mP_c$ be a stoichiometric compatibility class such that  $\mP_c^+\neq \emptyset$ (where $c\in \R^d$). Further, assume that  

\smallskip
\noindent (i) The semiflow of the network is dissipative.
 
\noindent (ii) There are no boundary equilibria in  $\mP_c$.  

\smallskip
\noindent Then the following holds:
    
\smallskip
\noindent    (A') {\bf Uniqueness of equilibria.} If
    \begin{displaymath}
      \sign(\det(M(x)))=(-1)^s\quad \textrm{ for  all  equilibria} \quad  
                 x  \in V\cap \mP_c^+,     
          \end{displaymath}
  then there is exactly one  positive equilibrium in $\mP_c$.
 Further, this equilibrium is
    non-degenerate.
    
    \smallskip
    \noindent  (B') {\bf Multiple equilibria.} If
    \begin{displaymath}
        \sign(\det(M(x)))=(-1)^{s+1}\quad \text{ for some  equilibrium }  \quad  
                 x \in V\cap \mP_c^+,     
    \end{displaymath}
    then there are at least two positive equilibria in  $\mP_{c}$, at
    least one of which is non-degenerate.
    If all positive equilibria in $\mP_{c}$ are non-degenerate, then
    there are at least three and always an odd number.
\end{theorem}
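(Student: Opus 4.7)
The plan is to combine the degree computation from Theorem~\ref{thm:dissipative} with the sum formula in Theorem~\ref{theo:degree_sum} and the additivity property (A2) of the Brouwer degree.

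First I would apply Theorem~\ref{thm:dissipative} to obtain an open bounded convex set $B_c\subseteq \R^n_{>0}$ containing every positive equilibrium in $\mP_c$, such that $\deg(\varphi_c,B_c,0)=(-1)^s$. Since by hypothesis (ii) there are no boundary equilibria in $\mP_c$, the zero set of $\varphi_c$ inside $B_c$ is precisely $V\cap \mP_c^+$, and $J_{\varphi_c}(x)=M(x)$ by construction.

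For part (A'), the hypothesis $\sign(\det(M(x)))=(-1)^s$ for every $x\in V\cap \mP_c^+$ forces $\det(M(x))\neq 0$ at each positive equilibrium, so $0$ is a regular value of $\varphi_c$ and all positive equilibria are non-degenerate (in particular, isolated and finite in number). Applying Theorem~\ref{theo:degree_sum} we get
$$(-1)^s \;=\; \deg(\varphi_c,B_c,0) \;=\; \sum_{x\in V\cap \mP_c^+} \sign(\det(M(x))) \;=\; |V\cap \mP_c^+|\cdot(-1)^s,$$
so $|V\cap \mP_c^+|=1$, and this unique equilibrium is non-degenerate.

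For the first part of (B'), fix an equilibrium $x_0\in V\cap \mP_c^+$ with $\sign(\det(M(x_0)))=(-1)^{s+1}$; in particular $x_0$ is non-degenerate, so by the inverse function theorem $x_0$ is an isolated zero of $\varphi_c$. Choose a small open neighbourhood $U_0\subseteq B_c$ of $x_0$ in which $x_0$ is the only zero of $\varphi_c$; then $0$ is a regular value of $\varphi_c|_{\cl(U_0)}$ and $0\notin \varphi_c(\bd(U_0))$, so Theorem~\ref{theo:degree_sum} gives $\deg(\varphi_c,U_0,0)=(-1)^{s+1}$. By the additivity property (A2),
$$\deg(\varphi_c,B_c\setminus \cl(U_0),0) \;=\; (-1)^s - (-1)^{s+1} \;=\; 2(-1)^s \;\neq\; 0,$$
which guarantees the existence of a second zero of $\varphi_c$ in $B_c\setminus \cl(U_0)$, i.e.\ a second positive equilibrium. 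For the final claim, assume all positive equilibria are non-degenerate. Then $0$ is a regular value globally and Theorem~\ref{theo:degree_sum} yields $(-1)^s = p\cdot(-1)^s + q\cdot(-1)^{s+1}$ where $p,q$ are the numbers of positive equilibria with signs $(-1)^s$ and $(-1)^{s+1}$ respectively; this gives $p=q+1$, so $|V\cap \mP_c^+|=2q+1$ is odd, and is at least $3$ under (B') since $q\geq 1$.

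The main obstacle is (B') when some positive equilibria may be degenerate, which prevents a direct application of the sum formula on all of $B_c$. The trick is that the hypothesis in (B') already hands us a non-degenerate equilibrium $x_0$, so we can excise a neighbourhood of $x_0$ and use additivity: isolating the known $(-1)^{s+1}$ contribution forces the remaining region to carry nonzero degree, yielding a second equilibrium without needing to control the signs or regularity at any other zero.
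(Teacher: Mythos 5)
Your proof is correct and follows the paper's overall strategy exactly: Theorem~\ref{thm:dissipative} supplies the set $B_c$ with $\deg(\varphi_c,B_c,0)=(-1)^s$, and the sum formula of Theorem~\ref{theo:degree_sum} does the rest; part (A') and the final odd-count claim in (B') are argued just as in the paper. The one place you genuinely diverge is the first claim of (B') when degenerate equilibria cannot be excluded: you excise a small neighbourhood $U_0$ of the given non-degenerate equilibrium $x_0$, compute the local degree $(-1)^{s+1}$ there via the sum formula, and invoke additivity (A2) to conclude that $B_c\setminus\cl(U_0)$ carries degree $2(-1)^s\neq 0$ and hence contains a second zero. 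The paper instead splits into cases: if $0$ is a regular value, the global sum formula forces at least two further equilibria of sign $(-1)^s$ (hence at least three, all non-degenerate); if $0$ is not a regular value, the definition immediately yields a zero with singular Jacobian, which must differ from $x^*$ since $M(x^*)$ is non-singular. Your excision argument is a clean, standard alternative that avoids the case split and shows a little more (the second equilibrium is topologically essential in the complement of $U_0$); the paper's version is shorter. Both routes rest on the same boundary condition $0\notin\varphi_c(\bd(B_c))$ and $0\notin\varphi_c(\bd(U_0))$, which you correctly secure from the absence of boundary equilibria, the inclusion of all positive equilibria in $B_c$, and the isolation of $x_0$.
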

\begin{proof}
The hypotheses ensure that we can apply Theorem~\ref{thm:dissipative}. Therefore choose  an open bounded convex set  $B_c\subset \R^n_{>0}$ 
  that contains all positive equilibria of the network in the stoichiometric compatibility class $\mP_c$ and such that
   $$ \deg(\varphi_c,B_c,0)=(-1)^s.$$
Let   $V_{c}$ be the set of positive equilibria in the stoichiometric compatibility class $\mP_c$. Note that
 $$V_c=\{x\in B_c \st \varphi_c(x)=0\}.$$

\medskip
(A')  
Since $ \sign(\det(M(x)))=(-1)^s\neq 0$ for all equilibria in  $\mP_c$,  $0$ is a regular value for $\varphi_c$. We can therefore apply Theorem~\ref{theo:degree_sum} and obtain
$$ (-1)^s =  \sum_{x\in V_c } \sign(\det(M(x))) = (-1)^s (\#V_{c}),$$
where $\#V_{c}$ is the cardinality of $V_c$.
We conclude that  $\#V_{c}=1$ and therefore  that there exists a unique positive equilibrium in the stoichiometric compatibility class. Furthermore, since $\sign(\det(M(x))) \not=0$ for all equilibria, the equilibrium is non-degenerate.

\medskip
(B') Let $x^*\in V_c$  be such that
    $\sign(\det(M(x^*))) =(-1)^{s+1}$. 
If $0$ is a regular value for $\varphi_{c}(\cdot)$,
then the equality
$$ (-1)^s =   \sum_{x\in V_c } \sign(\det(M(x)))  =   (-1)^{s+1}  +   \sum_{x\in V_c, \ x\neq x^* } \sign(\det(M(x)))   $$
implies that there must exist at least two other points $x',x''\in V_{c}$, such that $$ \sign(\det(M(x')))  =  \sign(\det(M(x''))  =   (-1)^{s},$$  that is, there are at least three positive equilibria in $\mP_{c}$, all of which are non-degenerate.
 In this case by Corollary~\ref{cor:odd}, there is an odd number of  equilibria and they are all non-degenerate.

Assume now that  $0$ is not a regular value for $\varphi_{c}$. Then there must exist another positive equilibrium $x'$ in $\mP_{c}$ for which the Jacobian of $\varphi_{c}(x')$ is singular.  This implies that there are at least two positive equilibria in $\mP_{c}$, $x^*$ and $x'$, one of which is non-degenerate.
\end{proof}

In typical applications we find an odd number of equilibria ($\geq 3$), all of which are non-degenerate.  
Observe that the hypothesis for Part (A) holds if the sign of $\det(M(x))$ is $(-1)^s$ for all $x$ in a set containing the positive equilibria. In particular, this is the case if $\det (M(x)) =(-1)^s$ for all $x\in \R^n_{>0}$.

\medskip

{We assume now that the positive solutions to the system 
$f(x)=0$ (with $f(x)$ as in \eqref{eq:ode}) admit a parameterization
\begin{eqnarray}
\Phi \colon \R^{m}_{> 0} & \rightarrow & \R^n_{> 0} \label{eq:param} \\
\hat x = (\hat x_1,\dots, \hat x_m) & \mapsto & (\Phi_{1}(\hat x),\dots,\Phi_{n}(\hat x)), \nonumber
\end{eqnarray}
for some $m<n$.  
That is, we assume that we can express $x_{1},\dots,x_n$ at equilibrium as    functions of  $\hat x_1,\dots, \hat x_m$:
$$ x_{i}= \Phi_{i}(\hat x_1,\dots, \hat x_m),\qquad i=1,\dots,n, $$
 such that 
$x_{1},\dots,x_n$ are positive if  $\hat x_1,\dots, \hat x_m$ are positive.}

For mass-action kinetics, the equation $f(x)=0$  results in $s=n-d$ polynomial equations in $n$ unknowns, which generically would lead to a $d$-dimensional parameterization and $m=d$ (if such a parameterization exists).
 
When such a parameterization exists, then positive values of $\hat x_1,\dots, \hat x_m$ determine uniquely a positive equilibrium. This equilibrium then belongs to the stoichiometric compatibility class given by 
$$   c:= W \Phi(\hat x).$$
Reciprocally, given $c$, the positive solutions to  $\varphi_c(x)=0$ are in one-to-one correspondence with the positive solutions to 
the equation $ c= W \Phi(\hat x)$.

As before, we let   $W\in \R^{d\times n}$ be a row-reduced matrix whose rows form a basis of $\im(N)^{\perp}$.
Let $i_1,\dots,i_d$ be the indices of the first non-zero coordinate of each row. Let $\pi\colon \R^n\rightarrow \R^s$ denote the projection onto the coordinates with indices different from $i_1,\dots,i_d$. We do not reorder the coordinates now to ensure that  $\{i_1,\dots,i_d\}=\{1,\dots,d\}$, because we have already chosen a convenient order of the free variables of the parameterization.

 We next consider the determinant of $M(x)$ and use  the parameterization \eqref{eq:param} to substitute the values of 
 $x_{1},\dots,x_n$ by their expressions as functions of $\hat x_1,\dots, \hat x_m$. 
We define
\begin{equation}\label{eq:akp}
 a(\widehat{x})  = \det (M(\Phi(\widehat{x}))).
 \end{equation}

\begin{corollary}  \label{cor:main} 
Assume  the reaction rate functions fulfil \eqref{eq:assumption} and let $s=\text{rank}(N)$.  Further, assume that  

\smallskip
\noindent (i) The semiflow of the network is dissipative.
 
\noindent (ii) The set of positive equilibria admits a positive parameterization as in \eqref{eq:param}.

\noindent (iii)  There are no boundary equilibria in  $\mP_c$, for all $c\in \R^d$ such that  $\mP_c^+\neq \emptyset$.

\smallskip
\noindent Then the following holds.
    
\smallskip
\noindent    (A) {\bf Uniqueness of equilibria.} If $\sign(a(\hat{x}))=(-1)^s$ for all  $\hat{x}\in \R^m_{>0}$,
     then there is exactly one  positive equilibrium in each    $\mP_c$ with $\mP_c^+\neq \emptyset$.     Further, this equilibrium is  non-degenerate.
    
    \smallskip
    \noindent  (B) {\bf Multiple equilibria.} If $\sign(a(\hat{x}))=(-1)^{s+1}$ for some $\hat{x}\in \R^m_{>0}$, 
    then there are at least two positive equilibria in  $\mP_{c}$, at
    least one of which is non-degenerate,
    where  $c:= W \Phi(\hat{x})$.
    If all positive equilibria in $\mP_{c}$ are non-degenerate, then
    there are at least three equilibria  and always an odd number.
\end{corollary}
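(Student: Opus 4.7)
The plan is to derive Corollary~\ref{cor:main} as a direct consequence of Theorem~\ref{theo:main}, using the parameterization $\Phi$ as a translation device between the sign condition on $a(\hat x)$ (a function on $\R^m_{>0}$) and the sign condition on $\det(M(x))$ at positive equilibria (as required by Theorem~\ref{theo:main}). The crucial identity is simply the definition $a(\hat x) = \det(M(\Phi(\hat x)))$ from \eqref{eq:akp}, together with surjectivity of $\Phi\colon \R^m_{>0}\to V\cap\R^n_{>0}$, which guarantees that every positive equilibrium arises as $\Phi(\hat x)$ for some $\hat x\in\R^m_{>0}$.

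For part (A), I would fix an arbitrary $c\in\R^d$ with $\mP_c^+\neq\emptyset$ and check that the hypotheses of Theorem~\ref{theo:main} are met for this $c$: assumption (i) of the corollary gives dissipativity, and assumption (iii) (which is quantified over all such $c$) gives the absence of boundary equilibria in $\mP_c$. Then, for any $x^*\in V\cap\mP_c^+$, surjectivity of $\Phi$ produces $\hat x^*\in\R^m_{>0}$ with $\Phi(\hat x^*)=x^*$, so $\det(M(x^*))=a(\hat x^*)$ has sign $(-1)^s$ by hypothesis. Theorem~\ref{theo:main}(A') then yields a unique, non-degenerate positive equilibrium in $\mP_c$. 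Since $c$ was arbitrary, this gives the conclusion of (A).

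For part (B), I would pick an $\hat x_0\in\R^m_{>0}$ witnessing $\sign(a(\hat x_0))=(-1)^{s+1}$ and set $x_0=\Phi(\hat x_0)$ and $c_0=W\Phi(\hat x_0)$. Then $x_0\in V\cap\R^n_{>0}$ and $Wx_0=c_0$, so $x_0\in \mP_{c_0}^+$; in particular $\mP_{c_0}^+\neq\emptyset$. Again assumptions (i) and (iii) of the corollary supply the hypotheses of Theorem~\ref{theo:main} for this particular $c_0$. Since $\det(M(x_0))=a(\hat x_0)$ has sign $(-1)^{s+1}$, part (B') of Theorem~\ref{theo:main} applies and yields at least two positive equilibria in $\mP_{c_0}$, with the appropriate non-degeneracy and parity statements.

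There is no real obstacle here; the content is really a bookkeeping reformulation. The only point that needs explicit attention is to make sure that both the surjectivity of $\Phi$ (so the sign condition on $a$ covers every positive equilibrium in part (A)) and the fact that $\Phi$ lands in $\R^n_{>0}$ (so in part (B) the chosen class $\mP_{c_0}$ automatically has non-empty positive part) are used, and that the quantifier on $c$ in hypothesis (iii) matches the quantifier needed by Theorem~\ref{theo:main}.
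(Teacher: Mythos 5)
Your proposal is correct and follows essentially the same route as the paper: both use the identity $a(\hat x)=\det(M(\Phi(\hat x)))$ together with the surjectivity of $\Phi$ to convert the sign hypothesis on $a$ into the pointwise sign hypothesis on $\det(M(x))$ at positive equilibria, and then invoke Theorem~1(A$'$)/(B$'$) for the relevant stoichiometric compatibility class. Your explicit attention to the quantifier on $c$ and to $\Phi$ landing in $\R^n_{>0}$ matches the paper's observation that $\Phi$ induces a bijection between $V_c$ and $\{\hat x\in\R^m_{>0}\mid c=W\Phi(\hat x)\}$.
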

\begin{proof}
Given $c$, note that $\Phi$ induces a bijection between the sets
$$V_c \quad \text{and}\quad \mathcal{S}_{c}:= \{ \widehat{x}\in \R^m_{>0} \st c= W \Phi(\widehat{x})\}.$$
An element of $\mathcal{S}_{c}$ corresponds to a positive equilibrium in the stoichiometric compatibility class $\mP_c$. 

\medskip
(A)  
Consider a  stoichiometric compatibility class $\mP_c$ defined by $c$ such that $\mP_c^+\neq \emptyset$. 
Let $x\in V_c$ and $\hat{x}$ such that $x=\Phi(\hat{x})$. Then
$$\det(M(x))= \det(M(\Phi(\hat{x}))) = a(\widehat{x}).  $$
By hypothesis $\sign(\det(M(x)))=\sign( a(\widehat{x}))=(-1)^s$. Since this holds for all equilibria in $V_c$, Theorem~\ref{theo:main}(A') gives that there is exactly one  positive equilibrium in $\mP_c$, which is non-degenerate.

\medskip
(B) Let $\widehat{x}$  be such that
    $\sign(a(\widehat{x}))=(-1)^{s+1}$ and let $c$ be defined as in the statement of the theorem. 
    Then $x=\Phi(\widehat{x})$ is a positive equilibrium in $V_{c}$ for which the sign of $\det(M(x))$ is $(-1)^{s+1}$.   Theorem~\ref{theo:main}(B') gives the desired conclusion.
\end{proof}

\section{Details on the steps of the procedure}\label{sec:steps}

 In this section we expand further on how to check step 3 and {7} of the algorithm. 

\subsection{On siphons and boundary equilibria}\label{sec:siphons}
{ A proof of Proposition 2 in the main text} for mass-action kinetics can be found in \cite{Shiu-siphons}, where strategies to find siphons are also detailed.
 The proof in  \cite{Shiu-siphons} is however valid for general kinetics fulfilling  assumption \eqref{eq:assumption} (see  \cite[Prop. 2]{marcondes:persistence}). Different algorithms developed in Petri Net theory can be applied to find the siphons of a reaction network.

For large networks, the task of finding the siphons can be daunting. A way to reduce the complexity of the computation is by the removal of intermediate species and catalysts  \cite{marcondes:persistence}. We explain the key aspects of this reduction method here. The method is used in the examples below.

The first reduction concerns \textbf{removal of intermediates. }   Intermediates are species in the network that do not appear interacting with any other species, are produced in at least one reaction, and consumed in at least one reaction. For example the species $ES_0$ in the reaction network
\begin{equation}\label{eq:int}
S_{0} + E  \cee{<=>}   ES_0  \cee{<=>} S_{1} + E
\end{equation}
is an intermediate.

Given a network,  we  obtain a reduced network by ``removing'' some intermediates, one at a time. This is done in the following way.  Say we want to remove an intermediate $Y$ from the network. We remove all reactions of the original network that involve $Y$ and add a reaction
$$ y\rightarrow y'\qquad \textrm{whenever} \qquad y \rightarrow Y \rightarrow y'\quad \textrm{with } \ y\neq y'$$
belongs to the original network. Here $y$ and $y'$ are the reactant complex of a reaction $ y \rightarrow Y$ and product complex of a reaction $Y \rightarrow y'$,
respectively.

To illustrate this, we consider the removal of the intermediate $ES_0$ in the network \eqref{eq:int}.
The reactions of the reduced network are obtained by considering all length 2 paths of the original network that go through $ES_0$. 
We have two such paths:
$$S_{0} + E  \cee{->}   ES_0  \cee{->} S_{1} + E\qquad\textrm{and}\qquad S_{1} + E  \cee{->}   ES_0  \cee{->} S_{0} + E .$$
By ``collapsing'' these paths we obtain the reactions
\begin{equation}\label{eq:cat}
S_{0} + E \cee{->} S_{1} + E\qquad\textrm{and}\qquad S_{1} + E \cee{->} S_{0} + E.
\end{equation}

Clearly the process could be repeated now by choosing other intermediates of the network (if any). In this way we can obtain reduced networks by removing several intermediates.

\medskip
The second reduction concerns \textbf{removal of catalysts. } 
Catalysts are species that whenever they appear in a reaction, then they appear at both sides and with the same stoichiometric coefficient. 
For example, $E$ in the reaction network \eqref{eq:cat}
is a catalyst. 
Catalysts  are actually defined in more generality in  \cite{marcondes:persistence}, but we restrict to this scenario to keep the discussion simple. Catalysts are removed from a network  by literally removing them from the reactions where they appear.
Removal of $E$ in the reaction network \eqref{eq:cat} yields the reaction  network
\begin{equation}\label{eq:mon}
S_0 \cee{<=>} S_1. 
\end{equation}

This network has one minimal siphon, namely $\{S_0,S_1\}$, and $s_0+s_1=c$ is a conservation relation. By Proposition~{2 in the main text}
 it does not admit boundary equilibria in stoichiometric compatibility classes with non-empty positive part.
The next proposition allows us to conclude that the original network in \eqref{eq:int} neither admits boundary equilibria in stoichiometric compatibility classes with non-empty positive part.
 
\begin{proposition}[Theorems 1 and 2 in \cite{marcondes:persistence}] \label{prop:red}
Let $G$ be a network and $G'$ be a network obtained after iterative removal of intermediates or catalysts from $G$ as described above. 
Each minimal siphon of $G$ contains the support of a positive conservation relation if and only if this is the case for $G'$.
\end{proposition}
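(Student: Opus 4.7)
The plan is to prove the proposition by induction on the number of reduction steps. Since iterated removal of intermediates and catalysts is simply a sequence of single removal operations, it suffices to treat two base cases: removing one catalyst $X$ and removing one intermediate $Y$. In each base case I would establish, in parallel, a correspondence between the minimal siphons of $G$ and $G'$ and between their positive conservation relations, and then verify that the support-containment hypothesis transfers across the correspondence in both directions.

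For a catalyst $X$, the key observation is that $X$ appears with equal stoichiometric coefficient on both sides of every reaction in which it occurs, so the net-stoichiometry columns for species other than $X$ are identical in $G$ and $G'$. Consequently, the positive conservation relations of $G$ are generated by those of $G'$ (lifted by placing a zero in the $X$-coordinate) together, possibly, with the monomial relation $x=c$. At the level of siphons, $\{X\}$ is itself a siphon of $G$ (every reaction producing $X$ also consumes $X$), and every minimal siphon of $G$ other than $\{X\}$ projects to a minimal siphon of $G'$ and vice versa. Thus the hypothesis that each minimal siphon contains the support of a positive conservation relation transfers unambiguously: the singleton $\{X\}$ is the support of the relation $x=c$, and all other minimal siphons inherit the property from $G'$.

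For an intermediate $Y$, the definition forces $Y$ to appear with coefficient one and not on the same side as any other species. I would first show that positive conservation relations of $G$ decompose uniquely (up to multiples of the $Y$-mass balance) as $\lambda y + \omega\cdot x$ where $\omega$ encodes a positive conservation relation of $G'$; the support correspondence is that one adds $Y$ to the support on the $G$-side whenever $\lambda>0$, and this happens exactly when $Y$ sits between complexes whose species already appear. Next, for siphons, I would set up the following correspondence: a minimal siphon $Z$ of $G'$ extends to a siphon $Z\cup\{Y\}$ of $G$ precisely when some collapsed reaction $y\to y'$ witnessing a production of a species in $Z$ passes through $Y$, and conversely any minimal siphon $Z$ of $G$ containing $Y$ restricts to a siphon of $G'$ because the reactions producing elements of $Z\setminus\{Y\}$ in $G'$ are exactly the collapsed composites of reactions of $G$, whose reactant complexes intersect $Z\setminus\{Y\}$ by the siphon property applied to the two-step path through $Y$.

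The main obstacle will be the intermediate case, specifically verifying that the siphon closure property is preserved under both directions of the correspondence and that the decomposition of conservation relations interacts correctly with this siphon map. The subtle point is that when one reinstates $Y$ in a siphon $Z$ of $G'$, one must check that for every reaction of $G$ producing a member of $Z\cup\{Y\}$, some reactant of that reaction lies in $Z\cup\{Y\}$; this is immediate for reactions untouched by the reduction but requires a separate check for the two-step paths $y\to Y\to y'$. A careful case analysis on whether the relevant complex $y$ shares a species with $Z$ (which it must, by applying the siphon closure in $G'$ to the collapsed reaction $y\to y'$) completes the argument, and pairing this with the matching decomposition of conservation relations established above yields the biconditional statement of the proposition.
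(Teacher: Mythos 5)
The paper itself contains no proof of this statement: it is imported verbatim as Theorems~1 and~2 of the cited reference \cite{marcondes:persistence}, so there is nothing in the text to compare your argument against. Judged on its own, your strategy --- induct on single removal steps and, for each of the two elementary operations, build parallel correspondences between minimal siphons and between nonnegative conservation relations --- is the right one and is essentially how the cited reference argues. The catalyst case as you describe it is correct. In the intermediate case the outline is sound, but two points need repair. First, there is no freedom ``up to multiples of the $Y$-mass balance'': orthogonality of a conservation relation $\omega$ of $G$ to the reaction vectors of $y\to Y$ and $Y\to y'$ forces $\omega_Y=\omega\cdot y=\omega\cdot y'$ for \emph{every} complex adjacent to $Y$, so $\omega_Y$ is pinned down by the restriction of $\omega$; conversely the same identity, recovered in $G'$ from the collapsed reactions, is what makes the lift well defined and is precisely what guarantees that $Y$ enters the support of the lifted relation only when it also enters the corresponding siphon --- the consistency you correctly flag as the delicate point, but which you should derive rather than assume. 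Second, your criterion for when $Y$ must be adjoined to a siphon $Z$ of $G'$ should be stated in terms of $G$ (``some reaction $Y\to y'$ has a product species in $Z$'') rather than in terms of collapsed reactions of $G'$: when $y\to Y\to y$ has equal endpoints no collapsed reaction is created, yet $Y$ must still be adjoined. With those corrections, the case analysis you sketch (every in-complex of $Y$ must meet $Z$, which follows by applying the siphon property of $Z$ in $G'$ to a collapsed reaction terminating at an out-complex that meets $Z$) does close the argument, and minimality of siphons is preserved because proper subsets on either side transport back across the same correspondence.
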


In several cases, removal of intermediates and catalysts yields a so-called \emph{monomolecular network}. That is, a network whose complexes agree with some species or the complex zero. For example, the network in \eqref{eq:mon} is monomolecular.
In this case, checking the hypothesis of {Proposition~2 in the main text} is straightforward, in view of the next lemma.

\begin{lemma}[Proposition 3 in \cite{marcondes:persistence}] \label{lem:red}
Let $G$ be a monomolecular network.  Each minimal siphon of $G$ contains the support of a positive conservation relation if and only if all connected components of $G$ are strongly connected.
\end{lemma}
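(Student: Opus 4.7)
The plan is to reduce both sides of the biconditional to purely graph-theoretic statements about the reaction graph, exploiting the special form of the stoichiometric matrix in the monomolecular case.

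First I would characterize the positive conservation relations. For a monomolecular reaction $X_i\to X_j$ the corresponding column of $N$ is $e_j-e_i$, so $\omega^{T}N=0$ forces $\omega_i=\omega_j$; inflows $0\to X_j$ force $\omega_j=0$ and outflows $X_i\to 0$ force $\omega_i=0$. Thus every conservation vector is constant on each connected component of the reaction graph and vanishes on any component containing the zero complex. Calling a component \emph{closed} when it avoids $0$, the support of any positive conservation relation is a disjoint union of complete closed components, and each closed component $C$ yields a positive conservation relation whose support is exactly $C$ (take $\omega$ to be the indicator of $C$).

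Next I would characterize minimal siphons graph-theoretically. Because each reaction has at most one reactant species, the siphon closure condition reads: $Z$ is closed under direct predecessors in the species-only directed subgraph, and no species that is the target of an inflow $0\to X_j$ may lie in $Z$. Passing to the strongly-connected-component condensation of the species-only subgraph, a siphon is then a predecessor-closed union of SCCs whose members receive no inflow from $0$, and a minimal siphon is exactly a single source SCC of this condensation that is free of inflow from $0$.

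With these two characterizations the $\Leftarrow$ direction is immediate: if every connected component of $G$ is strongly connected then each component is a single SCC; a minimal siphon is forced to sit inside a closed component $C$ and, being the unique source SCC and equal to $C$, coincides with the support of the positive conservation relation associated with $C$. For the $\Rightarrow$ direction I would argue the contrapositive. If a closed component $C$ fails to be strongly connected, its condensation has a source SCC $S\subsetneq C$; by the second step $S$ is a minimal siphon, but any positive conservation relation with support in $S$ would, by the first step, have support equal to a union of complete closed components contained in $S\subsetneq C$, which is impossible because $C$ is a single component. This contradicts the hypothesis.

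The main obstacle will be the treatment of connected components that contain the $0$ complex, which admit no positive conservation relation at all. Here the forward direction must produce, inside any non-strongly-connected open component, a minimal siphon which is free of inflow from $0$, and the backward direction must verify that strong connectivity of such a component rules out every minimal siphon intersecting it. I expect both to hinge on the same observation: in the species-only condensation of an open component, strong connectivity of the full component is equivalent to the non-existence of a source SCC that avoids all inflow sites $\{X_j : 0\to X_j\}$, so this is precisely the combinatorial condition one must leverage to align the two sides of the equivalence.
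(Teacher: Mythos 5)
The paper itself offers no proof of this lemma: it is imported verbatim from \cite{marcondes:persistence}, so there is nothing in-text to compare against, and your proposal has to stand on its own. Your two structural reductions are correct and certainly in the spirit of the cited result: every conservation vector of a monomolecular network is constant on connected components and vanishes on any component containing the zero complex, so the supports of positive conservation relations are exactly the unions of whole ``closed'' components; and the minimal siphons are exactly the source SCCs of the species-only condensation that contain no target of an inflow reaction $0\to X_j$. With these in hand, the backward implication and the forward implication restricted to components avoiding the zero complex both go through as you describe.

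The gap is exactly where you flagged it, and it is fatal to the argument as written. Your proposed key observation --- that for a component containing the zero complex, strong connectivity is equivalent to the non-existence of a source SCC avoiding all inflow targets --- is false in the direction the forward implication needs. Consider the network with reactions $0\to X_1$ and $X_1\to X_2$. Its single connected component is not strongly connected, yet the only source SCC of the species digraph is $\{X_1\}$, which is an inflow target; hence there are no nonempty siphons at all, the siphon condition holds vacuously, and your strategy has no minimal siphon to exhibit. Indeed, with the definitions of siphon and conservation relation used in this paper, this example violates the stated biconditional itself, so no argument can close the gap without modifying the statement --- e.g.\ restricting the ``only if'' direction to components avoiding the zero complex, assuming no inflow reactions, or appealing to whatever additional conventions or hypotheses accompany Proposition~3 in \cite{marcondes:persistence}. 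Note that the direction actually used downstream in the paper (strong connectivity of all components implies that every minimal siphon contains the support of a positive conservation relation) is precisely the direction your argument does establish.
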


The network  in \eqref{eq:mon} is clearly strongly connected. Thus, we do not need to find the siphons of the network to conclude that 
each of its minimal siphons contains the support of a positive conservation relation and thereby conclude that \eqref{eq:int} does not admit boundary equilibria in stoichiometric compatibility classes with non-empty positive part.

\begin{corollary}\label{cor:boundary}
Let $G$ be a network and $G'$ be a network obtained after iterative removal of intermediates or catalysts from $G$ as described above. If $G'$ is a monomolecular network with all connected components  strongly connected, 
then $G$ has no boundary equilibria in any stoichiometric compatibility class $\mP_c$ such that $\mP_c^+\neq \emptyset$.
\end{corollary}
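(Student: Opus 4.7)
The plan is to prove Corollary~\ref{cor:boundary} by chaining together three results that are already available: Proposition~2 from the main text, Proposition~\ref{prop:red}, and Lemma~\ref{lem:red}. No new machinery is needed; the corollary is designed as a user-friendly packaging of these three facts, so the proof amounts to a short logical composition.

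First I would observe what needs to be shown to apply Proposition~2 of the main text: namely, that every minimal siphon of $G$ contains the support of a positive conservation relation. Once this hypothesis is verified, Proposition~2 directly yields the conclusion that $G$ has no boundary equilibria in any $\mP_c$ with $\mP_c^+\neq\emptyset$. Thus the entire task reduces to verifying this siphon/conservation-relation condition for $G$.

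Next I would transfer this verification from $G$ to the simpler network $G'$ using Proposition~\ref{prop:red}, which asserts that the property ``every minimal siphon contains the support of a positive conservation relation'' is preserved (in both directions) under iterated removal of intermediates and catalysts. Hence it suffices to check the property for $G'$. Since by hypothesis $G'$ is monomolecular with every connected component strongly connected, Lemma~\ref{lem:red} applies and gives precisely the required property for $G'$. Tracing back through Proposition~\ref{prop:red} and then applying Proposition~2 finishes the argument.

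There is essentially no obstacle here, since each link in the chain is a direct invocation of a previously stated result; the only thing to be careful about is to state the logical equivalences in the correct direction (Proposition~\ref{prop:red} is an ``if and only if'', so we use the ``if'' direction: the property for $G'$ implies the property for $G$). I would keep the write-up to a few sentences, emphasizing the chain $G' \text{ monomolecular, strongly connected} \Rightarrow \text{siphon condition on } G' \Rightarrow \text{siphon condition on } G \Rightarrow \text{no boundary equilibria in } G$.
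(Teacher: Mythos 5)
Your proposal is correct and matches the paper's intended argument exactly: Lemma~\ref{lem:red} gives the siphon condition for $G'$, Proposition~\ref{prop:red} transfers it back to $G$, and Proposition~2 of the main text then precludes boundary equilibria. Nothing is missing.
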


 {
 For example, consider the ``gene transcription network''  of the main text.  The species $X_5$ and $X_7$ are intermediates. The reaction network obtained upon their removal is:
  \begin{align*}
         X_1 & \cee{->} X_1+X_3 & X_3 & \cee{->} 0  & 
         X_2 & \cee{->} X_2+X_4 & X_4 & \cee{->} 0 & 2X_4 \cee{<=>} X_6.
 \end{align*}
For this network, $X_6$ is an intermediate and $X_1,X_2$ are catalysts. Removal of these three species yields the reaction network
 $$ X_3  \cee{<=>} 0  \cee{<=>} X_4.$$
 This is a strongly connected monomolecular network. By Corollary~\ref{cor:boundary}, there are no boundary equilibria in any $\mP_c$ as long as $\mP_c^+\neq \emptyset$. We have reached the same conclusion as in the main text without the need of finding the minimal siphons of the network. 
 }

\subsection{Newton polytope}  
We write a multivariate polynomial $f(x)\in \R[x_1,\dots,x_n]$ as a sum of monomials:
$$ f(x) = \sum_{\alpha\in \N^n} c_\alpha x^\alpha,$$
where $x^\alpha= x_1^{\alpha_1}\dots x_n^{\alpha_n}$ and $c_\alpha\in \R$, for which only a finite number  are non-zero.

The \emph{Newton polytope of $f(x)$}, denoted by $\mathcal{N}(f)$, is a closed convex set in $\R^n$, defined as the convex hull of the exponents $\alpha\in \N^n$ for which $c_\alpha\neq 0$ (See \cite[Section 2]{rockafellar} for a definition of convex hull).  The set of vertices of $\mathcal{N}(f)$ is a subset of the set of points $\alpha$ for which $c_\alpha\neq 0$.

The following is a well-known fact about the Newton polytope of a polynomial. The proof of the fact is constructive and provides an explicit way to find $\widehat{x}$ in Corollary~\ref{cor:main}(B). Thus it offers  a way to find stoichiometric compatibility classes (i.e.\,values of $c$) for which multistationarity exists.

\begin{proposition}\label{prop:newton}
Let $f(x) = \sum_{\alpha\in \N^n} c_\alpha x^\alpha$ and let $\alpha'$ be a vertex of $\mathcal{N}(f)$. 
Then there exists $x'\in \R^n_{>0}$ such that
$$ \sign(f(x'))= \sign(c_{\alpha'}).$$ 
\end{proposition}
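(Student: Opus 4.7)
The plan is to exploit the standard separating hyperplane property of vertices of a convex polytope together with a one-parameter family of evaluation points of the form $x'_i = t^{\lambda_i}$, which converts the polynomial identity into an asymptotic analysis in the single real variable $t$.

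First, I would invoke the following fact from convex analysis: since $\alpha'$ is a vertex of $\mathcal{N}(f)$, and $\mathcal{N}(f)$ is the convex hull of the finite set $S = \{\alpha \in \N^n \mid c_\alpha \neq 0\}$, there exists a vector $\lambda \in \R^n$ such that
\begin{equation*}
\lambda \cdot \alpha' > \lambda \cdot \alpha \qquad \textrm{for all } \alpha \in S,\ \alpha\neq\alpha'.
\end{equation*}
This is a routine consequence of strict separation of a vertex of a polytope from the convex hull of the remaining vertices; it is the workhorse of tropical/Newton polytope arguments.

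Next, I would define the curve $x(t) = (t^{\lambda_1},\dots,t^{\lambda_n}) \in \R^n_{>0}$ for $t>0$. Substituting into $f$ yields
\begin{equation*}
f(x(t)) = \sum_{\alpha\in S} c_\alpha\, t^{\lambda\cdot\alpha} = t^{\lambda\cdot\alpha'}\left( c_{\alpha'} + \sum_{\alpha\in S,\ \alpha\neq\alpha'} c_\alpha\, t^{\lambda\cdot\alpha - \lambda\cdot\alpha'} \right).
\end{equation*}
By the separating inequality, every exponent $\lambda\cdot\alpha - \lambda\cdot\alpha'$ appearing in the sum is strictly negative, so each term $t^{\lambda\cdot\alpha - \lambda\cdot\alpha'}$ tends to zero as $t\to\infty$. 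Since $S$ is finite, the parenthesised expression tends to $c_{\alpha'}\neq 0$. Because $t^{\lambda\cdot\alpha'}>0$, for all sufficiently large $t$ the sign of $f(x(t))$ equals the sign of $c_{\alpha'}$. Choosing such a $t$ and setting $x'=x(t)\in\R^n_{>0}$ completes the proof.

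The only non-routine step is the appeal to the separating functional for a vertex; this could be cited directly from a standard reference on polytopes (e.g.\ via Farkas' lemma or by noting that a vertex is the unique maximiser of some linear functional on the polytope, restricted to the finite vertex set). Everything else is an asymptotic computation in one real variable. The argument also has the pleasant by-product of providing an explicit recipe for $x'$ from $\lambda$, which is exactly what is needed in Step 7 of the procedure to exhibit a concrete witness $\hat x$ for multistationarity.
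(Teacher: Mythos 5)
Your proposal is correct and follows essentially the same route as the paper's proof: both select a linear functional that is uniquely maximised over the exponent set at the vertex $\alpha'$ and evaluate $f$ along the monomial curve $t\mapsto(t^{\lambda_1},\dots,t^{\lambda_n})$, letting $t\to\infty$ so that the term $c_{\alpha'}t^{\lambda\cdot\alpha'}$ dominates. Your version of factoring out $t^{\lambda\cdot\alpha'}$ is only a cosmetic rearrangement of the paper's argument.
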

\begin{proof}
By hypothesis $c_{\alpha'}\neq 0$. 
Since $\alpha'$ is a vertex in a bounded  convex polytope, 
there exists a separating hyperplane $\omega\cdot x = T$ that intersects the polytope only in $\alpha'$ and such that $\omega\cdot x' < T$ for any other point $x'$ of the polytope (see e.g.\,Definition 3.5 and Theorem 3.8 in \cite{joswig-theobald}).
In particular, $\omega\cdot \alpha<\omega\cdot \alpha'$ for all vertices $\alpha\neq \alpha'$. 

For {$ y = t^{\omega}= \prod_{i=1}^n t^{\omega_i}$}, we have
$$ f(y)=  \sum_{\alpha\in \N^n} c_\alpha (t^{\omega})^\alpha =
 \sum_{\alpha\in \N^n} c_\alpha t^{\omega\cdot \alpha} 
 = c_{\alpha'} t^{\omega\cdot \alpha'}  +  \sum_{\alpha\in \N^n, \alpha\neq \alpha'} c_\alpha t^{\omega\cdot \alpha}.  $$
 {Now $f(y)$ is a well defined function for $t\in \R_{>0}$, which tends to $+\infty$  for $t$ tending to infinity and $c_{\alpha'}>0$ and to $-\infty$ for $c_{\alpha'}<0$ (by assumption $c_{\alpha'}\not=0$).  Hence, by letting $t$ be large enough, the sign of $f(y)$ agrees with the sign of $c_{\alpha'}$.}
\end{proof}

\paragraph{Finding the vertices in practice. }
In the examples below, we find the vertices of the Newton polytope of the polynomial of interest as follows. 
We use Maple (version 2015). We construct first the polytope using the command \emph{PolyhedralSet} and subsequently use the command \emph{VerticesAndRays}, from the package \emph{PolyhedralSets}, to find the vertices.

\section{Details on the examples in the main text}\label{sec:mainex}

\subsection{Phosphorylation of two substrates}

In this subsection we consider the network in the first row of Table 1 in the main text. 

 We consider a system in which two substrates  can be either unphosphorylated, $A, B$ or phosphorylated $A_p,B_p$. Phosphorylation of both substrates is catalyzed by the same kinase $K$  and dephosphorylation of $A_p,B_p$ is catalyzed by the same phosphatase $F$. 
That is, the system consists of two futile cycles sharing  kinase and phosphatase.

The reactions of the system are:
\begin{align*}
A + K & \cee{<=>[\k_1][\k_2]}   AK  \cee{->[\k_3]} A_p + K  &  B + K & \cee{<=>[\k_7][\k_8]}   BK  \cee{->[\k_9]}  B_p + K \\
A_p + F & \cee{<=>[\k_4][\k_5]}   A_pF  \cee{->[\k_6]}  A + F & B_p + F  & \cee{<=>[\k_{10}][\k_{11}]}   B_pF \cee{->[\k_{12}]}  B + F.
\end{align*}
{This network is a PTM network with substrates $A,B,A_p,B_p$, enzymes $K,F$ and intermediates $AK, BK, A_pF, B_pF$.}
It was shown in \cite{feliu2011} that this network with mass-action kinetics is multistationary. Here we find the necessary and sufficient condition on the reaction rate constants for having multistationarity in some stoichiometric compatibility class.
We let 
\begin{align*}
 X_1 & =K, &   X_3 & =A, & X_5& =B, & X_7 &=AK, &   X_9& =A_pF,  \\
 X_2  & =F,  & X_4 & =A_p,  & X_6& =B_p,  & X_8 & =BK, & X_{10} & =B_pF.
\end{align*}

 The stoichiometric matrix $N$ of the network and a row reduced matrix $W$ whose rows from  a basis of $\im(N)^\perp$
are
{\small \begin{align*}
N & =\left(
\begin {array}{rrrrrrrrrrrr} -1&1&1&0&0&0&-1&1&1&0&0&0
\\ 0&0&0&-1&1&1&0&0&0&-1&1&1\\ -1&
1&0&0&0&1&0&0&0&0&0&0\\ 0&0&1&-1&1&0&0&0&0&0&0&0
\\ 0&0&0&0&0&0&-1&1&0&0&0&1\\ 0&0&0
&0&0&0&0&0&1&-1&1&0\\ 1&-1&-1&0&0&0&0&0&0&0&0&0
\\ 0&0&0&0&0&0&1&-1&-1&0&0&0\\ 0&0
&0&1&-1&-1&0&0&0&0&0&0\\ 0&0&0&0&0&0&0&0&0&1&-1&-1
  \end{array}\right), \\[10pt]
    W & = \begin{pmatrix}
1&0&0&0&0&0&1&1&0&0
\\  0&1&0&0&0&0&0&0&1&1\\  0&0&1&1&0
&0&1&0&1&0\\0&0&0&0&1&1&0&1&0&1
\end{pmatrix}.
  \end{align*}}
 The rank of $N$ is $s=6$.     The matrix $W$ gives rise to the conservation relations
\begin{align*}
c_1 &= x_1+x_7+x_8, &  c_3  &   =   x_3+x_4+x_7+x_9, \\
c_2 &= x_2+x_9+x_{10}, &   c_4  &=   x_5+x_6+x_8+x_{10},
\end{align*}
where $c_1,c_2,c_3,c_4$ correspond to the total amounts of kinase, phosphatase, substrate $A$ and substrate $B$, respectively.

 With mass-action kinetics, the vector of reaction rates is
 $$ v(x)= (\k_{{1}}x_{{1}}x_{{3}},  \k_{{2}}x_{{7}},  \k_{{3}}x_{{7}},  \k_{{4}}x_{{2}}x_{{4}},  \k_{{5}}x_{{9}},  \k_{{6}}x_{{9}},  \k_{{7}}x_{{1}}x_{{5}},  \k_{{8}}x_{{8}},  \k_{{9}}x_{{8}},  \k_{{10}}x_{{2}}x_{{6}} , \k_{{11}}x_{{10}},  \k_{{12}}x_{{10}}).$$
The function $f(x)=Nv(x)$ is thus 
 \begin{align*}
          f(x) &= ( -\k_{1}x_{1}x_{3}-\k_{7}x_{1}x_{5}+\k_{2}x_{7}+\k_{3}x_{7}+\k_{8}x_{8}+\k_{9}x_{8}, \\
 &\qquad   -\k_{4}x_{2}x_{4}-\k_{10}x_{2}x_{6}+\k_{5}x_{9}+\k_{6}x_{9}+\k_{11}x_{10}+\k_{12}x_{10},  -\k_{1}x_{1}x_{3}+\k_{2}x_{7}+\k_{6}x_{9}, \\
 & \qquad -\k_{7}x_{1}x_{5}+\k_{8}x_{8}+\k_{12}x_{10}, -\k_{10}x_{2}x_{6}+\k_{9}x_{8}+\k_{11}x_{10}, \k_{1}x_{1}x_{3}-\k_{2}x_{7}-\k_{3}x_{7}, \\ & \qquad  \k_{7}x_{1}x_{5}-\k_{8}x_{8}-\k_{9}x_{8}, \k_{4}x_{2}x_{4}-\k_{5}x_{9}-\k_{6}x_{9}, 
 \k_{10}x_{2}x_{6}-\k_{11}x_{10}-\k_{12}x_{10}).
 \end{align*}

We apply the algorithm to this network with the matrix $N$ and the vector $v(x)$.

\medskip\noindent
{\bf Step 1. } Mass-action kinetics fulfils assumption  \eqref{eq:assumption}. The function $f(x)$ and $W$ are given above and the matrix $W$  is row reduced.

\medskip\noindent
{\bf Step 2. }  {The network is a PTM network, hence it is conservative and thus dissipative.}

\medskip\noindent
{\bf Step 3.  }  {We apply the reduction technique from Section~\ref{sec:siphons}. }
The network has four intermediates $AK, A_pF, BK, B_pF$. After their elimination, we are left with the reaction network
\begin{align*}
A + K &  \cee{->} A_p + K  &  B + K  & \cee{->}  B_p + K  & 
A_p + F &   \cee{->}  A + F & B_p + F  &   \cee{->}  B + F.
\end{align*}
This network has two catalysts: $K,F$. Their elimination yields the reaction network {(the so-called \emph{underlying substrate network} in the main text)}
$$ A \cee{<=>} A_p \qquad B \cee{<=>} B_p.$$
This is a monomolecular network with two strongly connected components. By Corollary~\ref{cor:boundary}, there are no boundary equilibria in any $\mP_c$ for which $\mP_c^+\neq \emptyset$.

\medskip\noindent
{\bf Step 4.  } 
For our choice of $W$, 
we have $i_1=1,i_2=2,i_3=3,i_4=5$.
The function $\varphi_c(x)$ is thus
\begin{align*}
\varphi_c(x) & = \big( x_1+x_7+x_8-c_1, x_2+x_9+x_{10}-c_2,   x_3+x_4+x_7+x_9-c_3,   \\ &  \qquad 
-\k_{4}x_{2}x_{4}+\k_{3}x_{7}+\k_{5}x_{9},  x_5+x_6+x_8+x_{10}-c_4, 
-\k_{10}x_{2}x_{6}+\k_{9}x_{8}+\k_{11}x_{10},  \\ &  \qquad
 \k_{1}x_{1}x_{3}-\k_{2}x_{7}-\k_{3}x_{7}, \k_{7}x_{1}x_{5}-\k_{8}x_{8}-\k_{9}x_{8}, 
\k_{4}x_{2}x_{4}-\k_{5}x_{9}-\k_{6}x_{9}, \\ &  \qquad
\k_{10}x_{2}x_{6}-\k_{11}x_{10}-\k_{12}x_{10} \big).
\end{align*}
The Jacobian matrix $M(x)=J_{\varphi_c}(x)$ is
{\small $$  
\begin{pmatrix}
1&0&0&0&0&0&1&1&0&0
\\ 0&1&0&0&0&0&0&0&1&1\\ 0&0&1&1&0 
&0&1&0&1&0\\ 0&-\k_{4}x_{4}&0&-\k_{4}x_{2}
&0&0&\k_{3}&0&\k_{5}&0\\ 0&0&0&0&1&1&0&1&0&1
\\ 0& -\k_{10} x_6 & 0&0&0& -\k_{10} x_2 &0& \k_9 &0& \k_{11}\\ 
\k_{1}
x_{3}&0&\k_{1}x_{1}&0&0&0&-\k_{2}-\k_{3}&0&0&0
\\ \k_{7}x_{5}&0&0&0&\k_{7}x_{1}&0&0&-\k_{8}-\k_{9}&0&0\\ 0&\k_{4}x_{4}&0&\k_{4}x_{2}&0&0&0&0&-\k_{5}-\k_{6}&0\\ 0&\k_{10}x_{6}&0&0&0&\k_{10}x_{2}&0&0&0&-\k_{11}-\k_{12}
\end{pmatrix}.$$}
The determinant of $M(x)$ is a large polynomial. We omit it here. 

\medskip\noindent
{\bf Step 5.  }  The determinant of $M(x)$ has terms of sign $(-1)^{s+1}=-1$.  {We postpone the discussion of the conditions on the reaction rate constants for which all terms have sign $(-1)^s$ to Step 7.}
We proceed to the next step.

\medskip\noindent
{\bf Step 6.  }  There is a non-interacting set  with $s=6$ species: $$\{X_4, X_6, X_7,X_8,X_9,X_{10} \} = \{
A_p,   B_p, AK, BK, A_pF,   B_pF\}.$$
By solving the equilibrium equations $f_4=f_6=f_7=f_8=f_9=f_{10}=0$
in the variables $x_4,x_6,x_7,x_8,x_9,x_{10}$, we find the following positive parameterization of the set of equilibria in terms of $\widehat{x}=(x_1,x_2,x_3,x_5)$:
\begin{align*}
x_{4} & =\frac{ \left( \k_{6}+\k_{5} \right) \k_{3}x_{3}x_{1}\k_{1}}{\k_{6} \left( \k_{3}+\k_{2} \right) x_{2}\k_{4}},
& x_{7}& =\frac{\k_{1}x_{1}x_{3}}{\k_{3}+\k_{2}},  & x_{9}& =\frac{\k_{3}x_{3}x_{1}\k_{1}}{\k_{6} \left( \k_{3}+\k_{2} \right) },\\
x_{6}& =\frac{ \left( \k_{12}+\k_{11} \right) \k_{9}x_{5}x_{1}\k_{7}}{\k_{12} \left( \k_{9}+\k_{8} \right) \k_{10}x_{2}},
& x_{8}& =\frac {\k_{7}x_{1}x_{5}}{\k_{9}+\k_{8}},
& x_{10} & =\frac {\k_{9}x_{5}x_{1}\k_{7}}{\k_{12} \left( \k_{9}+\k_{8} \right) }.
\end{align*}
{The free variables of the  parameterization are the concentrations of the two enzymes and one substrate per conservation relation involving substrates.}

 \medskip\noindent
{\bf Step 7.  } 
The function $a(\widehat{x})$ is a large rational function with positive denominator. Therefore, the numerator of this function, a polynomial $p(\widehat{x})$,  determines  the sign of $a(\widehat{x})$. 
The coefficients are polynomials in $\k_1,\dots,\k_{10}$. All  but one of the  coefficients
are polynomials in $\k_1,\dots,\k_{10}$ with positive coefficients. Therefore, all coefficients but one are always positive, independently of the values of the reaction rate constants $\k_1,\dots,\k_{10}$.

The only coefficient with sign depending on the specific values of  $\k_1,\dots,\k_{10}$ is
\begin{align*}
\alpha(\k) & =\k_{1}\k_{7} \left( \k_{3}\k_{12}-\k_{6}\k_{9} \right)  ( \k_{1}\k_{3}\k_{5}\k_{8}\k_{10}\k_{12}+\k_{1}\k_{3}\k_{5}\k_{9}\k_{10}\k_{12}+\k_{1}\k_{3}\k_{6}\k_{8}\k_{10}\k_{12} \\ & 
+\k_{1}\k_{3}\k_{6}\k_{9}\k_{10}\k_{12}-\k_{2}\k_{4}\k_{6}\k_{7}\k_{9}\k_{11}-\k_{2}\k_{4}\k_{6}\k_{7}\k_{9}\k_{12}-\k_{3}\k_{4}\k_{6}\k_{7}\k_{9}\k_{11}-\k_{3}\k_{4}\k_{6}\k_{7}\k_{9}\k_{12} ). 
\end{align*}
If $\alpha(\k)\geq 0$, then all coefficients of $p(\widehat{x})$ are positive, and hence
$a(\widehat{x})$ is positive for all positive $\widehat{x}$.
Using $(-1)^s=(-1)^6=1$, Corollary \ref{cor:main}(A) {(Corollary 2(A) in the main text)} gives that there is a unique positive equilibrium in each stoichiometric compatibility class with non-empty positive relative interior. 

When this coefficient is negative, then we need to check whether $p(\widehat{x})$ is negative for some $\widehat{x}$. We analyse this by finding the Newton polytope and using Proposition~\ref{prop:newton}.

The coefficient $\alpha(\k) $ corresponds to the monomial $x_1^2x_2x_3x_5$.
The exponent vectors of the monomials of $p(\widehat{x})$ are:
$$
\begin{array}{lllllll}
(0, 3, 0, 0), &  (0, 3, 0, 1),&  (0, 3, 1, 0), & (1, 2, 0, 0),&  (1, 2, 0, 1),&  (1, 2, 1, 0),&  (1, 3, 0, 0),\\  
(1, 3, 0, 1) &
 (1, 3, 1, 0), & (2, 1, 0, 0), & (2, 1, 0, 1), & (2, 1, 1, 0), & (2, 1, 1, 1), & (2, 2, 0, 0), \\ (2, 3, 0, 0), & (3, 0, 0, 1), & (3, 0, 1, 0), & (3, 1, 0, 1), & (3, 1, 1, 0).
\end{array}
$$
We find the vertices of the convex hull of the exponent vectors, and find that they are
$$
\begin{array}{lllllll}
(0, 3, 0, 1), & (0, 3, 1, 0), & \mathbf{(2, 1, 1, 1)}, &  (0, 3, 0, 0), &  (2, 1, 0, 0), & (3, 0, 0, 1),  \\ (3, 0, 1, 0), & (1, 3, 0, 1), & (1, 3, 1, 0), & (2, 3, 0, 0), & (3, 1, 0, 1),&  (3, 1, 1, 0).
\end{array}
$$
Thus the exponent vector of the monomial of interest, $(2, 1, 1, 1)$ (highlighted in bold), is a vertex of the Newton polytope. Therefore, by Proposition~\ref{prop:newton}, there exists
$\widehat{x}$ such that $p(\widehat{x})$ is negative. Corollary \ref{cor:main}(B) {(Corollary 2(B) in the main text)}  gives that there is a 
stoichiometric compatibility class with multiple positive equilibria.

The condition $\alpha(\k)<0$ can be rewritten as:
\begin{align*} 
( \k_{3}\k_{12}-\k_{6}\k_{9} )  
(\k_{3} \k_{12}  \k_{1}\k_{10} (\k_{5}+\k_6)(\k_{8}+ \k_{9})  
 - \k_{6} \k_9    \k_{4}\k_{7}(\k_{2}+\k_{3})(\k_{11} +  \k_{12}) )<0, 
  \end{align*}
which in turn can be written as
\begin{align*} 
( \k_{3}\k_{12}-\k_{6}\k_{9} )  
\left(\k_{3} \k_{12} \cdot \frac{\k_{1}}{\k_{2}+\k_{3}} \cdot \frac{\k_{10} }{\k_{11} +  \k_{12} }
 - \k_{6} \k_9 \cdot  \frac{\k_{4}}{\k_{5}+\k_6}\cdot  \frac{\k_{7}}{\k_{8}+ \k_{9}  }   \right)<0, 
  \end{align*}
Note that $\k_3,\k_6,\k_9,\k_{12}$ are the catalytic constants of phosphorylation/dephosphorylation of $A$ and $B$ ($k_{c1}, k_{c2}, k_{c3}, k_{c4}$ in the main text), and 
$$k_{M1}^{-1}=\frac{\k_{1}}{\k_{2}+\k_{3}}, \qquad k_{M2}^{-1}=\frac{\k_{4}}{\k_{5}+\k_6}, \qquad k_{M3}^{-1}=\frac{\k_{7}}{\k_{8}+ \k_{9}},
\qquad k_{M4}^{-1}=\frac{\k_{10} }{\k_{11} +  \k_{12}}$$ 
are the inverses of the Michaelis-Menten constants of $K$ and $F$ for each substrate.
 Therefore, the necessary and sufficient condition for multistationarity can be written in terms of the catalytic constants and the Michaelis-Menten constants,
 \begin{align*} 
( \k_{3}\k_{12}-\k_{6}\k_{9} )  
\left( \frac{\k_{3} \k_{12}}{k_{M1} k_{M4} }  - \frac{\k_{6} \k_9}{ k_{M2} k_{M3}}   \right)<0.
  \end{align*}
This proves the condition for multiple and unique equilibria given in the first row of Table 1 in the main text, by letting 
$$k_{c1}=\k_3, \qquad k_{c2}=\k_6, \qquad k_{c3}=\k_9, \qquad k_{c4}=k_{12}.$$
    In particular, we have that
 \begin{itemize}
 \item If $ \k_{3}\k_{12}>\k_{6}\k_{9}$, then we need $\frac{\k_{3} \k_{12}}{k_{M1} k_{M4} }   < \frac{\k_{6} \k_9}{ k_{M2} k_{M3}} $ for multiple equilibria to occur.
  \item If $ \k_{3}\k_{12}<\k_{6}\k_{9}$, then we need $\frac{\k_{3} \k_{12}}{k_{M1} k_{M4} }   > \frac{\k_{6} \k_9}{ k_{M2} k_{M3}} $ for multiple equilibria to occur.
 \end{itemize}

\subsection{Two-site phosphorylation system}

In this subsection we consider the network in the second row of Table 1 in the main text.
The conditions given here were also found in \cite{maya-bistab}, the paper that lay the foundations of this algorithm. In this work we consider a direct route using the function $\varphi_c$ and avoiding changes of variables. 
We explain here how to find the conditions  using the algorithm in the main text.  

 We consider a system in which one substrate undergoes sequential and distributive phosphorylation by a kinase $K$ and sequential and distributive dephosphorylation by a phosphatase $F$. 
 The three phosphoforms of the substrate are $A, A_p, A_{pp}$.
 The reactions of the system are:
\begin{align*}
A + K  \cee{<=>[\k_1][\k_2]}   AK  \cee{->[\k_3]} A_p + K  & \cee{<=>[\k_7][\k_8]}   A_pK  \cee{->[\k_9]}  A_{pp} + K \\
A_{pp} + F   \cee{<=>[\k_{10}][\k_{11}]}   A_{pp}F \cee{->[\k_{12}]}  A_p + F &  \cee{<=>[\k_4][\k_5]}   A_pF  \cee{->[\k_6]}  A + F 
\end{align*}
{This network is a PTM network with substrates $A,A_p,A_{pp}$, enzymes $K,F$ and intermediates $AK, A_pK, A_pF,A_{pp}F$.}
We let 
\begin{align*}
 X_1 & =K, &   X_3 & =A, & X_5& =A_{pp}, & X_6 &=AK, &   X_7& =A_pF,  \\
 X_2  & =F,  & X_4 & =A_p,  & &  & X_8 & =A_pK, & X_{9} & =A_{pp}F.
\end{align*}

 The stoichiometric matrix $N$ of the network and a row reduced matrix $W$ whose rows from  a basis of $\im(N)^\perp$
are
{\small \begin{align*}
N & =\left(
\begin {array}{rrrrrrrrrrrr} -1&1&1&0&0&0&-1&1&1&0&0&0
\\ 0&0&0&-1&1&1&0&0&0&-1&1&1\\ -1&
1&0&0&0&1&0&0&0&0&0&0\\ 0&0&1&-1&1&0&-1&1&0&0&0&1
\\ 0&0&0&0&0&0&0&0&1&-1&1&0\\ 1&-1
&-1&0&0&0&0&0&0&0&0&0\\ 0&0&0&1&-1&-1&0&0&0&0&0&0
\\ 0&0&0&0&0&0&1&-1&-1&0&0&0\\ 0&0
&0&0&0&0&0&0&0&1&-1&-1\end {array}
\right), \\[10pt]
    W & = \begin{pmatrix}
 1&0&0&0&0&1&0&1&0
\\ 0&1&0&0&0&0&1&0&1\\  0&0&1&1&1&1
&1&1&1 
\end{pmatrix}.
  \end{align*}}
  The rank of $N$ is $s=6$.
     The matrix $W$ gives rise to the conservation relations
\begin{align*}
c_1 &= x_1+x_6+x_8, & c_2 &= x_2+x_7+x_9,  & c_3  &   =   x_3+x_4+x_5+x_6+x_7+x_8+x_9,
\end{align*}
where $c_1,c_2,c_3$ correspond to the total amounts of kinase, phosphatase and  substrate $A$,  respectively.

 With mass-action kinetics, the vector of reaction rates is
 $$ v(x)=(\k_{{1}}x_{{1}}x_{{3}},\k_{{2}}x_{{6}},\k_{{3}}x_{{6}},\k_{{4}}x_{{2}}x_{
{4}},\k_{{5}}x_{{7}},\k_{{6}}x_{{7}},\k_{{7}}x_{{1}}x_{{4}},\k_{{8}}x_{{8}
},\k_{{9}}x_{{8}},\k_{{10}}x_{{2}}x_{{5}},\k_{{11}}x_{{9}},\k_{{12}}x_{{9}
})
.$$
The function $f(x)=Nv(x)$ is thus 
 \begin{align*}
 f(x) &= ( -\k_{{1}}x_{{1}}x_{{3}}-\k_{{7}}x_{{1}}x_{{4}}+\k_{{2}}x_{{6}}+\k_{{3}}x_{{6}}+\k_{{8}}x_{{8}}+\k_{{9}}x_{{8}}, \\ & \qquad -\k_{{4}}x_{{2}}x_{{4}}-\k_{{10}}x_{{2}}x_{{5}}+\k_{{5}}x_{{7}}+\k_{{6}}x_{{7}}+\k_{{11}}x_{{9}}+\k_{{12}}x_{{9}},  -\k_{{1}}x_{{1}}x_{{3}}+\k_{{2}}x_{{6}}+\k_{{6}}x_{{7}}, 
 \\ & \qquad  -\k_{{4}}x_{{2}}x_{{4}}-\k_{{7}}x_{{1}}x_{{4}}+\k_{{3}}x_{{6}}+\k_{{5}}x_{{7}}+\k_{{8}}x_{{8}}+\k_{{12}}x_{{9}},  -\k_{{10}}x_{{2}}x_{{5}}+\k_{{9}}x_{{8}}+\k_{{11}}x_
{{9}}, 
\\ & \qquad  \k_{{1}}x_{{1}}x_{{3}}-\k_{{2}}x_{{6}}-\k_{{3}}x_{{6}},  \k_{{4}}x_{{2}}x_{{4}}-\k_{{5}}x_{{7}}-\k_{{6
}}x_{{7}},  \k_{{7}}x_{{1}}x_{{4}}-\k_{{8}}x_{{8}}-\k_{{9}}x_{{8}}, 
\\ & \qquad  \k_{{10}}x_{{2}}x_{{5}}-\k_{{11}}x_{{9}}-\k_{{12}}x_{{9}} ).
 \end{align*}

We apply the algorithm to this network with the matrix $N$ and the vector $v(x)$.

\medskip\noindent
{\bf Step 1. } Mass-action kinetics fulfils assumption  \eqref{eq:assumption}. The function $f(x)$ and $W$ are given above and the matrix $W$  is row reduced.

\medskip\noindent
{\bf Step 2. }  {This network is dissipative since it is a PTM network.}

\medskip\noindent
{\bf Step 3.  } 
The network has four intermediates $AK, A_pK, A_pF, A_{pp}F$. After their elimination, we are left with
a reaction network with two catalysts: $K,F$.
Their elimination yields the {following underlying substrate network}
$$ A \cee{<=>} A_p   \cee{<=>} A_{pp}.$$
This is a monomolecular network with two strongly connected components. By Corollary~\ref{cor:boundary}, there are no boundary equilibria  in any $\mP_c$ for which $\mP_c^+\neq \emptyset$.

\medskip\noindent
{\bf Step 4.  } 
For our choice of $W$, we have $i_1=1,i_2=2,i_3=3$.
The function $\varphi_c(x)$ is thus
 \begin{align*}
\varphi_c(x) & = \big(x_1+x_6+x_8-c_1 ,  x_2+x_7+x_9-c_2, x_3+x_4+x_5+x_6+x_7+x_8+x_9 - c_3, 
 \\ & \qquad  -\k_{{4}}x_{{2}}x_{{4}}-\k_{{7}}x_{{1}}x_{{4}}+\k_{{3}}x_{{6}}+\k_{{5}}x_{{7}}+\k_{{8}}x_{{8}}+\k_{{12}}x_{{9}},  -\k_{{10}}x_{{2}}x_{{5}}+\k_{{9}}x_{{8}}+\k_{{11}}x_
{{9}}, 
\\ & \qquad  \k_{{1}}x_{{1}}x_{{3}}-\k_{{2}}x_{{6}}-\k_{{3}}x_{{6}},  \k_{{4}}x_{{2}}x_{{4}}-\k_{{5}}x_{{7}}-\k_{{6
}}x_{{7}},  \k_{{7}}x_{{1}}x_{{4}}-\k_{{8}}x_{{8}}-\k_{{9}}x_{{8}}, 
\\ & \qquad  \k_{{10}}x_{{2}}x_{{5}}-\k_{{11}}x_{{9}}-\k_{{12}}x_{{9}} \big).
 \end{align*}
The Jacobian matrix $M(x)=J_{\varphi_c}(x)$ is
{\small $$  
\begin{pmatrix}
 1&0&0&0&0&1&0&1&0
\\ 0&1&0&0&0&0&1&0&1\\ 0&0&1&1&1&1
&1&1&1\\ -\k_{{7}}x_{{4}}&-\k_{{4}}x_{{4}}&0&-\k_{{4}}x_{{2}}-\k_{{7}}x_{{1}}&0&\k_{{3}}&\k_{{5}}&\k_{{8}}&\k_{{12}}
\\ 0&-\k_{{10}}x_{{5}}&0&0&-\k_{{10}}x_{{2}}&0&0&\k_{{9}}&\k_{{11}}\\ \k_{{1}}x_{{3}}&0&\k_{{1}}x_{{1}}&0&0&-\k
_{{2}}-\k_{{3}}&0&0&0\\ 0&\k_{{4}}x_{{4}}&0&\k_{{4}}x_{{2}}&0&0&-\k_{{5}}-\k_{{6}}&0&0\\ \k_{{7}}x_{{4}}&0&0&\k
_{{7}}x_{{1}}&0&0&0&-\k_{{8}}-\k_{{9}}&0\\ 0&\k_{{10}}x_{{5}}&0&0&\k_{{10}}x_{{2}}&0&0&0&-\k_{{11}}-\k_{{12}}
\end{pmatrix}. $$}
The determinant of $M(x)$ is a large polynomial. We omit it here. 

\medskip\noindent  
{\bf Step 5.  }  The determinant of $M(x)$ has terms of sign $(-1)^{s+1}=-1$. 
 {We postpone the discussion of the conditions on the reaction rate constants for which all terms have sign $(-1)^s$ to Step 7.}
We proceed to the next step.

\medskip\noindent
{\bf Step 6.  }  This network is a PTM network and has a non-interacting set  with $s=6$ species: 
$$\{X_4, X_5, X_6,X_7,X_8,X_9 \} = \{A_p,   A_{pp}, AK, A_pF, A_pK,  A_{pp}F\}.$$

By solving the equilibrium equations $f_4=f_5=f_6=f_7=f_8=f_9=0$
in the variables $x_4,\dots,x_9$, we find the following positive parameterization of the set of equilibria in terms of $\widehat{x}=(x_1,x_2,x_3)$:
\begin{align*}
x_{4}& =\frac {  \k_{1} \k_{3}( \k_{5}+\k_{6} ) x_{1}x_{3}}{ ( \k_{2}+\k_{3} )\k_{4}\k_{6} x_{2}}, 
& 
x_{5} & =\frac {\k_{1}  \k_{3} ( \k_{5}+\k_{6} )   \k_{7}  \k_{9} ( \k_{11}+\k_{12} ) x_{1}^{2}x_{3}}
{ ( \k_{2}+\k_{3} )\k_{4}\k_{6} ( \k_{8}+\k_{9}) \k_{10}  \k_{12} x_{2}^{2} }, \\
x_{6} & =\frac{\k_{1}x_{1}x_{3}}{\k_{2}+\k_{3}}, &
x_{7} & =\frac{\k_{1}\k_{3}x_{1}x_{3}}{( \k_{2}+\k_{3} )\k_{6} }, \\
x_{8}& =\frac{ \k_{1}\k_{3}  ( \k_{5}+\k_{6})\k_{7} x_{1}^{2}x_{3}}{\k_{2}+\k_{3} )\k_{4} \k_{6} ( \k_{8}+\k_{9})x_{2} }, & 
x_{9}& =\frac {\k_{1}\k_{3} ( \k_{5}+\k_{6})\k_{7}  \k_{9}x_{1}^{2} x_{3}}{
\k_{2}+\k_{3} )\k_{4} \k_{6} ( \k_{8}+\k_{9})\k_{12} x_{2} }.
\end{align*}
{The free variables of this parameterization are the concentrations of the two enzymes and one of the substrates.}
We substitute $x_4,\dots,x_9$  with their expressions in the parameterization in $\det(M(x))$ to  find $a(\widehat{x})$. 
The function $a(\widehat{x})$ is a large rational function with positive denominator which we do not include here.

 \medskip\noindent
{\bf Step 7.  } 
The numerator of $a(\widehat{x})$, the polynomial $p(\widehat{x})$, determines therefore the sign of $a(\widehat{x})$. 
The coefficients are polynomials in $\k_1,\dots,\k_{10}$. 

The polynomial has 15 terms, 9 of which are positive for all values of the reaction rate constants.
The remaining 6 coefficients are polynomials  in $\k_1,\dots,\k_{10}$ that can either be positive or negative.

Five of the six coefficients are of the form $\beta(\k) b_1(\k)$, where
$\beta(\k)$ is a positive polynomial in $\k$ and 
$$ b_1(\k) = \k_3\k_{12} - \k_6 \k_9$$
(thus $b_1(\k)$ is the same for all five coefficients).
These five coefficients correspond to the monomials $x_1^3x_2^2x_3$, $x_1^2x_2^2x_3^2$, $x_1^3x_2x_3^2$, $x_1^2x_2^3x_3$ and $x_1^4x_3^2$.

The remaining coefficient is of the form $\gamma(\k) \alpha(\k)$, where
$\gamma(\k)$ is a positive polynomial in $\k$ and
\begin{align*}
\alpha(\k) & =  \k_{{1}}\k_{{3}}\k_{{4}}\k_{{8}}\k_{{10}}\k_{{12}}+\k_{{1}}\k_{{3}}\k_{{4}}\k_{{9}}\k_{{10}}\k_{{12}}+\k_{{1}}\k_{{3}}\k_{{5}}\k_{{7}}\k_{{10}}\k_{{12}}+\k_{{1}}\k_{{3}}\k_{{6}}\k_{{7}}\k_{{10}}\k_{{12}}
 \\ & \qquad -\k_{{1}}\k_{{4}}\k_{{6}}\k_{{7}}\k_{{9}}\k_{{11}}-\k_{{1}}\k_{{4}}\k_{{6}}\k_{{7}}\k_{{9}}\k_{{12}}-\k_{{2}}\k_{{4}}\k_{{6}}\k_{{7}}\k_{{9}}\k_{{10}}-\k_{{3}}\k_{{4}}\k_{{6}}\k_{{7}}\k_{{9}}\k_{{10}}.
\end{align*}
It corresponds to the monomial $x_1^2x_2^2x_3$.

Since $(-1)^6=1$, part Corollary \ref{cor:main}(A)   {(Corollary 2(A) in the main text)}  tells us that there is a unique positive equilibrium in each stoichiometric compatibility class with non-empty positive part, if 
$$b_1(\k)\geq 0\qquad\textrm{and}\qquad \alpha(\k) \geq 0.$$
The condition $\alpha(\k)\geq 0$ can be rewritten as:
\begin{align*} 
\k_{{1}}\k_{{3}}\k_{{10}}\k_{{12}} \big( \k_{{4}} ( \k_{{9}}+\k_{{8}} ) +\k_{{7}} ( \k_{{6}}+\k_{{5}} )\big) -
\k_{{4}}\k_{{6}}\k_{{7}}\k_{{9}} \big( \k_{{1}} ( \k_{{12}}+\k_{{11}} ) +\k_{{10}} ( \k_{{3}}+\k_{{2}} ) \big)   \geq 0.
  \end{align*}
  Dividing the expression by $\k_1\k_4\k_7\k_{10}$, the condition 
 can be rewritten as
\begin{align*} 
\k_{3}\k_{12}\big(k_{M2} + k_{M3}\big) - \k_{6}\k_{9}\big(k_{M1} + k_{M4}\big) \geq 0,
  \end{align*}
  where
  $$k_{M1}=\frac{\k_{2}+\k_{3}}{\k_{1}}, \qquad k_{M2}=\frac{\k_{5}+\k_6}{\k_{4}}, \qquad 
  k_{M3}=\frac{\k_{8}+ \k_{9}}{\k_{7}}, \qquad k_{M4} =\frac{\k_{11} +  \k_{12}}{\k_{10} }$$ 
are the   Michaelis-Menten constants of $K$ and $F$ for each site.
  Note that $\k_3,\k_6,\k_9,\k_{12}$ are the catalytic constants of phosphorylation of $A$, dephosphorylation of $A_p$, phosphorylation of $A_p$ and dephosphorylation of $A_{pp}$. These are denoted by $k_{c1}, k_{c2}, k_{c3}, k_{c4}$ in the main text by letting 
$$k_{c1}=\k_3, \qquad k_{c2}=\k_6, \qquad k_{c3}=\k_9, \qquad k_{c4}=k_{12}.$$
  
  By letting
  $$ b_2(\k)= \k_{3}\k_{12}\big(k_{M2} + k_{M3}\big) - \k_{6}\k_{9}\big(k_{M1} + k_{M4}\big),$$
  $\alpha(\k)\geq 0$ if and only if $b_2(\k)\geq 0$. Thus   we have proven the condition for unique equilibria given in the second row of Table 1 in the main text.

 Let us consider whether Corollary \ref{cor:main}(B) {(Corollary 2(B) in the main text)}  applies if $b_1(\k)<0$ and/or $\alpha(\k)<0$.
The exponent vectors of the monomials of $p(\widehat{x})$ are:
$$
\begin{array}{llllllll}
(3, 1, 1) &  (1, 3, 1) &  (2, 2, 1) &  (2, 2, 2) &  (2, 3, 0) &  (2, 2, 0) &  (1, 3, 0) &  (3, 1, 2) \\  (2, 3, 1)&  (3, 2, 1) &  (4, 0, 2)&   (4, 0, 1) &  (0, 4, 1)&  (1, 4, 0) &  (0, 4, 0)
\end{array}
$$
The vertices of the convex hull of the exponent vectors are
$$
\begin{array}{llllllllllll}
(2, 3, 0) &  (4, 0, 1) & (2, 2, 0) &  (0, 4, 0) &  (1, 4, 0) &  \mathbf{(3, 2, 1)} &   (4, 0, 2)&   (0, 4, 1) &  (2, 3, 1) &  (2, 2, 2).
\end{array}
$$
The vertex highlighted in bold corresponds to the monomial $x_1^3x_2^2x_3$, whose sign depends on $b_1(\k)$. By Proposition~\ref{prop:newton}, if $b_1(\k)<0$, then there exists
$\widehat{x}$ such that $p(\widehat{x})$ is negative. Corollary \ref{cor:main}(B)  {(Corollary 2(B) in the main text)}   gives that there is a 
stoichiometric compatibility class that  admits positive multiple equilibria. This proves the condition for multistationarity given in the second row of Table 1 in the main text.

The exponent vector of the monomial corresponding to the coefficient $\alpha(\k)$, $(2,2,1)$, is not a vertex of the Newton polytope. In this case it is uncertain whether the condition $\alpha(\k)<0$ is sufficient for multistationarity.

\subsection{Two-substrate enzyme catalysis}

{
  This section contains an additional example to illustrate the application of the algorithm  to a monostationary network  for which a parameterization is required to reach the conclusion.}
 
We consider a mechanism in which an enzyme $E$ binds two substrates, $S_1,S_2$, in an unordered manner in order  to catalyze the reversible  conversion to the product $P$.
A variation of this system was considered in \cite{craciun2006}.
The reactions of the system are:
\begin{align*} 
 E+ S_1 & \cee{<=>[\k_1][\k_2]}   ES_1 &  S_2+ES_1 &  \cee{<=>[\k_5][\k_6]}  ES_{1}S_2 &  ES_{1}S_2 & \cee{<=>[\k_7][\k_8]}   E+P \\
E+S_{2}  & \cee{<=>[\k_3][\k_4]}   ES_2 &   S_1+ES_2 &  \cee{<=>[\k_9][\k_{10}]}  ES_{1}S_2.
\end{align*}
We let 
\begin{align*}
 X_1 & =E, &  X_2  & =S_1,   &  X_3 & =ES_1, & X_4 & =S_2,  & X_5& =ES_2, & X_6& =ES_1S_2, & X_7 &=P.
\end{align*}
 The stoichiometric matrix $N$ of the network and a row reduced matrix $W$ whose rows from  a basis of $\im(N)^\perp$
are
{\small \begin{align*}
N & =\left(
\begin {array}{rrrrrrrrrr} -1&1&-1&1&0&0&0&0&1&-1
\\ -1&1&0&0&0&0&1&-1&0&0\\ 1&-1&0&0
&-1&1&0&0&0&0\\ 0&0&-1&1&-1&1&0&0&0&0
\\ 0&0&1&-1&0&0&1&-1&0&0\\ 0&0&0&0
&1&-1&-1&1&-1&1\\ 0&0&0&0&0&0&0&0&1&-1
  \end{array}\right) \\[10pt]
    W & = \begin{pmatrix}
1&0&1&0&1&1&0\\ 0&1&
1&0&0&1&1\\ 0&0&0&1&1&1&1
\end{pmatrix}.
  \end{align*}}
  The rank of $N$ is $s=4$.
     The matrix $W$ gives rise to the conservation relations
\begin{align*}
c_1 &= x_1+x_3+x_5+x_6, &  c_2  &   =x_2+ x_3+x_6+x_7 & c_3 &=  x_4+x_5+x_6+x_7,
\end{align*}
where $c_1,c_2,c_3,c_4$ correspond to the total amounts of kinase, substrate $S_1$ and substrate $S_2$, respectively.

With mass-action kinetics, the vector of reaction rates is
 $$ v(x)= (\k_{{1}}x_{{1}}x_{{2}},\k_{{2}}x_{{3}},\k_{{3}}x_{{1}}x_{{4}},\k_{{4}}x_{{5}},\k_{{5}}x_{{4}}x_{{3}},\k_{{6}}x_{{6}},\k_{{7}}x_{{6}},\k_{{8}}x_{{2}}x_{{5}},\k_{{9}}x_{{6}},\k_{{10}}x_{{1}}x_{{7}}).$$
The function $f(x)=Nv(x)$ is
 \begin{align*}
          f(x) &= ( -\k_{1}x_{1}x_{2}-\k_{3}x_{1}x_{4}-\k_{10}x_{1}x_{7}+\k_{2}x_{3}+\k_{4}x_{5}+\k_{9}x_{6},  \\ & \qquad 
   -\k_{1}x_{1}x_{2}-\k_{8}x_{2}x_{5}+\k_{2}x_{3}+\k_{7}x_{6},  \k_{1}x_{1}x_{2}-\k_{5}x_{4}x_{3}-\k_{2}x_{3}+\k_{6}x_{6}\\  &  \qquad 
  -\k_{3}x_{1}x_{4}-\k_{5}x_{4}x_{3}+\k_{4}x_{5}+\k_{6}x_{6}, 
 \k_{3}x_{1}x_{4}-\k_{8}x_{2}x_{5}-\k_{4}x_{5}+\k_{7}x_{6}, \\  &  \qquad 
  \k_{5}x_{4}x_{3}+\k_{8}x_{2}x_{5}+\k_{10}x_{1}x_{7}-\k_{6}x_{6}-\k_{7}x_{6}-\k_{9}x_{6},  
 -\k_{10}x_{1}x_{7}+\k_{9}x_{6}).
\end{align*}

We apply the algorithm to this network with the matrix $N$ and the vector $v(x)$.

\medskip\noindent
{\bf Step 1. } Mass-action kinetics fulfils assumption  \eqref{eq:assumption}. The function $f(x)$ and $W$ are given above and the matrix $W$  is row reduced.

\medskip\noindent
{\bf Step 2. }  
The network is conservative since   the concentration of every species is in the support of a conservation relation with positive coefficients. Therefore  the network is dissipative.

\medskip\noindent
{\bf Step 3.  } 
This network has only one intermediate $ES_1S_2$.
Its removal yields the reaction network
\begin{align*} 
 E+ S_1 & \cee{<=>}   ES_1  & 
 S_2+ES_1  & \cee{<=>}   E+P &  S_2+ES_1 &  \cee{<=>}  S_1+ES_2 \\
E+S_{2}  & \cee{<=>}   ES_2 &   S_1+ES_2 &  \cee{<=>}  E+P.
\end{align*}
The conservation relations of this new network are (with the notation above):
\begin{align*}
c_1 &= x_1+x_3+x_5, &  c_2  &   =x_2+ x_3+x_7 & c_3 &=  x_4+x_5+x_7.
\end{align*}
The minimal siphons of the network are
$$ \{ E, ES_1,ES_2\}, \{S_1,ES_1,P \}, \{S_2,ES_2,P \}. $$
These siphons contain the support of the conservation relations for $c_1,c_2,c_3$ respectively. Thus, by {Proposition~2 in the main text and Proposition~\ref{prop:red}}, the  original network does not 
have boundary equilibria in any stoichiometric compatibility class that intersects the positive orthant.

\medskip\noindent
{\bf Step 4.  } 
For our choice of $W$, we have $i_1=1,i_2=2,i_3=4$.
The function $\varphi_c(x)$ is thus
\begin{align*}
\varphi_c(x) & = \Big( 
 x_1+x_3+x_5+x_6-c_1,x_2+ x_3+x_6+x_7-c_2,   \k_{1}x_{1}x_{2}-\k_{5}x_{4}x_{3}-\k_{2}x_{3}+\k_{6}x_{6}, \\ & 
 \qquad  x_4+x_5+x_6+x_7 - c_3,
  \k_{3}x_{1}x_{4}-\k_{8}x_{2}x_{5}-\k_{4}x_{5}+\k_{7}x_{6},  \\ & \qquad 
\k_{5}x_{4}x_{3}+\k_{8}x_{2}x_{5}+\k_{10}x_{1}x_{7}-\k_{6}x_{6}-\k_{7}x_{6}-\k_{9}x_{6}, 
-\k_{10}x_{1}x_{7}+\k_{9}x_{6}
\Big).
\end{align*}
The Jacobian matrix $M(x)=J_{\varphi_c}(x)$ is
{\small $$  \left(
 \begin {array}{ccccccc} 1&0&1&0&1&1&0\\ 
 0&1& 1&0&0&1&1\\ 
 \k_{1}x_{2}&\k_{1}x_{1}&-\k_{5}x_{4}-\k_{2}&-\k_{5}x_{3}&0&\k_{6}&0\\ 
 0&0&0&1&1&1&1\\ 
 \k_{3}x_{4}&-\k_{8}x_{5}&0&\k_{3}x_{1}&-\k_{8}x_{2}-\k_{4}&\k_{7}&0\\ 
 \k_{10}x_{7}&\k_{8}x_{5}&\k_{5}x_{4}&\k_{5}x_{3}&\k_{8}x_{2}&-\k_{6}-\k_{7}-\k_{9}&\k_{10}x_{1}\\ 
 -\k_{10}x_{7}&0&0&0&0&\k_{9}&-\k_{10}x_{1}\end {array}
\right). $$}
 The determinant of $M(x)$ is a large polynomial. We omit it here. 

\medskip\noindent
{\bf Step 5.  }  The determinant of $M(x)$ has terms of sign $(-1)^{s+1}=-1$.  
 {We postpone the discussion of the conditions on the reaction rate constants for which all terms have sign $(-1)^s$ to Step 7.}
We proceed to the next step.

\medskip\noindent
{\bf Step 6.  }  This network is not a PTM system, but has a non-interacting set  with $s=4$ species: 
$$\{X_3, X_5, X_6,X_7 \} = \{ES_1,  ES_2, ES_1S_2,P\}.$$
 By solving the equilibrium equations $f_3=f_5=f_6=f_7=0$
in the variables $x_3,x_5,x_6,x_7$, we find the following positive parameterization of the set of equilibria in terms of $\widehat{x}=(x_1,x_2,x_4)$:
\begin{align*}
x_{3} & =\frac{x_{2}x_{1} \left( \k_{1}\k_{6}\k_{8}x_{2}+\k_{3}\k_{6}\k_{8}x_{4}+\k_{1}\k_{4}\k_{6}+\k_{1}\k_{4}\k_{7}
 \right) }{\k_{2}\k_{6}\k_{8}x_{2}+\k_{4}\k_{5}\k_{7}x_{4}+\k_{2}\k_{4}\k_{6}+\k_{2}\k_{4}\k_{7}}, \\
x_{5}& =\frac{x_{1}x_{4} \left( \k_{1}\k_{5}\k_{7}x_{2}+\k_{3}\k_{5}\k_{7}x_{4}+\k_{2}\k_{3}\k_{6}+\k_{2}\k_{3}\k_{7} \right) }{\k_{2}\k_{6}\k_{8}x_{2}+\k_{4}\k_{5}\k_{7}x_{4}+\k_{2}\k_{4}\k_{6}+\k_{2}\k_{4}\k_{7}}, \\
 x_{6}& =\frac{x_{2}x_{4} \left( \k_{1}\k_{5}\k_{8}x_{2}+\k_{3}\k_{5}\k_{8}x_{4}+\k_{1}\k_{4}\k_{5}+\k_{2}\k_{3}\k_{8} \right) x_{1}}{\k_{2}\k_{6}\k_{8}x_{2}+\k_{4}\k_{5}\k_{7}x_{4}+\k_{2}\k_{4}\k_{6}+\k_{2}\k_{4}\k_{7}},\\
x_{7}& =\frac{\k_{9}x_{2}x_{4} \left( \k_{1}\k_{5}\k_{8}x_{2}+\k_{3}\k_{5}\k_{8}x_{4}+\k_{1}\k_{4}\k_{5}+\k_{2}\k_{3}
\k_{8} \right) }{ \left( \k_{2}\k_{6}\k_{8}x_{2}+\k_{4}\k_{5}\k_{7}x_{4}+\k_{2}\k_{4}\k_{6}+\k_{2}\k_{4}\k_{7} \right) \k_{10}}.
\end{align*}
We substitute $x_3,x_5,x_6,x_7$   with their expressions in the parameterization in $\det(M(x))$ to  find $a(\widehat{x})$. 
The function $a(\widehat{x})$ is a large rational function. 

 \medskip\noindent
{\bf Step 7.  }  The numerator and denominator of $a(\widehat{x})$ are polynomials in $x$ and $\k$ with all coefficients positive. 
By Corollary \ref{cor:main}(A) {(Corollary 2(A) in the main text)}    using $s=4$, there is a unique positive equilibrium in each stoichiometric compatibility class that intersects the positive orthant.

%\bibliographystyle{unsrt}
%\bibliography{DegHom}

\end{document}